\journal{Journal}
\newcommand{\x}{x}
\newcommand{\p}{p}
\newcommand{\w}{w}
\renewcommand{\u}{u}
\newcommand{\q}{q}
\newcommand{\z}{z}
\newcommand{\zs}[1][]{{\boldsymbol{z}^{#1}}}
\newcommand{\y}{y}
\newcommand{\filty}{s}
\renewcommand{\t}{t}
\renewcommand{\k}{k}
\newcommand{\A}[1]{A_{#1}}
\newcommand{\B}[2]{B_{#1}^{#2}}
\newcommand{\C}[2]{C_{#1}^{#2}}
\newcommand{\D}[3]{D_{#1}^{#2#3}}
\newcommand{\BB}[1]{B_{#1}}
\newcommand{\CC}[1]{C_{#1}}
\newcommand{\DD}[1]{D_{#1}}
\newcommand{\AK}{A_\K}
\newcommand{\BK}{B_\K}
\newcommand{\CK}{C_\K}
\newcommand{\DK}{D_\K}
\newcommand{\G}{G}                            
\newcommand{\Grho}{{\G_\rho}}
\newcommand{\K}{K}                            
\newcommand{\Kx}{\kappa}                      
\newcommand{\Krho}{{\K_\rho}}
\newcommand{\GK}{{\Theta}}                   
\newcommand{\GKrho}{{\GK_\rho}}
\newcommand{\DG}{\Delta \star \G}
\newcommand{\DGK}{\Delta\star \G \star \K}
\newcommand{\filt}{\Psi}                      
\newcommand{\filtx}{\psi}                     
\newcommand{\all}{\Sigma}                     
\newcommand{\allrho}{{\Sigma_\rho}}
\newcommand{\allhat}{{\hat{\all}}}
\newcommand{\allhatrho}{{\hat{\all}_\rho}}
\newcommand{\allinv}{\Omega}                  
\newcommand{\allinvrho}{{\Omega_\rho}}
\newcommand{\genplant}{{\mathcal{G}}}                
\newcommand{\genplantrho}{{\genplant_\rho}}
\newcommand{\DGKrho}{\Delta_\rho \star \G_\rho \star \K_\rho }
\newcommand\Psyn{\mathcal{\P}}
\newcommand\Xsyn{\mathcal{\P}_{22}}
\newcommand\Ysyn{\mathcal{\P}_{11}}
\newcommand\Ksyn{\mathcal{K}}
\newcommand\Lsyn{\mathcal{L}}
\newcommand\Msyn{\mathcal{M}}
\newcommand\Nsyn{\mathcal{N}}
\newcommand\Asyn{\mathcal{A}_\rho}
\newcommand\Bsyn[1]{\mathcal{B}^{#1}_\rho}
\newcommand\Csyn[1]{\mathcal{C}^{#1}}
\newcommand\Dsyn[2]{\mathcal{D}^{{#1}{#2}}}
\newcommand\Usyn{\mathcal{U}}
\newcommand\Vsyn{\mathcal{V}}
\newcommand\Ysyntrafo{\mathcal{Y}}
\DeclareMathOperator{\diag}{diag}
\newcommand{\symb}[1]{\begin{pmatrix}\vphantom{#1}\star \end{pmatrix}^{\hspace{-0.15cm}\top}\hspace{-0.1cm} #1 }
\newcommand{\symbscalar}{(\star)^{\hspace{-0.05cm}\top}\hspace{-0.025cm} }
\newcommand{\symh}[1]{\begin{pmatrix}\vphantom{#1}\star \end{pmatrix}^{\!*}\! #1 }
\newcommand{\diagmat}[1]{{\arraycolsep=2pt\begin{pmatrix} #1\end{pmatrix}}}
\newcommand{\Z}{Z}
\newcommand{\Zdone}{\mathcal X_{1}}
\newcommand{\Zdtwo}{\mathcal X_{2}}
\newcommand{\ZX}{W}
\newcommand{\ZY}{Y}
\newcommand{\nx}{{n_\x}}
\let\greeknu\nu
\renewcommand{\nu}{{n_\u}}
\newcommand{\nw}{{n_\w}}
\newcommand{\nq}{{n_\q}}
\newcommand{\np}{{n_\p}}
\newcommand{\ny}{{n_\y}}
\newcommand{\nz}{{n_\z}}
\newcommand{\nK}{{n_\Kx}}
\newcommand{\nfilty}{{n_\filty}}
\newcommand{\nfilt}{{n_\filtx}}
\newcommand{\nfiltone}{{n_{\filtx_1}}}
\newcommand{\nfilttwohat}{{n_{\hat\filtx_2}}}
\newcommand{\nfilthat}{{n_{\hat\filtx}}}
\newcommand{\filtyone}{{\filty_1}}
\newcommand{\M}{M}
\newcommand{\X}{X}
\renewcommand{\P}{P}
\newcommand{\Pschur}{\mathscr{P}}
\newcommand{\T}{T}
\DeclareSymbolFont{newfont}{OML}{cmm}{m}{it}
\newcommand{\rhosq}{\rho^2}
\newcommand{\R}{\mathbb{R}}
\newcommand{\N}{\mathbb{N}}
\renewcommand{\S}{\mathbb{S}}
\newcommand{\Hinf}{\mathcal{H}_\infty}
\newcommand{\RHinf}{\mathcal{RH}_\infty}
\newcommand{\blkmat}[2]{\left(\begin{array}{@{}#1@{}} #2 \end{array}\right)}
\newcommand{\newblkdash}[1][2.25ex]{\\[0.02cm] \hdashline\rule{0pt}{#1}}
\newcommand{\newblk}[1][2.25ex]{\\[0.02cm] \hline\rule{0pt}{#1}}
\newcommand{\ssrep}[4]{\left[\begin{array}{c|c} #1 & #2 \newblk #3 & #4 \end{array}\right]}
\newcommand{\ssrepflex}[4]{\left[\begin{array}{#1|#2} #3 \newblk #4 \end{array}\right]}
\newcommand{\peak}{{\mathrm{peak}}}
\newcommand{\ETE}{\infty}
\newcommand{\PTP}{{\textnormal{p}{\ensuremath{\shortrightarrow }}\textnormal{p}}}
\newcommand{\ETP}{{2{\ensuremath{\shortrightarrow}}\textnormal{p}}}
\newcommand{\Deltaset}{\boldsymbol{\Delta}}
\newcommand{\MXset}{\mathbb{\M\X}}
\DeclareMathOperator{\dare}{dare}
\DeclareMathOperator*{\argmin}{arg\,min}
\newcommand{\ubar}[1]{\underaccent{\bar}{#1}}
\tikzstyle{block} = [draw, thick, node distance=0.5cm, minimum width=1cm, inner sep=6pt]
\tikzstyle{sum} = [draw, thick, circle, node distance=1cm, inner sep=3.5pt, path picture={\node at (path picture bounding box.center) [draw, anchor = center] {$+$};}]
\newcommand{\refeq}[2]{\overset{\makebox[0pt][c]{\small #1}}{#2}}
\theoremstyle{remark}
\newtheorem{example}{Example}
\definecolor{myblue}{rgb}{0.00000,0.44700,0.74100}%
\definecolor{myred}{rgb}{0.95000,0.200,0.200}%
\definecolor{mycolor3}{rgb}{0.92900,0.69400,0.12500}%
\definecolor{mygreen}{rgb}{0.1,0.7,0.15}%
\newcommand{\eins}{\mathds{1}}
\newcommand{\mytitle}{Multi-objective robust controller synthesis with integral quadratic constraints in discrete-time}
\renewcommand{\thefootnote}{\arabic{footnote}}
\begin{document}

\title{\mytitle}

\author[1]{Lukas Schwenkel}

\author[2]{Johannes Köhler}

\author[3]{Matthias A. Müller}

\author[4]{Carsten W. Scherer}

\author[1]{Frank Allgöwer}

\authormark{L. Schwenkel, J. Köhler, M. A. Müller, C. W. Scherer, F. Allgöwer}
\titlemark{\mytitle}

\address[1]{\orgdiv{University of Stuttgart}, \orgname{Institute for Systems Theory and Automatic Control}, \orgaddress{\country{Germany}}}

\address[2]{\orgdiv{ETH Zürich}, \orgname{Institute for Dynamical Systems and Control}, \orgaddress{\country{Switzerland}}}

\address[3]{\orgdiv{Leibniz University Hannover}, \orgname{Institute for Automatic Control}, \orgaddress{\country{Germany}}}

\address[4]{\orgdiv{University of Stuttgart}, \orgname{Department of Mathematics}, \orgaddress{\country{Germany}}}

\corres{Corresponding author Lukas Schwenkel \email{lukas.schwenkel@ist.uni-stuttgart.de}}

\abstract[Abstract]{This article presents a novel framework for the robust controller synthesis problem in discrete-time systems using dynamic Integral Quadratic Constraints (IQCs). 
We present an algorithm to minimize closed-loop performance measures such as the $\Hinf$-norm, the energy-to-peak gain, the peak-to-peak gain, or a multi-objective mix thereof.
While IQCs provide a powerful tool for modeling structured uncertainties and nonlinearities, existing synthesis methods are limited to the $\Hinf$-norm, continuous-time systems, or special system structures. 
By minimizing the energy-to-peak and peak-to-peak gain, the proposed synthesis can be utilized to bound the peak of the output, which is crucial in many applications requiring robust constraint satisfaction, input-to-state stability, reachability analysis, or other pointwise-in-time bounds.
Numerical examples demonstrate the robustness and performance of the controllers synthesized with the proposed algorithm.}

\keywords{robust control, uncertain systems, integral quadratic constraints, multi-objective control design}

\jnlcitation{\cname{%
\author{L. Schwenkel},
\author{J. Köhler},
\author{M. A. Müller},
\author{C. W. Scherer}, and
\author{F. Allgöwer}}.
\ctitle{\mytitle} 
\cjournal{\it arXiv preprint.}
}

\maketitle

\renewcommand\thefootnote{}
\footnotetext{\textbf{Abbreviations:} IQC, integral quadratic constraint; SISO, single-input-single-output; LMI, linear matrix inequality}

\renewcommand\thefootnote{\fnsymbol{footnote}}
\setcounter{footnote}{1}

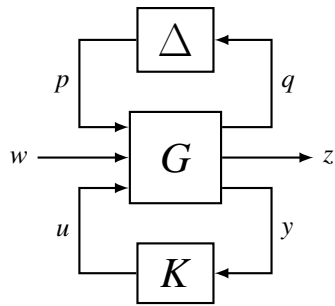
\begin{figure}[tb]\centering
    \begin{tikzpicture}
      \node [block] (Delta) {\Large$\Delta$};
      \draw[white] (Delta.west) + (-0.8,0) -- +(-1.2,0) node[left]{$\p$};
      \node [block, below=of Delta, minimum height=1.2cm, minimum width=1.2cm] (G) {\Large $\G$};
      \draw[latex-, thick] (G.west) -- +(-1.2,0)node [left] {$\w$};
      \node [block, below = of G] (K) {\Large $K$};
      \draw[-latex, thick] (G.east)+(0,-0.4)  --  +(0.65,-0.4) |-node [right,pos=0.25] {$\y$} (K);
      \draw[latex-, thick] (G.west)+(0,-0.4)  -- +(-0.65,-0.4)|- node [left,pos=0.25] {$\u$} (K);
      
      \draw[-latex, thick] (G.east) +(0,0) -- ($(G.east)+(1.2,0)$) node [right] {$\z$};
      \draw[latex-, thick] (Delta.east) -| node[right, pos=0.75] {$\q$} ($(G.east)+(0.65,0.4)$) -- ($(G.east)+(0,0.4)$);
      \draw[-latex, thick] (Delta.west) -| node[left, pos=0.75] {$\p$} ($(G.west)+(-0.65,0.4)$) -- ($(G.west)+(0,0.4)$);
    \end{tikzpicture}
  \caption{Interconnection of the system $\G$, the uncertainty $\Delta$, and the controller $\K$ with the performance channel from $\w$ to $\z$.}\label{fig:DGK}
  \end{figure}

\section{Introduction}
This article introduces a systematic procedure to design discrete-time robust controllers for uncertain systems of the form in Figure~\ref{fig:DGK}. 
Our goal is to minimize the worst-case $\mathcal H_\infty$-norm, energy-to-peak gain, or peak-to-peak gain of the closed loop.
For single-input-single-output (SISO) systems this includes $\mathcal H_2$-performance (equal to the energy-to-peak gain) and $\ell_1$-performance (equal to the peak-to-peak gain) opening a vast field of possible applications of our results.
Furthermore, we show how to optimize a mix of these performance criteria in a multi-objective setting.
We consider a large class of uncertainties by characterizing them using integral quadratic constraints (IQCs).
IQCs provide a unified approach to model a wide range of structured uncertainties and nonlinearities within a dynamical system~\cite{Megretski1997, Veenman2016}.

\emph{Contribution.} 
First, we derive a new robust analysis result to compute upper bounds on the worst-case $\mathcal H_\infty$-norm, the energy-to-peak, and the peak-to-peak gain based on linear matrix inequalities (LMIs).
Second, our main contribution is an algorithm that synthesizes discrete-time robust controllers which minimize these performance measures.
Due to the time-domain nature of the peak norm, we use dissipativity arguments as in~\cite{Hu2016,Scherer2022a,Scherer2018,Seiler2015} and compute factorizations in the state space as in~\cite{Veenman2014}.
In particular, we characterize uncertainty using finite horizon IQCs with terminal cost as introduced in~\cite{Scherer2018} which constitute a seamless integration of classic frequency-domain IQCs~\cite{Megretski1997} into the time domain.
The proposed algorithm can further be used to optimize a mix of performance criteria for different channels by using a common Lyapunov matrix (as known from the nominal multi-objective case~\cite{Scherer1997}) and a common IQC multiplier in the analysis and synthesis steps.
We show that the performance bounds provided in each iteration step of our algorithm are non-increasing by explicitly constructing feasible warm starts which can also be used to speed up the optimization.

\emph{Applications of discrete-time IQCs.} 
There has been an increasing interest in the past decade in the discrete-time IQC framework~\cite{Fetzer2017,Hu2017,Kao2012} with applications to the analysis~\cite{Fazlyab2018,Jakob2025,Lessard2016} and design~\cite{Lessard2020,Michalowsky2020,Scherer2021} of optimization algorithms, the analysis of neural network interconnections~\cite{Pauli2021,Yin2020a}, the synthesis of neural network controllers~\cite{Gu2022}, as well as in the context of model predictive control to construct stabilizing terminal conditions~\cite{Morato2023} or compute a reachable set~\cite{Schwenkel2020,Schwenkel2022a}.
The discrete-time IQC synthesis results~\cite{Gu2022,Lessard2020,Michalowsky2020,Morato2023,Scherer2021} exploit specific properties of the considered applications and hence are limited to a particular setting.
The general approach presented in this article allows for extensions of these works to more uncertainty classes, tighter IQC descriptions, and other performance criteria. 
One particular application of the proposed peak-to-peak synthesis procedure is to design the pre-stabilizing controller for robust model predictive control~\cite{Schwenkel2020,Schwenkel2022a} such that the volume of the reachable set is minimized.

\emph{Related work on IQC analysis.} 
A summary of results using IQCs to analyze stability, $\mathcal H_\infty$- and $\mathcal H_2$-performance, as well as general quadratic performance of continuous-time uncertain systems is provided in the tutorial paper~\cite{Veenman2016}.
In addition to these performance measures, the continuous-time toolbox IQClab~\cite{Veenman2021} includes also the energy-to-peak gain analysis.
However, results on analyzing the peak-to-peak gain are missing in continuous time.
In discrete-time, the $\Hinf$-performance has been analyzed in~\cite{Hu2017} using IQCs with positive-negative multipliers and exponential stability has been analyzed in~\cite{Lessard2016,Schwenkel2020} using hard IQCs.
We extend these results to the larger class of finite horizon IQCs with terminal cost (see~\cite{Scherer2022a,Scherer2018} for the benefits of IQCs with terminal cost).
Further, our analysis results improve our previous results on robust peak-to-peak analysis using IQCs from~\cite{Schwenkel2023b}, and we obtain significantly tighter bounds than the IQC analysis of the energy-to-peak gain presented in~\cite{Jaoude2020}.

\emph{Related work on IQC synthesis.} The design of robust controllers is a challenging problem.
In fact, unlike the robust analysis problem, no convex optimization procedure is known that solves the general robust synthesis problem.
Nevertheless, there are useful approaches (without guarantees to find the global optimum) like $\mu$-synthesis~\cite{Balas1991} in the structured singular value framework or continuous-time IQC synthesis~\cite{Veenman2016a,Veenman2014,Veenman2014a,Wang2016}.
These methods are based on an iteration between a convex analysis step to optimize the involved multipliers and a convex synthesis step to optimize the controller parameters.
The big advantage of using IQCs over the structured singular value is that IQCs can describe more types of uncertainty such as time-varying parameters or sector-/slope-restricted nonlinearities.
Furthermore,~\cite{Apkarian2006} uses non-smooth optimization techniques to design a continuous-time controller via IQCs directly in the frequency domain. 
However, all these works can only optimize the $\Hinf$-performance but are not applicable to peak-to-peak or energy-to-peak gain minimization.
Going beyond $\mathcal H_\infty$-performance offers much more flexibility for a targeted controller design as in many applications the larger concern is to bound the peak rather than the energy of the output (see~\cite{Dahleh1994} for a detailed discussion).

\emph{Outline.} Section~\ref{sec:setup} formalizes the problem setup, introducing the necessary mathematical framework and definitions. In Section~\ref{sec:ana}, we provide LMI conditions to verify robust exponential stability and to compute robust upper bounds on the $\mathcal H_\infty$-norm, the energy-to-peak gain, and the peak-to-peak gain. 
Section~\ref{sec:synthesis} shows how to convexify the synthesis of controllers minimizing these performance bounds for a fixed IQC multiplier by using a suitable factorization of the IQC.
These two steps allow us to present an iterative algorithm in Section~\ref{sec:algo} which is investigated in numerical examples in Section~\ref{sec:exmp}.

\emph{Notation.}
To indicate that $A$ is defined to be equal to $B$ we write $A\triangleq B$. 
We denote the open unit disc in the complex plane by $\mathbb D\triangleq\{\lambda \in \mathbb{C} \mid |\lambda|<1\}$, its boundary by $\partial \mathbb D\triangleq\{\lambda \in \mathbb{C} \mid |\lambda|=1\}$, and its closure by $\overline{ \mathbb D} \triangleq \mathbb D \cup \partial \mathbb D$.
For a matrix $A\in \R^{n\times n}$, we denote the set of eigenvalues by $\lambda(A) \triangleq \{\lambda \in \mathbb C \mid \det(\lambda I -A)=0\}$.
For $x\in\R^n$, denote the infinity norm by $\|x\|_\infty \triangleq \max_{i} |x_i|$ and the Euclidean norm by $\|x\|_2 \triangleq \sqrt{x^\top x}$.
The frequency domain variable of the z-transformation is denoted by $\zs\in\mathbb{C}$.
For matrices $(A,B,C,D)$, we denote the corresponding transfer function by $\ssrep{A}{B}{C}{D}(\zs)\triangleq C(\zs I-A)^{-1}B+D$. 
For a transfer function define $G^*(\zs) \triangleq G (\zs^{-1})^\top$.
The set of proper, stable, rational transfer functions is denoted by $\RHinf = \left\{\left.\ssrep{A}{B}{C}{D}\,\right|\,\lambda(A) \subseteq \mathbb D\right\}$.
The set of symmetric matrices $A=A^\top \in \R^{n\times n}$ is denoted by $\S^n$.
If $A \in\S^{n}$ is a positive (semi-)definite matrix, we write $A\succ 0$ ($A\succeq 0$).
If $A \in\S^{n}$ is a negative (semi-)definite matrix, we write $A\prec 0$ ($A\preceq 0$).
For matrices $A\in \R^{n\times m}$ and $P \in\R^{i\times j}$, we denote $\diag(A,P)\triangleq \begin{pmatrix}A & \\  & P\end{pmatrix}\triangleq \begin{pmatrix}A & 0 \\ 0 & P\end{pmatrix}\in\R^{(n+i)\times (m+j)}$ and if $\P\in \S^{n}$, we denote $\symbscalar{PA}\triangleq A^\top P A$.
Furthermore, we use $\star$ (and $\bullet$) to denote symmetric (and irrelevant) entries in block matrices, e.g., $\begin{pmatrix} I & A \\ \star & I\end{pmatrix}=\begin{pmatrix} I & A \\ A^\top & I
\end{pmatrix}$, whereas $\begin{pmatrix} I & A \\ \bullet & I\end{pmatrix}=\begin{pmatrix} I & A \\ B & I \end{pmatrix}$ for some matrix $B$ with suitable dimensions. 
The set of sequences $x:\N\to\R^n$ is denoted by $\ell_{2\mathrm{e}}^n$.
For any operators $\Delta:\ell_{2\mathrm{e}}^{m_1}\to\ell_{2\mathrm{e}}^{n_1}$, $K:\ell_{2\mathrm{e}}^{m_2}\to\ell_{2\mathrm{e}}^{n_2}$, $G=\begin{pmatrix} G_{11} & G_{12}\\ G_{21} & G_{22} \end{pmatrix}$ with $G_{ij}:\ell_{2\mathrm{e}}^{n_i}\to\ell_{2\mathrm{e}}^{m_j}$ for $i,j\in\{1,2\}$, we define the upper linear fractional transformation by $\Delta \star G = G_{22} + G_{21} \Delta (I-G_{11} \Delta)^{-1} G_{12}$ and the lower linear fractional transformation by $G\star K = G_{11} + G_{12} K (I-G_{22} K)^{-1} G_{21}$ assuming the inverses $(I-G_{22} K)^{-1}$, $(I-G_{11} \Delta)^{-1} $ exist.

\section{Setup and Preliminaries}\label{sec:setup}
In this section, we describe the problem setting, we formally define our control goals in terms of stability and the performance measures $\Hinf$-norm, energy-to-peak, and peak-to-peak gain in Subsection~\ref{sec:goals}, we introduce a so-called loop transformation in Subsection~\ref{sec:loop}, and we recap the definition of IQCs in Subsection~\ref{sec:IQCs}.
We consider the problem of designing a controller $\K$ for the interconnection $\DG$ of a linear system $\G$ and a (possibly nonlinear) uncertainty $\Delta$ as depicted in Figure~\ref{fig:DGK}.
The uncertain system $\DG$ is given by\\\noindent
\begin{minipage}{0.4\linewidth}
    \begin{align*}
        G = \ssrepflex{c}{ccc}{\A\G & \B\G\p & \B\G\w & \B\G\u}
        {\C\G\q & \D\G\q\p & \D\G\q\w & \D\G\q\u \\
        \C\G\z & \D\G\z\p & \D\G\z\w & \D\G\z\u \\
        \C\G\y & \D\G\y\p & \D\G\y\w & 0},
    \end{align*}
\end{minipage}\hfill
\begin{minipage}{0.5\linewidth}
\begin{subequations}\label{eq:sys}
  \begin{flalign}
    \x_{k+1} &= \A\G \x_\k + \B\G\p   \p_\k + \B\G\w \w_\k + \B\G\u \u_\k \\
    \q_\k &= \C\G\q \x_\k  + \D\G\q\p \p_\k + \D\G\q\w \w_\k + \D\G\q\u \u_\k\\
    \z_\k &= \C\G\z \x_\k  + \D\G\z\p \p_\k + \D\G\z\w \w_\k +  \D\G\z\u \u_\k &\\
    \y_\k &= \C\G\y \x_\k  + \D\G\y\p \p_\k + \D\G\y\w \w_\k\\
    \p_k &= (\Delta(\q))_k
  \end{flalign}
\end{subequations}
\end{minipage}\\[0.2cm]
where $\k\in\N$ denotes the time index. 
The dimensions of the signals are $\x\in \ell_{2\mathrm{e}}^\nx$, $\p\in\ell_{2\mathrm{e}}^\np$, $\q \in \ell_{2\mathrm{e}}^\nq$, $\w\in\ell_{2\mathrm{e}}^\nw$, $\z\in\ell_{2\mathrm{e}}^\nz$, $\u\in\ell_{2\mathrm{e}}^\nu$, and $\y\in\ell_{2\mathrm{e}}^\ny$. 
The system matrices have suitable dimensions. 
We assume that the uncertainty $\Delta:\ell_{2\mathrm{e}}^\nq\to\ell_{2\mathrm{e}}^\np$ is a causal operator and belongs to a known set $\Delta\in\Deltaset$.
Furthermore, we assume that the interconnection~\eqref{eq:sys} is well-posed, i.e., that for all $\w\in\ell_{2\mathrm{e}}^\nw$, $\u\in\ell_{2,\mathrm{e}}^\nu$ and $\Delta\in\Deltaset$ there is a unique solution of~\eqref{eq:sys} that causally depends on $\w$ and $\u$.
The channel $\p\to\q$ is called the uncertainty channel, $\w\to\z$ the performance channel, and $\u \to\y$ the control channel.
We close control channel with a controller of the form\\\noindent
\begin{minipage}{0.4\linewidth}
    \begin{align*}
        K = \ssrepflex{c}{c}{\A\K & \BB\K }
        {\C\K & \DD \K },
    \end{align*}
\end{minipage}\hfill
\begin{minipage}{0.5\linewidth}
    \begin{subequations}\label{eq:K}
      \begin{flalign}
        \Kx_{k+1} &= \AK \Kx_k + \BK \y_k &\\
        \u_k &= \CK \Kx_k + \DK \y_k.
      \end{flalign}
    \end{subequations}
\end{minipage}\\[0.2cm]
In the synthesis problem, we optimize the controller parameters $\A\K$, $\BB\K$, $\CC\K$, and $\DD\K$ such that $\DGK$ is stable and minimizes some desired performance criterion.

\subsection{Performance criteria and stability}\label{sec:goals}
The goal of the controller synthesis is robust stability and robust performance in the sense that the closed loop is stable for all $\Delta\in\Deltaset$ and that we minimize the worst-case gain from $\w$ to $\z$ in some performance measure. 
In particular, this article considers three common performance criteria: the $\mathcal H_\infty$-norm, the energy-to-peak gain, and the peak-to-peak gain. 
The energy of a signal $\w\in\ell_{2\mathrm{e}}^\nw$ is its $\ell_2$-norm $\|\w\|_{2} \triangleq \sqrt{\sum_{k=0}^{\infty} \|\w_k\|_2^2}$ and the space of all bounded energy signals is $\ell_{2}^n\triangleq \{w\in\ell_{2\mathrm{e}}^n \mid \|w\|_{2} < \infty \}$.
More generally, for $\rho \in (0,1]$ we define the  $\ell_{2,\rho}$-norm $\|w\|_{2,\rho} \triangleq \sqrt{\sum_{k=0}^{\infty} \rho^{-2k}\|\w_k\|_2^2}$ and the  $\ell_{2,\rho}$-space $\ell_{2,\rho}^n\triangleq \{\w\in\ell_{2\mathrm{e}}^n \mid \|\w\|_{2,\rho} < \infty \}$, which contains for $\rho=1$ the standard $\ell_2$-norm and $\ell_2$-space.
The peak of a signal $w\in\ell_{2\mathrm{e}}^{\nw}$ is defined by $\|w\|_\peak = \sup_{k\geq 0} \|\w_k\|_2$. 
Now we are prepared to define the three performance measures of interest.
\begin{definition}[Energy and peak induced norms]\label{def:gain}
    For an operator $H:\ell_{2\mathrm{e}}^{\nw}\to\ell_{2\mathrm{e}}^\nz$ we define the $\mathcal H_\infty$-norm, the energy-to-peak (\ETP) gain, and the peak-to-peak (\PTP) gain by
    \begin{align}\label{eq:p2p}
        \|H\|_\ETE = \sup_{w\in\ell_{2}^\nw\setminus\{0\} } \frac{\| H(w)\|_2}{\|w\|_2}, \qquad
        \|H\|_\ETP = \sup_{w\in\ell_{2}^\nw\setminus\{0\} } \frac{\| H(w)\|_\peak}{\|w\|_2}, \qquad
        \|H\|_\PTP = \sup_{w\in\ell_{\infty}^\nw\setminus\{0\} } \frac{\| H(w)\|_\peak}{\|w\|_\peak}.
    \end{align}
\end{definition}
The gains in~\eqref{eq:p2p} view the system $H=\DGK$ as an operator mapping input signals $\w$ to output signals $\z$ with zero initial condition.
The $\mathcal H_\infty$-norm characterizes the energy-to-energy gain or $\ell_2$-gain.
For SISO systems $H$ it is known~\cite{Dahleh1994} that $\|H\|_\ETP$ is the $\mathcal H_2$-norm and that $\|H\|_\PTP$ is the $\ell_1$-norm of the impulse response of $H$. 
\begin{remark}[(Peak- and $\ell_\infty$-norm)]
    The $\ell_\infty$-norm, defined by $\|\w\|_\infty = \sup_{k\geq 0} \|\w_k\|_{\infty}$, is for scalar signals $\w\in\ell_{2\mathrm{e}}$ equal to the peak-norm $\|w\|_\peak = \|w\|_\infty$.
    Furthermore, for general $\w\in\ell_{2\mathrm{e}}^\nw$, we have $\|w\|_\infty \leq \|w\|_\peak \leq \sqrt{\nw} \|w\|_\infty$ (see, e.g.,~\cite{Rieber2008}).
    Define the $\ell_{\infty}$-to-$\ell_{\infty}$ gain $\|H\|_{\infty\shortrightarrow \infty}\triangleq \sup_{w\in\ell_{\infty}^\nw\setminus\{0\} } \frac{\| Hw\|_\infty}{\|w\|_\infty}$ and the energy-to-$\ell_\infty$ gain $\|H\|_{2\shortrightarrow\infty}=\sup_{w\in\ell_{2}^\nw\setminus\{0\} } \frac{\| Hw\|_\infty}{\|w\|_2}$.
    Then, for every operator $H:\ell_{2\mathrm{e}}^{\nw}\to\ell_{2\mathrm{e}}^\nz$, we can provide the following relations 
    \begin{align*}
        \frac{1}{\sqrt{\nz}}\|H \|_\PTP \leq \|H\|_{\infty\shortrightarrow\infty} \leq \sqrt{\nw} \|H \|_\PTP\qquad \text{and} \qquad 
        \frac{1}{\sqrt{\nz}}\|H \|_\ETP \leq\  \|H\|_{2\shortrightarrow\infty}\ \leq  \|H \|_\ETP.
    \end{align*}
    Due to this close connection, the peak-to-peak gain can also be used to minimize the $\ell_{\infty}$-to-$\ell_{\infty}$ gain, i.e., $\ell_1$ performance.
    Moreover, the peak norm is typically better suited for reachability analysis, as the obtained ellipsoidal reachable sets often have a smaller volume than the rectangular ones obtained via the $\ell_\infty$ norm.
\end{remark}

For general initial conditions with possibly $\x_0\neq 0$ and $\Kx_0\neq 0$, we are further interested in stability in the following sense. 
\begin{definition}[$\ell_{2,\rho}$-stability]\label{defn:stab}
    Let $\rho\in(0,1]$. The interconnection $\DGK$, described by~\eqref{eq:sys},~\eqref{eq:K}, is called $\ell_{2,\rho}$-stable, if there exists $c_0>0$ such that for all $w\in\ell_{2,\rho}^{\nw}$ and all $x_0\in\R^{\nx}, \kappa_0\in\R^\nK$ it holds that
    \begin{align}\label{eq:exp_stab}
        \left\|\begin{pmatrix} \x\\\Kx
        \end{pmatrix}\right\|_{2,\rho}^2 \leq c_0  \left(\left\|\begin{pmatrix} \x_0\\\Kx_0
        \end{pmatrix}\right\|^2_2  + \left\|\w\right\|^2_{2,\rho}\right).
    \end{align}
\end{definition}
\begin{remark}[(Exponential and input-to-state stability)]
    We remark that \eqref{eq:exp_stab} for all $w\in\ell_{2,\rho}^\nw$ implies
    \begin{align}\label{eq:exp_stab2}
    \sum_{k=0}^{t} \rho^{-2k}\left\|\begin{pmatrix} \x_k\\\Kx_k
		\end{pmatrix}\right\|^2_2 \leq c_0 \left(\left\|\begin{pmatrix} \x_0\\\Kx_0
		\end{pmatrix}\right\|^2_2 + \sum_{k=0}^{t-1} \rho^{-2k}\left\|\w_k\right\|^2_2\right)
    \end{align}
    for all $w\in\ell_{2\mathrm{e}}^\nw$ and all $t \in \N$, as any truncated signal is in $\ell_{2,\rho}$.
    Clearly, the opposite is also true as starting from~\eqref{eq:exp_stab2}, taking any $w\in\ell_{2,\rho}^\nw$ and letting $t\to \infty$ shows~\eqref{eq:exp_stab}. 
    Using this characterization, it can be shown by induction on $t$ that $\ell_{2,\rho_1}$-stability implies $\ell_{2,\rho_2}$-stability for all $1\geq\rho_2 \geq \rho_1$.
    Further, if $\rho<1$, then~\eqref{eq:exp_stab2} implies input-to-state stability 
    \begin{align}\label{eq:iss}
    \left\|\begin{pmatrix} \x_k\\\Kx_k
		\end{pmatrix}\right\|^2_2 \leq c_0 \rho^{2k} \left\|\begin{pmatrix} \x_0\\\Kx_0
		\end{pmatrix}\right\|^2_2 + c_0 \frac{\rho^2}{1-\rho^2} \left\|\w\right\|^2_\peak 
    \end{align}
    and for $\w=0$ we obtain $\rho$-exponential stability.
    Even for $\rho=1$, the $\ell_2$-stability~\eqref{eq:exp_stab2} is equivalent to input-to-state stability if $\Delta$ is static, i.e., if $\begin{pmatrix}x\\\Kx\end{pmatrix}$ is the full state of the uncertain system $\DGK$, as is shown in~\cite{Sontag1998}.
    For $\rho<1$ and static $\Delta$, we can interpret~\eqref{eq:exp_stab2} as exponential input-to-state stability (compare~\cite{Drummond2024}).
\end{remark}

\subsection{Loop transformation}\label{sec:loop}
To analyze the peak-to-peak gain and $\ell_{2,\rho}$-stability with contraction rate $\rho\in(0,1]$, we consider a transformed version of the control loop $\DGK$. 
This transformation is known from the robust exponential stability analysis using IQCs~\cite{Hu2016}.
For $a\in (0,\infty)$, define the linear transformation
\begin{align*}
  T_a : \ell_{2\mathrm{e}}^n \to \ell_{2\mathrm{e}}^n,\quad  (s_k)_{k\in\N}\, \mapsto\, (a^k s_k)_{k\in\N},
\end{align*}
which satisfies $T_a T_{a^{-1}} = I $. 
Now, define $\bar \x \triangleq T_{\rho^{-1}} \x$ and analogously define $\bar \p, \bar \w, \bar \u,\bar \q,\bar \z, \bar \y, \bar \Kx$.
These signals satisfy $ \|\bar \x\|_2 = \|x\|_{2,\rho}$ and thus $\bar \x \in \ell_{2} \Leftrightarrow x\in\ell_{2,\rho}$.
Moreover, define $\G_\rho \triangleq T_{\rho^{-1}} \circ \G \circ T_\rho$ and analogously define $\K_\rho$ as well as $\Delta_\rho$.
It is straightforward to verify that $\G_\rho$ has the state space representation
  \begin{align}\label{eq:sys_rho}
    \begin{pmatrix}\bar \x_{k+1} \\ \bar \q_\k \\ \bar\z_\k \\ \bar \y_\k \end{pmatrix} &= \begin{pmatrix}\rho^{-1}\A\G & \rho^{-1}\B\G\p & \rho^{-1}\B\G\w &\rho^{-1}\B\G\u \\[0.5mm]
      \C\G\q & \D\G\q\p &\D\G\q\w & \D\G\q\u \\[0.5mm]
      \C\G\z & \D\G\z\p &\D\G\z\w & \D\G\z\u \\[0.5mm]
      \C\G\y & \D\G\y\p &\D\G\y\w & 0 \end{pmatrix}\begin{pmatrix}\bar \x_\k \\ \bar \p_\k \\ \bar \w_\k \\ \bar \u_\k \end{pmatrix}
  \end{align}
and $\K_\rho$ has the state space representation
\begin{align}
  \begin{pmatrix} \bar \Kx_{\k+1} \\ \bar \u_\k \end{pmatrix} = \begin{pmatrix} \rho^{-1} \A\K & \rho^{-1} \BB\K \\ \CC\K & \DD\K \end{pmatrix}\begin{pmatrix} \bar \Kx_{\k} \\ \bar \y_\k \end{pmatrix}.
\end{align}
Therefore, we define $\A{\G_\rho} \triangleq \rho^{-1} \A\G$, $\A{\K_\rho} \triangleq \rho^{-1} \A\K$, $\B{\G_\rho}i \triangleq \rho^{-1} \B\G i$ for $i\in\{\p,\w,\u\}$, and $\BB{\K_\rho}\triangleq\rho^{-1}\BB\K$.
Furthermore, we have $\bar p = \Delta_\rho(\bar q)$ such that we obtain the transformed interconnection in Figure~\ref{fig:DGK_loop}.
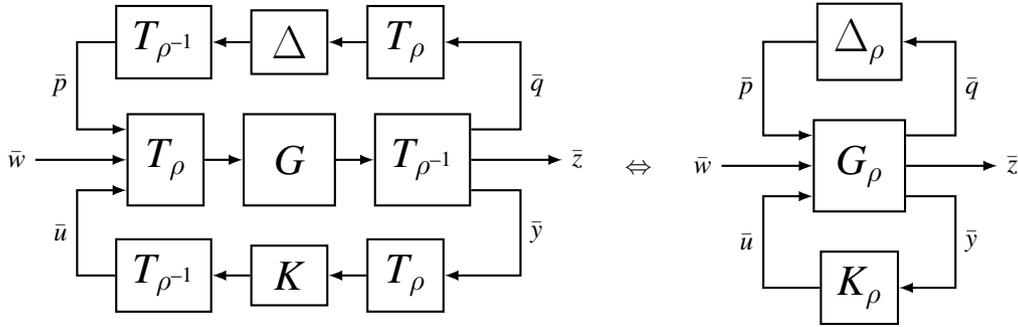
\begin{figure}[tb]\centering
	\begin{tikzpicture}[baseline=(G.base)]
            \node [block] (Delta) {\Large$\Delta$};
            \node [block, right=of Delta] (DeltaIn) {\Large$T_\rho$};
            \node [block, left=of Delta] (DeltaOut) {\Large$T_{\rho^{-1}}$};
            \draw[-latex,thick] (DeltaIn) -- (Delta);
            \draw[-latex,thick] (Delta) -- (DeltaOut);
            \node [block, below=of Delta, minimum height=1.2cm, minimum width=1.2cm] (G) {\Large $\G$};
            \node [block, right=of G, minimum height=1.2cm] (Gout) {\Large$T_{\rho^{-1}}$};
            \node [block, left=of G, minimum height=1.2cm] (Gin) {\Large$T_\rho$};
            \draw[-latex,thick] (Gin) -- (G);
            \draw[-latex,thick] (G) -- (Gout);
            \node [block, below = of G] (K) {\Large $K$};
            \node [block, right=of K] (Kin) {\Large$T_\rho$};  
            \node [block, left=of K] (Kout) {\Large$T_{\rho^{-1}}$};
            \draw[-latex,thick] (Kin) -- (K);
            \draw[-latex,thick] (K) -- (Kout);
            \draw[-latex, thick] (Gout.east)+(0,-0.4)  --  +(0.65,-0.4) |-node [right,pos=0.25] {$\bar\y$} (Kin);
            \draw[latex-, thick] (Gin.west) -- +(-1.2,0)node [left] {$\bar \w$};
            \draw[latex-, thick] (Gin.west)+(0,-0.4)  -- +(-0.65,-0.4)|- node [left,pos=0.25] { $\bar\u$} (Kout);
            \draw[-latex, thick] (Gout.east) +(0,0) -- ($(Gout.east)+(1.2,0)$) node [right] {$\bar\z$};
            \draw[latex-, thick] (DeltaIn.east) -| node[right, pos=0.75] {$\bar\q$} ($(Gout.east)+(0.65,0.4)$) -- ($(Gout.east)+(0,0.4)$);
            \draw[-latex, thick] (DeltaOut.west) -| node[left, pos=0.75] {$\bar\p$} ($(Gin.west)+(-0.65,0.4)$) -- ($(Gin.west)+(0,0.4)$);
        \end{tikzpicture}
        \quad $\Leftrightarrow$ \quad
        \begin{tikzpicture}[baseline=(G.base)]
            \node [block] (Delta) {\Large$\Delta_\rho$};
            \node [block, below=of Delta, minimum height=1.2cm, minimum width=1.2cm] (G) {\Large $\G_\rho$};
		\draw[latex-, thick] (G.west) -- +(-1.2,0)node [left] {$\bar \w$};
		\node [block, below = of G] (K) {\Large $K_\rho$};
		\draw[-latex, thick] (G.east)+(0,-0.4)  --  +(0.65,-0.4) |-node [right,pos=0.25] {$\bar\y$} (K);
		\draw[latex-, thick] (G.west)+(0,-0.4)  -- +(-0.65,-0.4)|- node [left,pos=0.25] { $\bar\u$} (K);
		\draw[-latex, thick] (G.east) +(0,0) -- ($(G.east)+(1.2,0)$) node [right] {$\bar\z$};
		\draw[latex-, thick] (Delta.east) -| node[right, pos=0.75] {$\bar\q$} ($(G.east)+(0.65,0.4)$) -- ($(G.east)+(0,0.4)$);
		\draw[-latex, thick] (Delta.west) -| node[left, pos=0.75] {$\bar\p$} ($(G.west)+(-0.65,0.4)$) -- ($(G.west)+(0,0.4)$);
	\end{tikzpicture}
	\caption{Interconnection after the loop transformation.}\label{fig:DGK_loop}
\end{figure}
Note that the special case $\rho=1$ corresponds to the original interconnection from Figure~\ref{fig:DGK} as $\G_1=\G$, $\K_1=\K$, and $\Delta_1=\Delta$.
We denote the feedback interconnection between $\G_\rho$ and $\K_\rho$ by $\GKrho\triangleq \G_\rho \star \K_\rho $ and compute a state space representation of $\GKrho$ as
\begin{align}
  &\blkmat{c:c}{\A\GKrho & \B\GKrho i \newblkdash \C\GK j & \D\GK j i} \triangleq \blkmat{cc:c}{\A\Grho + \B\Grho\u \DK \C\G\y & \B{\Grho}\u\CK & \B{\Grho} i + \B{\Grho}\u \DK \D\G\y i \\ 
    \BB{\Krho} \C\G\y & \A{\Krho} & \BB{\Krho} \D\G\y i \newblkdash 
    \C\G j +\D\G j\u \DK \C\G\y & \D\G j\u \CK & \D\G j i+\D\G j \u \DK \D\G\y i}\label{eq:allsys_end}
\end{align}
for $j\in\{\q,\z\}$ and $i\in\{\p,\w\}$.
Note that $\C\GK j$ and $ \D\GK j i$ are independent of $\rho$ thus we do not need the index $\rho$ here.
Furthermore, for $\rho=1$, we define $\GK \triangleq \GK_1$.
The utility of the loop transformation is that $\ell_{2,\rho}$-stability of $\Delta\star\GK$ is equivalent to $\ell_{2}$-stability of $\Delta_\rho\star\GKrho$.

\subsection{Discrete-time Integral Quadratic Constraints}\label{sec:IQCs}
To obtain robust analysis and synthesis methods, we need to characterize the uncertainty set $\Deltaset$ as tight and detailed as possible while still being numerically tractable.
A powerful framework that serves this purpose are IQCs, which enable the incorporation of knowledge about the structure and nature of the uncertainty.
For example, norm-bounded dynamic uncertainties, (time-varying) uncertain parameters, uncertain time delays, or sector- or slope-restricted static nonlinearities can be described using IQCs. 
The original IQC article~\cite{Megretski1997} and the tutorial~\cite{Veenman2016} contain large libraries of IQCs for different uncertainties in continuous-time.
These continuous-time results carry over to discrete time (see, e.g.,~\cite{Fetzer2017,Hu2017,Kao2012}).
Although IQCs were originally proposed as a frequency domain framework, the use of finite horizon IQCs with terminal cost provides a seamless link to the time-domain~\cite{Scherer2022a,Scherer2018}.
This is particularly useful, since the time-domain nature of the peak norm requires time-domain and state-space arguments.
\begin{definition}[Finite horizon IQC with terminal cost]\label{def:iqc}
  Let a mutliplier $\M\in\S^{\nfilty}$, a terminal cost matrix $\X\in\S^{\nfilt}$ and a filter $\filt \in \RHinf^{\nfilty \times (\nq+\np)}$ be given. 
  Moreover, let $\filt = \ssrep{A_\filt }{B_\filt}{C_\filt}{D_\filt}$ be a minimal state space representation with state $\filtx\in\ell_{2\mathrm{e}}^{\nfilt}$, i.e.,
  \begin{subequations}\label{eq:filt_dyn}
    \begin{align}
      \filtx_{\k+1} &= \A\filt \filtx_\k + \B\filt\q \bar \q_\k  + \B\filt\p \bar \p_\k\\
      \filty_\k &= C_\filt \filtx_\k + D_\filt^\q \bar \q_\k + D_\filt^\p \bar \p_\k
    \end{align}
  \end{subequations}
  with $\filtx_0=0$, $B_\filt\triangleq\begin{pmatrix}\B\filt\q & \B\filt\p
  \end{pmatrix}$, and $D_\filt\triangleq\begin{pmatrix}D_\filt^\q & D_\filt^\p
  \end{pmatrix}$.
  A causal operator $\Delta$ is said to satisfy the \textit{finite horizon IQC with terminal cost} defined by $(\A\filt,\BB\filt,\CC\filt,\DD\filt,\X,\M)$ if for all $\bar \q \in \ell_{2\mathrm{e}}^{\nq}$, $ \bar\p = \Delta(\bar\q)$, and $t\in\N$ it holds that
  \begin{align}\label{eq:iqc}
    \sum_{\k=0}^{t-1} \filty_\k^\top \M \filty_\k + \filtx_{t}^\top \X \filtx_{t} \geq 0.
  \end{align}
\end{definition}
Note that both $\M$ and $\X$ are typically indefinite.
As the terminal cost $\filtx_t^\top \X\filtx_t$ depends on the particular realization of $\filt$, we define an IQC by $(\A\filt,\BB\filt,\CC\filt,\DD\filt,\X,\M)$ rather than by $(\filt,\X,\M)$ or by $(\filt,\M)$ as is usually done if $\X=0$~\cite{Hu2017,Lessard2016}.

\begin{remark}[(Hard and soft IQCs)]
    The classic IQC literature (e.g.,~\cite{Megretski1997,Seiler2015}) distinguishes hard and soft IQCs, where a hard IQC is a finite horizon IQC with terminal cost $X=0$ and a soft IQC is an infinite horizon IQC, i.e., $X=0$ and~\eqref{eq:iqc} has to hold only for $t=\infty$ instead of for all $t\geq 1$.
    Any hard IQC implies a soft IQC whereas the opposite is generally false.
    In~\cite{Scherer2018} it has been shown under standard assumptions that for each soft IQC there exists a symmetric matrix $X$ such that the finite horizon IQC with terminal cost $X$ holds.
    Therefore, finite-horizon IQCs with a terminal cost encompass both hard and soft IQCs in a finite-horizon setting. 
\end{remark}

\begin{remark}[($\rho$-hard IQCs)]
    Note that $\rho$-hard IQCs as introduced by~\cite{Lessard2016} can be incorporated in Definition~\ref{def:iqc} by using the loop transformation, in particular, $\Delta$ satisfies a $\rho$-hard IQC defined by $(\filt,\M)$ if and only if $\Delta_\rho$ satisfies the finite horizon IQC defined by $(\rho^{-1}\A\filt,\rho^{-1}\BB\filt,\CC\filt,\DD\filt,0,\M)$ (compare \cite{Boczar2015}).	
\end{remark}

\section{Robustness analysis}\label{sec:ana}
In this section we derive LMI conditions to verify $\ell_{2,\rho}$-stability as well as bounds on the $\Hinf$-norm, the energy-to-peak gain, and the peak-to-peak gain based on the assumption that $\Delta_\rho$ satisfies a finite horizon IQC with a terminal cost.
For $\rho \in (0,1]$ we utilize the loop transformation and augment the system $\GKrho$ with the filter $\filt$ to obtain the following augmented system $\allrho$ with state $\chi_k=\begin{pmatrix}\filtx_k \\ \bar\x_k \\ \bar\Kx_k \end{pmatrix}\in\R^{n_\chi}$, $n_\chi=\nfilt+\nx+\nK$, state space representation
\begin{subequations}\label{eq:aug_sys}
    \begin{align}
        \chi_{\k+1} &= \A\allrho \chi_\k + \B\allrho\p \bar\p_\k + \B\allrho\w \bar \w_\k \\
        \filty_\k &= \C\all\filty \chi_\k + \D\all{\filty}\p \bar\p_\k + \D\all{\filty}\w \bar\w_\k\\
        \bar\z_\k &= \C\all\z \chi_\k  + \D\all\z\p \bar\p_\k + \D\all\z\w \bar\w_\k,
    \end{align}
\end{subequations}
initial condition $\chi_0=0$, and the matrices 
\begin{align}\label{eq:all}
  &\blkmat{c:c:c}{\A\allrho & \B\allrho\p & \B\allrho\w \newblkdash \C\all\filty & \D\all{\filty}\p & \D\all{\filty}\w \newblkdash \C\all\z & \D\all\z\p & \D\all\z\w }\triangleq\blkmat{cc:c:c}{
    \A\filt & \B\filt\q\C\GK\q & \B\filt\p+ \B\filt\q\D\GK\q\p &\B\filt\q\D\GK\q\w\\ 
    0 & \A\GKrho & \B\GKrho\p & \B\GKrho\w \newblkdash
    \C\filt{} & \D\filt{}\q\C\GK\q & \D\filt{}\p+ \D\filt{}\q\D\GK\q\p & \D\filt{}\q\D\GK\q\w \newblkdash
    0 & \C\GK\z & \D\GK\z\p & \D\GK\z\w }.
\end{align}
Note that $\C\all j$ and $\D\all ji$ for $j\in\{\filty,\z\}$ and $i\in \{\p,\w\}$ are independent of $\rho$ and thus we dropped the index for clarity and ease of notation.
Whenever we do not need the loop transformation, we work with $\rho=1$ and define $\all\triangleq \all_1$.
Further, let us define $\ubar \X \triangleq \diagmat{\X \\ & 0}\in\R^{n_\chi \times n_\chi}$, which yields $\chi_t^\top \ubar \X \chi_t = \filtx_t ^\top \X \filtx_t$.
The following LMIs imply $\ell_{2,\rho}$-stability.

\begin{theorem}[Stability]\label{thm:stability}
    Assume for some $\rho\in(0,1]$ and for all $\Delta\in\Deltaset$ that $\Delta_\rho$ satisfies the finite horizon IQC with terminal cost defined by $(A_\filt,B_\filt,C_\filt,D_\filt,\X,\M)$. 
    Further, assume that there exists $\P\in\S^{n_\chi}$ and $\mu\geq 0$ such that
    \begin{align}\label{eq:stab_LMI1}
        \symb {\diagmat{-\P\\&\P\\&&\M\\&&&- \mu I}\hspace{-0.1cm}
        \begin{pmatrix}
            I &  0 & 0\\
            \A\allrho & \B\allrho\p & \B\allrho\w \\[0.5mm]
            \C\all\filty & \D\all{\filty}\p & \D\all{\filty}\w \\
            0& 0&I
        \end{pmatrix}}&\prec 0\\
        \label{eq:stab_LMI2}
        P-  \ubar\X&\succ 0
    \end{align}
    hold. Then $\DGK$ is $\ell_{2,\rho}$-stable for all $\Delta \in \Deltaset$.
\end{theorem}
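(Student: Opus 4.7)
My plan is a standard dissipation argument tailored to the finite-horizon IQC with terminal cost. First, I exploit the loop transformation of Subsection~\ref{sec:loop}: since $\ell_{2,\rho}$-stability of $\Delta\star\GK$ is equivalent to $\ell_{2}$-stability of $\Delta_\rho\star\GKrho$, and the filter $\filt$ is driven by the transformed signals $\bar\q,\bar\p$ to form $\allrho$, it suffices to bound $\sum_{k=0}^{t-1}\|(\bar\x_k,\bar\Kx_k)\|_2^2$ by a multiple of $\|(\x_0,\Kx_0)\|_2^2+\sum_{k=0}^{t-1}\|\bar\w_k\|_2^2$ along the transformed trajectory, and then translate this back via $\|\bar s\|_2=\|s\|_{2,\rho}$ for any signal $s$.

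Next, I read~\eqref{eq:stab_LMI1} as a dissipation inequality for the augmented state $\chi_k$. Left- and right-multiplying~\eqref{eq:stab_LMI1} by $(\chi_k^\top,\bar\p_k^\top,\bar\w_k^\top)$ and its transpose, and substituting the dynamics~\eqref{eq:aug_sys}, yields for some $\epsilon>0$
\[
\chi_{k+1}^\top \P \chi_{k+1}-\chi_k^\top \P \chi_k + \filty_k^\top \M \filty_k - \mu\|\bar\w_k\|_2^2 \;\leq\; -\epsilon\|\chi_k\|_2^2.
\]
Telescoping from $k=0$ to $t-1$ and invoking the IQC~\eqref{eq:iqc} satisfied by $\Delta_\rho$, which rewrites as $\sum_{k=0}^{t-1}\filty_k^\top\M\filty_k \geq -\filtx_t^\top\X\filtx_t = -\chi_t^\top\ubar\X\chi_t$, gives
\[
\chi_t^\top(\P-\ubar\X)\chi_t + \epsilon\sum_{k=0}^{t-1}\|\chi_k\|_2^2 \;\leq\; \chi_0^\top\P\chi_0 + \mu\sum_{k=0}^{t-1}\|\bar\w_k\|_2^2.
\]

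By~\eqref{eq:stab_LMI2} the matrix $\P-\ubar\X$ is positive definite, hence both terms on the left are nonnegative. Using $\filtx_0=0$, $\bar\x_0=\x_0$ and $\bar\Kx_0=\Kx_0$, the right-hand side is bounded by $\lambda_{\max}(\P)\|(\x_0,\Kx_0)\|_2^2 + \mu\sum_{k=0}^{t-1}\|\bar\w_k\|_2^2$. Dropping the $\filtx_k$-block inside $\|\chi_k\|_2^2$ on the left yields the desired bound on $\sum_{k=0}^{t-1}\|(\bar\x_k,\bar\Kx_k)\|_2^2$, while the missing terminal contribution $\|(\bar\x_t,\bar\Kx_t)\|_2^2$ is absorbed using the positive definiteness of $\P-\ubar\X$. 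Translating back with $\|\bar s\|_2=\|s\|_{2,\rho}$ and letting $t\to\infty$ (which is admissible since $\bar\w\in\ell_2$ whenever $\w\in\ell_{2,\rho}$) then delivers~\eqref{eq:exp_stab} with some $c_0>0$.

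The delicate point is the interplay between the IQC terminal cost $\filtx_t^\top\X\filtx_t$ and the storage function $\chi_t^\top\P\chi_t$: this is precisely what~\eqref{eq:stab_LMI2} is designed to handle, by forcing $\P$ to dominate $\ubar\X=\diag(\X,0)$ so that after subtracting the indefinite IQC terminal term the residual $\chi_t^\top(\P-\ubar\X)\chi_t$ remains a nonnegative storage-like quantity that can be discarded. Care is also needed to keep track of which objects carry the loop-transformation scaling (the state- and input-matrices of $\allrho$) and which do not (the filter output matrices and the IQC certificate $(\X,\M)$), so that the argument really closes around an IQC assumed for $\Delta_\rho$ rather than for $\Delta$ itself.
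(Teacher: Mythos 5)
Your proposal is correct and follows essentially the same dissipation argument as the paper: multiply the LMI by the trajectory, telescope, invoke the finite-horizon IQC with terminal cost, and use \eqref{eq:stab_LMI2} to make the residual storage term nonnegative. The only (immaterial) difference is that you discard the nonnegative terminal term $\chi_t^\top(\P-\ubar\X)\chi_t$ outright and let $t\to\infty$, whereas the paper retains a terminal contribution via a Schur-complement bound $\chi_t^\top(\P-\ubar\X)\chi_t \geq (\bar\x_t,\bar\Kx_t)^\top\Pschur(\bar\x_t,\bar\Kx_t)$; both close the argument.
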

\begin{proof} 
    Multiply~\eqref{eq:stab_LMI1} from the left by $\begin{pmatrix}\chi_\k^\top & \bar\p_\k^\top & \bar\w_\k^\top\end{pmatrix}$ and from the right by its transpose.
    Then, using~\eqref{eq:aug_sys}, the left-hand side of~\eqref{eq:stab_LMI1} becomes
    \begin{align}\label{eq:ana_diss1}
        \delta_1(k)\triangleq\chi_{\k+1}^\top \P\chi_{\k+1}-\chi_\k^\top \P\chi_\k+\filty_\k^\top\M\filty_\k - \mu \|\bar \w_\k\|^2_2.
    \end{align}
    Since~\eqref{eq:stab_LMI1} holds strict there exists an $\varepsilon>0$ such that $ \delta_1(k)\leq -\varepsilon \|\chi_k\|^2_2 \leq -\varepsilon \left\|\begin{pmatrix}\bar \x_k \\\bar \Kx_k \end{pmatrix}\right\|^2_2 $.
    Next, we sum up $\delta_1(k)$ and use a telescopic sum argument
    \begin{align}\label{eq:tel_sum}
        \sum_{\k=0}^{\t-1}\big(\chi_{\k+1}^\top \P\chi_{\k+1}-\chi_\k^\top \P\chi_\k\big)\! = \chi_\t^\top \P\chi_\t - \chi_0^\top \P\chi_0
    \end{align} 
    to obtain
    \begin{align*}
        -\sum_{k=0}^{t-1} \varepsilon \left\|
        \begin{pmatrix}
            \bar \x_k \\\bar \Kx_k
        \end{pmatrix}\right\|^2_2\geq \sum_{\k=0}^{\t-1} \delta_1(\k) 
        &= \sum_{k=0}^{t-1} \left(\filty_k^\top \M\filty_k - \mu \|\bar \w_k\|^2_2 \right) +\chi_\t ^\top P \chi_\t - \chi_0^\top P \chi_0.
    \end{align*}
    Now we can use the IQC~\eqref{eq:iqc} to obtain
    \begin{align*}
        -\sum_{k=0}^{t-1} \varepsilon \left\|
        \begin{pmatrix}
            \bar \x_k \\\bar \Kx_k
        \end{pmatrix}\right\|^2_2 \geq \chi_\t ^\top \left(P-\begin{pmatrix} \X & 0 \\ 0 & 0 \end{pmatrix}\right) \chi_\t- \chi_0^\top P \chi_0 - \sum_{k=0}^{t-1} \mu \|\bar \w_k\|^2_2.
    \end{align*}
    Since $\filtx_0=0$ (compare Def.~\ref{def:iqc}), we know $\chi_0^\top P\chi_0 = \begin{pmatrix}
  	\bar \x_0 \\ \bar \Kx_0
  \end{pmatrix}^{\!\!\top}\! P_{22} \begin{pmatrix}
  \bar \x_0 \\ \bar\Kx_0 \end{pmatrix}$, where $P = \begin{pmatrix}
  P_{11} & P_{12} \\ P_{21} & P_{22}
  \end{pmatrix}$ with $P_{11} \in \S^{\nfilt}$.
  Further, due to~\eqref{eq:stab_LMI2}, we know that $P_{22}\succ 0$,  $\Pschur\triangleq P_{22} - P_{21}(P_{11}-\X)^{-1} P_{12}\succ 0$, and $P- \begin{pmatrix}
  \X & 0 \\ 0 & \Pschur
  \end{pmatrix}=\begin{pmatrix}
  \P_{11} -\X & \P_{12} \\ \P_{21} & P_{21}(P_{11}-\X)^{-1} P_{12}
  \end{pmatrix}\succeq 0$ as the Schur complement reveals. Hence 
  \begin{align}\label{eq:mid_stab_proof}
  	\sum_{k=0}^{t-1} \varepsilon \left\|\begin{pmatrix}
  		\bar \x_k \\\bar \Kx_k
  	\end{pmatrix}\right\|^2_2 + \begin{pmatrix}
  		\bar\x_t \\ \bar\Kx_t
  	\end{pmatrix}^{\!\!\top}\! \Pschur \begin{pmatrix}
  		\bar\x_t \\ \bar\Kx_t
  	\end{pmatrix} \leq \begin{pmatrix}
  	\bar \x_0 \\\bar \Kx_0
  \end{pmatrix}^{\!\!\top}\! P_{22} \begin{pmatrix}
  \bar \x_0 \\\bar \Kx_0
  \end{pmatrix} +\sum_{k=0}^{t-1} \mu \|\bar \w_k\|^2_2.
\end{align}
Define $c_0 \triangleq \frac{1}{\min\{\varepsilon,\min \lambda(\Pschur)\}} \max\{\mu, \max \lambda(P_{22})\}$, then
\begin{align*}
	\sum_{k=0}^{t} \left\|\begin{pmatrix}
		\bar \x_k \\\bar \Kx_k
	\end{pmatrix}\right\|^2 \leq c_0 \left(\left\|\begin{pmatrix}
		\bar \x_0 \\\bar \Kx_0
	\end{pmatrix} \right\|^2+\sum_{k=0}^{t-1} \|\bar \w_k\|^2\right).
\end{align*}
Finally, using the definition of $\bar x_k=\rho^{-k}x_k$, $\bar \Kx_k = \rho^{-k} \Kx_k$, $\bar\w_k=\rho^{-k} \w_k$, and using the fact that this inequality holds for all $\t\in\N$ proves $\ell_{2,\rho}$-stability~\eqref{eq:exp_stab}.
\end{proof}

The following theorems extend the stability analysis to a performance analysis by providing LMIs that verify bounds on the $\mathcal H_\infty$-norm, as well as the energy- and peak-to-peak gain.
\begin{theorem}[$\mathcal H_\infty$-norm]\label{thm:e2e}
    Let $\rho\in(0,1]$. 
    Assume for all $\Delta\in\Deltaset$ that $\Delta_\rho$ satisfies the finite horizon IQC with terminal cost defined by $(A_\filt,B_\filt,C_\filt,D_\filt,\X,\M)$. 
    Further, assume that there exist $\P\in\S^{n_\chi}$ and $\gamma\geq 0$ such that~\eqref{eq:stab_LMI2} and
    \begin{align}\label{eq:ETE_LMI}
        \symb {\diagmat{-\P\\&\P\\&&\M\\&&&\frac 1 \gamma  I\\&&&&- \gamma I}\hspace{-0.1cm}
        \begin{pmatrix}
            I &  0 & 0\\
            \A\allrho & \B\allrho\p & \B\allrho\w \\[0.5mm]
            \C\all\filty & \D\all{\filty}\p & \D\all{\filty}\w \\
            \C\all\z & \D\all{\z}\p & \D\all{\z}\w \\
            0& 0&I
        \end{pmatrix}}&\prec 0
    \end{align}
    hold. Then $\DGK$ is $\ell_{2,\rho}$-stable for all $\Delta \in \Deltaset$ and satisfies $\|\DGKrho\|_{\ETE} \leq \gamma $.
\end{theorem}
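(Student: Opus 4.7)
The plan is to imitate the dissipativity argument used for Theorem~\ref{thm:stability}, but enlarge the dissipation inequality so that it also encodes the performance channel $\bar w \to \bar z$. First I would observe that LMI~\eqref{eq:ETE_LMI} contains LMI~\eqref{eq:stab_LMI1} with $\mu=\gamma$ as a special case: since the middle block $\tfrac{1}{\gamma} I \succeq 0$ contributes a positive-semidefinite term $\tfrac{1}{\gamma} (\star)^\top(\C{\all}{\z}\ \D{\all}{\z\p}\ \D{\all}{\z\w})$ to the congruence transformation, dropping that block preserves the strict inequality. Hence $\ell_{2,\rho}$-stability follows directly from Theorem~\ref{thm:stability}.

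For the norm bound, I would multiply~\eqref{eq:ETE_LMI} from the left by $(\chi_k^\top,\bar p_k^\top,\bar w_k^\top)$ and from the right by its transpose, and exploit~\eqref{eq:aug_sys} to obtain the dissipation inequality
\begin{align*}
    \delta_2(k) \triangleq \chi_{k+1}^\top P \chi_{k+1} - \chi_k^\top P \chi_k + \filty_k^\top M \filty_k + \tfrac{1}{\gamma}\|\bar z_k\|_2^2 - \gamma \|\bar w_k\|_2^2 < 0.
\end{align*}
Summing $\delta_2(k)$ from $k=0$ to $t-1$, applying the telescopic identity as in~\eqref{eq:tel_sum}, using $\chi_0=0$ (since we measure the input-output gain from zero initial condition and $\filtx_0=0$ by Definition~\ref{def:iqc}), and invoking the IQC~\eqref{eq:iqc}, I would arrive at
\begin{align*}
    \chi_t^\top (P - \ubar X) \chi_t + \tfrac{1}{\gamma}\sum_{k=0}^{t-1}\|\bar z_k\|_2^2 \leq \gamma \sum_{k=0}^{t-1} \|\bar w_k\|_2^2.
\end{align*}
Condition~\eqref{eq:stab_LMI2} gives $P-\ubar X \succ 0$, so the first term is non-negative and can be dropped. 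Letting $t\to\infty$ and taking square roots yields $\|\bar z\|_2 \leq \gamma \|\bar w\|_2$ for every $\bar w \in \ell_2^{\nw}$, which by Definition~\ref{def:gain} is exactly $\|\DGKrho\|_\ETE \leq \gamma$.

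There is no serious obstacle here; the argument is a direct extension of the proof of Theorem~\ref{thm:stability} with one extra summand. The only point that needs mild care is ensuring that the IQC~\eqref{eq:iqc} is applied with the correct sign (it provides a lower bound on the filter output sum, which combines with $\chi_t^\top (-\ubar X)\chi_t$ from the telescopic sum to give $\chi_t^\top(P-\ubar X)\chi_t \geq 0$), and invoking the zero initial condition inherent in the definition of the induced norm to discard the $\chi_0^\top P \chi_0$ term. The stability claim and the performance claim are then decoupled: the former comes from comparison with Theorem~\ref{thm:stability}, and the latter from the dissipation-plus-IQC chain above.
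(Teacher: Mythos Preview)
Your proposal is correct and follows essentially the same approach as the paper: reduce $\ell_{2,\rho}$-stability to Theorem~\ref{thm:stability} by observing that~\eqref{eq:ETE_LMI} implies~\eqref{eq:stab_LMI1} with $\mu=\gamma$, and then obtain the norm bound from the dissipation inequality combined with the IQC and $\chi_0=0$. The only cosmetic difference is that the paper invokes the Schur-complement construction $\Pschur$ from the proof of Theorem~\ref{thm:stability} to absorb the terminal term, whereas you use $P-\ubar X\succ 0$ directly on $\chi_t$; since $\chi_0=0$ here, the two routes are equivalent and yours is marginally more direct.
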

\begin{proof}
    Since~\eqref{eq:ETE_LMI} implies~\eqref{eq:stab_LMI1} with $\mu=\gamma$, we conclude using Theorem~\ref{thm:stability} that $\DGK$ is $\ell_2$-stable.
    Analogous to the proof of~\eqref{eq:stab_LMI1} to~\eqref{eq:mid_stab_proof} in Theorem~\ref{thm:stability}, inequality~\eqref{eq:ETE_LMI} implies with $\x_0=0$ and $\Kx_0=0$ that
    \begin{align*}
        0 \leq \sum_{k=0}^{t-1} \varepsilon \left\|\begin{pmatrix}
             \bar \x_k \\ \bar \Kx_k
        \end{pmatrix}\right\|^2_2 
        + \begin{pmatrix}
            \bar \x_t \\ \bar \Kx_t
        \end{pmatrix}^{\!\!\top}\! \Pschur \begin{pmatrix}
            \bar \x_t \\ \bar \Kx_t
        \end{pmatrix} 
        \leq \sum_{k=0}^{t-1} \left(\gamma\| \bar \w_k\|^2_2-\frac{1}{\gamma}\| \bar \z_k\|_2^2\right)
    \end{align*}
    holds.
    As this bound holds for all $\t\in\N$, we conclude $\|\bar \z\|_2\leq \gamma \|\bar \w\|_2$.
    Hence, $\|\DGKrho\|_\ETE \leq \gamma$ for all $\Delta\in\Deltaset$ by definition~\eqref{eq:p2p} and since $\bar \z = (\DGKrho) \bar w$.
\end{proof}
\begin{theorem}[Energy- and peak-to-peak gain]\label{thm:p2p}
  Assume for some $\rho\in(0,1]$ and for all $\Delta\in\Deltaset$ that $\Delta_\rho$ satisfies the finite horizon IQCs with terminal cost defined by $(A_\filt,B_\filt,C_\filt,D_\filt,\X_1,\M_1)$ and $(A_\filt,B_\filt,C_\filt,D_\filt,\X_2,\M_2)$. Further, assume that there exist $\P\in\S^{n_\chi}$, $\gamma\geq \mu\geq 0$ such that~\eqref{eq:stab_LMI1},~\eqref{eq:stab_LMI2} hold with $\M=\M_1+\M_2$, $\X=\X_1+\X_2$, and
  \begin{align}
     \label{eq:ana_LMI2}
    \symb{{\arraycolsep=1pt\begin{pmatrix}\ubar\X_1-\P\\
        &\ubar \X_2 \\
        &&\M_2\\
        &&&\frac {\alpha}{\gamma} I\\
        &&&&-\alpha(\gamma-\beta) I\end{pmatrix}}\hspace{-0.1cm}
    \begin{pmatrix}I&0&0\\[0.3mm]
      \A\allrho & \B\allrho\p & \B\allrho\w \\[0.8mm]
      \C\all\filty&\D\all\filty\p&\D\all\filty\w\\[0.8mm]
      \C\all\z & \D\all\z\p & \D\all\z\w \\[0.5mm]
      0&0&I\end{pmatrix}}  &\prec 0
  \end{align}
  where $\ubar \X_i \triangleq \diagmat{\X_i \\ & 0}$ for $i\in\{1,2\}$, as well as
  \begin{enumerate}[(i)]
      \item $\rho \in (0,1)$, $\beta = \mu$, and $\alpha=\frac{\rho^2}{1-\rho^2}$, then $\DGK$ is $\ell_{2,\rho}$-stable and $\|\DGK\|_\PTP \leq \gamma $;
      \item $\rho=1$, $\beta=0$, $\alpha=1$, and $\mu=\gamma$, then $\DGK$ is $\ell_{2}$-stable and $\|\DGK\|_{\ETP} \leq \gamma $.
  \end{enumerate}
\end{theorem}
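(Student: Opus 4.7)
The plan is to extract a pointwise (non-summed) inequality from~\eqref{eq:ana_LMI2}, combine it with the summed dissipation inequality from~\eqref{eq:stab_LMI1}, and distribute the two IQCs $(X_1,M_1)$ and $(X_2,M_2)$ so that all state- and filter-dependent terms cancel. Stability of $\DGK$ in both cases follows immediately from Theorem~\ref{thm:stability} applied to the combined IQC $(X_1+X_2, M_1+M_2)$, since~\eqref{eq:stab_LMI1} and~\eqref{eq:stab_LMI2} are assumed.

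Multiplying~\eqref{eq:ana_LMI2} from the left by $\begin{pmatrix}\chi_k^\top & \bar p_k^\top & \bar w_k^\top\end{pmatrix}$ and from the right by its transpose, then using~\eqref{eq:aug_sys}, yields the pointwise inequality
\begin{align*}
\frac{\alpha}{\gamma}\|\bar z_k\|_2^2 + \chi_{k+1}^\top \ubar\X_2 \chi_{k+1} + \filty_k^\top M_2 \filty_k \leq \chi_k^\top (P-\ubar\X_1) \chi_k + \alpha(\gamma-\beta) \|\bar w_k\|_2^2.
\end{align*}
Evaluating at $k=t$ and invoking the IQC for $(X_2,M_2)$ at horizon $t+1$, namely $\chi_{t+1}^\top \ubar\X_2 \chi_{t+1} + \sum_{k=0}^{t} \filty_k^\top M_2 \filty_k \geq 0$, allows us to drop the two terminal terms on the left at the expense of $\sum_{k=0}^{t-1}\filty_k^\top M_2 \filty_k$ on the right. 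The remaining state term $\chi_t^\top(P-\ubar\X_1)\chi_t$ is then bounded by summing~\eqref{eq:stab_LMI1} from $k=0$ to $t-1$ with $\chi_0=0$ (exactly as in the proof of Theorem~\ref{thm:stability}) and applying the IQC for $(X_1,M_1)$; together these give
\begin{align*}
\chi_t^\top(P-\ubar\X_1)\chi_t + \sum_{k=0}^{t-1}\filty_k^\top M_2\filty_k \leq \mu \sum_{k=0}^{t-1}\|\bar w_k\|_2^2.
\end{align*}
Combining these yields $\|\bar z_t\|_2^2 \leq \gamma(\gamma-\beta)\|\bar w_t\|_2^2 + \frac{\gamma\mu}{\alpha}\sum_{k=0}^{t-1}\|\bar w_k\|_2^2$.

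The two cases follow by specializing and back-transforming. For case (ii), $\rho=1$ gives $\bar z=z$, $\bar w=w$, and the bound simplifies to $\|z_t\|_2^2 \leq \gamma^2\sum_{k=0}^{t}\|w_k\|_2^2 \leq \gamma^2\|w\|_2^2$, so $\|\DGK\|_\ETP \leq \gamma$. For case (i), substituting $z_t=\rho^t \bar z_t$, $w_k=\rho^k \bar w_k$, using $\|w_k\|_2\leq\|w\|_\peak$, and multiplying by $\rho^{2t}$ produces the geometric sum $\rho^{2t}\sum_{k=0}^{t-1}\rho^{-2k}=(1-\rho^{2t})\rho^2/(1-\rho^2)$; the choice $\alpha=\rho^2/(1-\rho^2)$ is calibrated precisely so that $\frac{\gamma\mu}{\alpha}\cdot\rho^{2t}\sum\rho^{-2k}\leq\gamma\mu$, whence $\|z_t\|_2^2\leq\gamma(\gamma-\mu)\|w\|_\peak^2+\gamma\mu\|w\|_\peak^2=\gamma^2\|w\|_\peak^2$, giving $\|\DGK\|_\PTP\leq\gamma$.

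The main obstacle is the correct bookkeeping when splitting the two IQCs: $(X_2,M_2)$ must be applied at the shifted horizon $t+1$ to absorb the $\ubar\X_2$-terminal-cost that appears in the per-step LMI, while $(X_1,M_1)$ is paired with the summed stability LMI to eliminate the $\ubar\X_1$-term. The precise calibration $\alpha=\rho^2/(1-\rho^2)$ that makes the geometric growth of $\|\bar w_k\|_2^2$ cancel against the $\rho^{2t}$ factor from the back-transformation is the other key algebraic observation that produces the clean $\gamma^2$ bound.
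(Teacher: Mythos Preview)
Your proposal is correct and follows essentially the same approach as the paper: both extract a pointwise inequality from~\eqref{eq:ana_LMI2} at time $t$, combine it with the summed dissipation from~\eqref{eq:stab_LMI1}, and apply the IQC $(X_1,M_1)$ at horizon $t$ together with the IQC $(X_2,M_2)$ at horizon $t+1$ to arrive at the inequality $\frac{\alpha}{\gamma}\|\bar z_t\|_2^2 \leq \alpha(\gamma-\beta)\|\bar w_t\|_2^2 + \mu\sum_{k=0}^{t-1}\|\bar w_k\|_2^2$, after which the two cases are handled identically. The only cosmetic difference is that the paper writes the combination as $\delta_2(t)+\sum_{k=0}^{t-1}\delta_1(k)\leq 0$ and applies both IQCs simultaneously at the end, whereas you interleave the IQC applications with the two LMI consequences; algebraically the bookkeeping is the same.
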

\begin{proof}
    As we assumed~\eqref{eq:stab_LMI1} and~\eqref{eq:stab_LMI2} we can follow the proof of Theorem~\ref{thm:stability} to deduce that $\DGK$ is $\ell_{2,\rho}$-stable for all $\Delta \in \Deltaset$.
    Furthermore, as shown in~\eqref{eq:ana_diss1} in the proof of Theorem~\ref{thm:stability}, inequality~\eqref{eq:stab_LMI1} implies $\delta_1(k)\leq 0$.
    Similarly, we multiply~\eqref{eq:ana_LMI2} from the left by $\begin{pmatrix}\chi_\k^\top & \bar\p_\k^\top & \bar\w_\k^\top\end{pmatrix}$ and from the right by its transpose, and using~\eqref{eq:aug_sys} we obtain
  \begin{align} 
    \delta_2(k)&\triangleq-\chi_\k^\top \P\chi_\k+ \chi_\k^\top \ubar \X_1\chi_\k+\chi_{\k+1}^\top \ubar \X_2 \chi_{\k+1}+\filty_\k^\top\M_2\filty_\k +\frac {\alpha}{\gamma}\|\bar \z_\k\|^2_2  -\alpha(\gamma-\beta) \|\bar \w_\k\|^2_2\leq 0. \label{eq:ana_diss2}
  \end{align}
  Combining both inequalities $\delta_1(k)\leq 0$ and $\delta_2(k)\leq 0$ as follows
  \begin{align}\label{eq:comb_LMI1_2}
  \delta_2(\t) +\sum_{\k=0}^{\t-1} \delta_1(\k) \leq 0
  \end{align}
  and using the telescoping sum argument~\eqref{eq:tel_sum} with $\M=\M_1+\M_2$, and $\chi_0=0$ ($\filtx_0=0$ by Def.~\ref{def:iqc}, $x_0=0$, $\kappa_0=0$ by Def.~\ref{def:gain})
  leads to 
  \begin{align*}\nonumber
    0&\geq \frac{\alpha}{\gamma} \|\bar \z_\t\|^2_2-\alpha(\gamma-\beta) \|\bar \w_\t\|^2_2- \mu \sum_{\k=0}^{\t-1} \|\bar \w_\k\|^2_2  +\sum_{\k=0}^{\t-1} \filty_\k^\top \M_1\filty_\k+ \chi_{\t}^\top \ubar\X_1 \chi_{\t}  +\sum_{\k=0}^\t \filty_\k^\top \M_2\filty_\k+ \chi_{\t+1}^\top \ubar\X_2 \chi_{\t+1}.
  \end{align*}
  Note that $\chi_{k}^\top \ubar\X_i \chi_{k}=\filtx_{k}^\top \X_i \filtx_{k}$ such that we can use the IQC~\eqref{eq:iqc} for $\M_1,\X_1$, and horizon length $\t$ as well as for $\M_2,\X_2$, and horizon length $\t+1$ to obtain
  \begin{align}\label{eq:p2p_e2p_same}
    0&\geq \frac{\alpha}{\gamma} \|\bar \z_\t\|^2_2-\alpha(\gamma-\beta) \|\bar \w_\t\|^2_2- \mu \sum_{\k=0}^{\t-1} \|\bar \w_\k\|^2_2. 
  \end{align}
  Now, let us first consider case \emph{(i)}: We have $\beta=\mu$ and $\|\bar \z_t\|^2_2 = \rho^{-2t} \|\z_t\|^2_2$, $\|\bar \w_t\|^2_2 = \rho^{-2t} \|\w_t\|^2_2$. Since $\rho \in (0,1)$ and $\alpha =\frac{\rho^2}{1-\rho^2}$, we have the geometric sum $\sum_{\k=0}^{\t-1}\rho^{-2\k}=\rho^{-2(t-1)}\sum_{\k=0}^{\t-1}\rho^{2\k} \leq \frac{\rho^{-2(t-1)}}{1-\rhosq}= \alpha \rho^{-2t}$ and thus
  \begin{align}\label{eq:wpeak}
    \sum_{\k=0}^{\t-1} \|\bar \w_\k\|^2_2=\sum_{\k=0}^{\t-1}\rho^{-2\k} \|\w_\k\|^2_2 \leq  \|\w\|^2_\peak\sum_{\k=0}^{\t-1}\rho^{-2\k} \leq \alpha \rho^{-2t}\|\w\|^2_\peak.
  \end{align}
  Hence, it follows 
  \begin{align*}
      0\geq \alpha \rho^{-2t} \left(\frac 1 \gamma  \|\z_\t\|^2_2 -(\gamma-\mu) \|\w_\t\|^2_2- \mu \|\w\|^2_\peak\right).
  \end{align*} 
  Finally, dividing this inequality by $\alpha \rho^{-2t}>0$ and using $\gamma - \mu\geq 0$, we obtain
  \begin{align*}
    \frac 1 \gamma\|\z_\t\|^2 &\leq  (\gamma-\mu) \|\w_\t\|^2+ \mu \|\w\|^2_\peak\leq (\gamma-\mu) \|\w\|^2_\peak+\mu\|\w\|^2_\peak = \gamma \|\w\|^2_\peak .
  \end{align*} 
  As the above reasoning holds for all $\t\in\N$, we deduce $\|\z\|^2_\peak \leq  \gamma^2\|\w\|^2_\peak $, i.e., $\|\DGK\|_\PTP \leq \gamma$ for all $\Delta\in\Deltaset$.
  Next, we consider case \emph{(ii)}: 
  Due to $\rho=1$, we have $\bar \z = \z$ and $\bar \w = \w$. 
  We plug this, $\mu=\gamma$, $\alpha =1$, and $\beta=0$ into~\eqref{eq:p2p_e2p_same} and obtain
    \begin{align*}
        \frac 1 \gamma \|\z_t\|^2 \leq \gamma \sum_{k=0}^{t} \|\w_t\|^2 \leq \gamma \|w\|_2^2.
    \end{align*}
  As the above reasoning holds for all $\t\in\N$, we deduce $\|\z\|_\peak \leq  \gamma\|\w\|_2$ i.e., $\|\DGK\|_{\ETP} \leq \gamma$ for all $\Delta\in\Deltaset$.
\end{proof}
\begin{remark}[(Nominal analysis)]
    In the nominal case, i.e.,  $\np=\nq=\nfilty=\nfilt = 0$, Theorems~\ref{thm:e2e} and~\ref{thm:p2p} recover existing LMIs for the $\mathcal H_\infty$, energy-, and peak-to-peak gain analysis as a special case~\cite[Proposition~3.12,~3.15,~3.16]{Scherer2005}. 
\end{remark}

\section{Synthesis}\label{sec:synthesis}
In this section, we move from analysis to synthesis and derive a design procedure for controllers $\K$ that robustly stabilize $\DGK$ and minimize a desired performance criterion. 
The difficulty in the synthesis is that the matrix inequalities~\eqref{eq:stab_LMI1},~\eqref{eq:stab_LMI2},~\eqref{eq:ana_LMI2} we used for analysis are no longer linear when we take the controller $\K$ as a decision variable. 
In the continuous-time $\mathcal H_\infty$-synthesis via IQCs~\cite{Veenman2016a,Veenman2014,Veenman2014a,Wang2016} the controller synthesis can be made convex under certain assumptions on $\filt^*\M\filt$ and by fixing the multiplier $\M$.
In Subsection~\ref{sec:trafo}, we show that also the discrete-time synthesis problem can be transformed to a convex problem for all our performance measures if $\M$ is fixed.
Then, one can iterate between a synthesis step for a fixed $\M$ to optimize over $\K$ and an analysis step for a fixed $\K$ to optimize over $\M$.
The key idea to convexify the synthesis is to invert the channel $\p\to\filty$ of $\filt$, however, $\filt$ is typically neither invertible nor square. 
Hence, we introduce a suitable factorization of the IQC multiplier $\filt^*\M\filt=\hat \filt^*\hat \M\hat \filt$ with $\hat\filt$ invertible in Subsection~\ref{sec:fact}.

\subsection{On the factorization of $\filt ^*\M\filt$}\label{sec:fact}
To construct a suitable factorization of $\filt ^*\M\filt$, we make the following assumption.
\begin{assumption}[Multiplier class]\label{ass:fact}
  Let $\filt = \begin{pmatrix} \filt_1 & \filt_2 \end{pmatrix}$ with $\filt_1 \in \RHinf^{\nfilty\times \nq}$ and $\filt_2\in\RHinf^{\nfilty \times \np}$. The filters $\filt_1$, $\filt_2$ and the multiplier $\M$ satisfy the following frequency domain inequalities on $ \partial \mathbb{D}$
  \begin{align}\label{eq:fact_ass1}
    \filt_1^* \M \filt_1 &\succ 0 \\ \label{eq:fact_ass2}
    \filt_2^* \M \filt_2 - \filt_2^* \M \filt_1 \left(\filt_1^*\M \filt_1\right)^{-1}\filt_1^* \M \filt_2&\prec 0.
  \end{align} 
\end{assumption}
Assumption~\ref{ass:fact} is satisfied by a large class of IQCs, which, in particular, contains the class of strict positive-negative multipliers as used in~\cite{Hu2017,Seiler2015,Wang2016} (a strict positive-negative multiplier satisfies~\eqref{eq:fact_ass1} and $\filt^*_2 \M\filt_2 \prec 0$).
Under Assumption~\ref{ass:fact}, we can construct a new $\hat \M$ and $\hat \filt$ with desirable properties.
\begin{theorem}[Factorization]\label{thm:fact}
  Let Assumption~\ref{ass:fact} hold and let $\hat \M = \diag(I_\nq,-I_\np)$. Then there exist $\hat \filt\in\RHinf^{(\nq+\np)\times(\nq+\np)}$ with $\hat\filt^*\hat \M \hat \filt = \filt^*\M\filt$ and the structure
    \begin{align}\label{eq:abcd_hat}
        \hat\filt =\begin{pmatrix} \hat \filt_{11} & \hat \filt_{12}\\0&\hat\filt_{22} \end{pmatrix}= \ssrep{\hat A}{\hat B}{\hat C}{\hat D},\ \ \blkmat{c:c}{\hat A & \hat B\newblkdash \hat C & \hat D} \triangleq \blkmat{cc:cc}{ \hat A_1 & 0 &  \hat B_1 & 0 \\ 0 &  \hat A_2 & 0 & \hat B_2 \newblkdash {\hat C}_{11}& {\hat C}_{12}&{\hat D}_{11}& {\hat D}_{12} \\ 0 & {\hat C}_{22}&0 & {\hat D}_{22}}
    \end{align}
    where $\hat \filt_{22}= \ssrep{\hat A_2 }{\hat B_2}{\hat C_{22} }{\hat D_{22}} \in\RHinf^{\np\times\np}$ has a stable inverse, i.e., $(\hat \filt_{22})^{-1}\in \RHinf^{\np\times\np}$.
    Furthermore, there exists matrices $\hat C_\filt\in\R^{\nfilty \times \nfilthat}$ and $\Z\in\S^{\nfilthat}$ (called certificate of the factorization) such that ${\filt} = \ssrep{\hat A }{\hat B}{\hat C_\filt }{\DD\filt}$ with $\hat A$ and $\hat B$ from~\eqref{eq:abcd_hat}, $\DD\filt$ from~\eqref{eq:filt_dyn} and
    \begin{align}\label{eq:certificate}
        &\symb{\diagmat{-\Z \\&\Z\\&&\ \hat M}\!\begin{pmatrix}I&0\\  \hat A & \hat B \\ \hat C & \hat D \end{pmatrix}}= \symbscalar{\M \begin{pmatrix} \hat C_\filt & D_\filt \end{pmatrix}}.
    \end{align}
\end{theorem}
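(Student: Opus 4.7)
The statement is essentially a discrete-time $J$-spectral factorization of $\filt^*\M\filt$ with signature $\hat\M = \diag(I_\nq, -I_\np)$, augmented by a KYP-type state-space certificate that links the factor $\hat\filt$ back to the original filter $\filt$. My plan is to construct $\hat\filt$ in two stages, exploiting the block decomposition of $\filt$ in Assumption~\ref{ass:fact}, and then to read off $\hat C_\filt$ and $\Z$ from the stabilizing DARE solutions produced by the factorization. The first thing to check is that, on $\partial\mathbb{D}$, the Hermitian $\filt^*\M\filt$ has inertia $(\nq, \np)$ matching $\hat\M$: the positive part of dimension $\nq$ is supplied by~\eqref{eq:fact_ass1}, while the Schur complement in~\eqref{eq:fact_ass2} gives the negative part of dimension $\np$. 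This matching inertia is the standard premise for a canonical $J$-spectral factorization.

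\textbf{Construction of $\hat\filt$.} The $(1,1)$-block $\hat\filt_{11}\in\RHinf^{\nq\times \nq}$ is obtained as the discrete-time outer spectral factor of $\filt_1^*\M\filt_1$, which by~\eqref{eq:fact_ass1} is positive on $\partial\mathbb{D}$; by classical discrete-time spectral factorization results it exists with stable inverse and is produced via the stabilizing solution $\Z_1$ of a DARE built from the realization of $\filt_1$. For the second diagonal block I would form the Schur complement in~\eqref{eq:fact_ass2} (negative on $\partial\mathbb D$) and apply a second outer spectral factorization to its negation, producing $\hat\filt_{22}\in\RHinf^{\np\times \np}$ with stable inverse and a corresponding DARE solution $\Z_2$. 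The off-diagonal $\hat\filt_{12}$ must be built at the state-space level, not via the naive frequency-domain expression $\hat\filt_{11}^{-*}\filt_1^*\M\filt_2$ (which is generically unstable); its state-space matrices are read off from the joint realization of $(\filt_1,\filt_2)$ once $\Z_1$ is fixed, so that cancellations leave $\hat\filt_{12}\in\RHinf^{\nq\times\np}$. The block structure~\eqref{eq:abcd_hat} then follows with $\hat A = \diag(\hat A_1, \hat A_2)$ and $\hat B = \diag(\hat B_1, \hat B_2)$ inherited from the two factorizations, and the identity $\hat\filt^*\hat\M\hat\filt = \filt^*\M\filt$ on $\partial\mathbb{D}$ (hence everywhere, by analytic continuation) is automatic from the matching of Schur blocks.

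\textbf{Certificate identity.} For the final claim I would exhibit a realization of $\filt$ on the extended state space of $\hat A = \diag(\hat A_1, \hat A_2)$: the two DAREs are set up precisely so that, after an invertible change of coordinates and, if necessary, an uncontrollable/unobservable extension, $\filt$ can be written as $\ssrep{\hat A}{\hat B}{\hat C_\filt}{\DD\filt}$ for an appropriate output matrix $\hat C_\filt$. Setting $\Z = \diag(\Z_1, \Z_2)$, identity~\eqref{eq:certificate} is then just the state-space form of $\hat\filt^*\hat\M\hat\filt = \filt^*\M\filt$: its left-hand side is the zero-dissipation KYP quadratic form along $\hat\filt$ with storage $\Z$ and supply rate induced by $\hat\M$, while its right-hand side is the same supply rate written as a static quadratic form in the joint output $(\hat C_\filt, \DD\filt)$ of $\filt$. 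Equality of the two sides reduces to the DARE equations already satisfied by $\Z_1$ and $\Z_2$, now in block-diagonal form.

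\textbf{Main obstacle.} The genuinely non-routine step is the construction of $\hat\filt$: one must guarantee that each of the two spectral factorizations admits an outer factor with stable inverse, and that the pieces $\hat\filt_{11}, \hat\filt_{12}, \hat\filt_{22}$ can be assembled into a block-diagonal $(\hat A, \hat B)$ of the form~\eqref{eq:abcd_hat} without introducing unstable hidden modes. The strict sign conditions in Assumption~\ref{ass:fact} are precisely what the discrete-time KYP lemma needs for the associated DAREs to admit stabilizing solutions. Once this is in place, the certificate identity reduces to a bookkeeping exercise in which the DAREs for $\Z_1$ and $\Z_2$ are rewritten in the block form of~\eqref{eq:certificate}.
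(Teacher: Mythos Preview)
Your construction has a genuine gap at the choice of $\hat\filt_{11}$. Once $\hat\filt_{11}$ is fixed, the off-diagonal block is \emph{forced} by $\hat\filt_{11}^*\hat\filt_{12}=\filt_1^*\M\filt_2$, i.e.\ $\hat\filt_{12}=\hat\filt_{11}^{-*}\filt_1^*\M\filt_2$; there is no extra freedom to ``read off state-space matrices so that cancellations leave $\hat\filt_{12}\in\RHinf$''. If you take $\hat\filt_{11}$ to be the \emph{outer} factor (stable with stable inverse, stabilizing DARE solution), then $\filt_1\hat\filt_{11}^{-1}\in\RHinf$, so $\hat\filt_{11}^{-*}\filt_1^*=(\filt_1\hat\filt_{11}^{-1})^*$ is anti-stable, and since $\filt_2\in\RHinf$ there is no cancellation: $\hat\filt_{12}\notin\RHinf$ and the required structure~\eqref{eq:abcd_hat} fails. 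The paper resolves this by doing the opposite of what you propose: it takes the \emph{unmixed} DARE solution for the first block, so that $\lambda(A_1-B_1\underline{\hat D}_{11}^{-1}\underline{\hat C}_{11})\subseteq\{0\}\cup\{|\lambda|>1\}$, making $\filt_1\underline{\hat\filt}_{11}^{-1}$ anti-stable except for poles at $0$. A discrete-time-specific shift $\hat\filt_{11}(\zs)=\zs^{-n_0}\underline{\hat\filt}_{11}(\zs)$ then cancels the poles at $0$ (this is exactly the issue that $\filt^*$ is acausal in discrete time when $\filt$ has poles at the origin), yielding $\hat\filt_{11}^{-*}\filt_1^*\in\RHinf$ and hence $\hat\filt_{12}\in\RHinf$. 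Only the \emph{second} block $\hat\filt_{22}$ uses the stabilizing DARE solution, because only $\hat\filt_{22}^{-1}$ is required to be in $\RHinf$.

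Your certificate claim $Z=\diag(Z_1,Z_2)$ is also off. Because of the $\zs^{-n_0}$ shift, $(\hat A_1,\hat B_1)$ is a minimal realization of $\begin{pmatrix}\hat\filt_{11}\\\filt_1\end{pmatrix}$, not of $\filt_1$, so the DARE matrix $Z_{\mathrm u}$ does not even live on the right state space; moreover $\hat C^\top\hat\M\hat C$ has the off-diagonal block $\hat C_{11}^\top\hat C_{12}$, so the Stein equation $\hat A^\top Z\hat A-Z+\hat C^\top\hat\M\hat C-\hat C_\filt^\top\M\hat C_\filt=0$ does not decouple. In the paper, $Z$ is obtained by applying a separate technical lemma (Lemma~\ref{lem:dare_zero}) to the frequency-domain identity $\hat\filt^*\hat\M\hat\filt-\filt^*\M\filt=0$, which produces the full (non-block-diagonal) $Z$ satisfying~\eqref{eq:certificate}.
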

\begin{proof}
    This proof is constructive and thus serves at the same time as a procedure to compute the matrices in~\eqref{eq:abcd_hat} and~\eqref{eq:certificate}.
    The construction is based on the discrete-time algebraic Riccati equation (DARE) 
    \begin{align}\label{eq:dare}
        A^\top ZA-Z+Q-(A^\top ZB+S)(B^\top Z B+R)^{-1}(A^\top ZB+S)^\top = 0
    \end{align}
    for which the solution set is defined by $Z\in\dare(A,B,Q,R,S)\triangleq\{Z\in\S^{\nfilthat} \mid  \text{\eqref{eq:dare} and } B^\top Z B + R \text{ invertible}\}$.
    To construct the factorization, we perform the following steps.
  \begin{enumerate}[1.]
    \item\emph{Constructing $\hat \filt_{11}$.} Let $\filt_1=\ssrep{\A1}{\BB1}{\CC1}{\DD\filt^\q}$ be a minimal state space representation. Compute the unmixed\footnote{See~\cite{Clements2003} for a definition of unmixed solutions.} solution $Z_\mathrm{u}\in \dare(A_1, B_1, Q_1, R_1, S_1)$ with $\begin{pmatrix} Q_1 & S_1 \\ S^\top_1 & R_1\end{pmatrix} = \symbscalar{\M \begin{pmatrix} \C{1}{} & \D\filt\q{} \end{pmatrix}}$ 
    that satisfies $B^\top_1 Z_\mathrm{u} B_1 + R_1\succ 0$ and $\lambda(A_1-B_1\underline{\hat D}_{11}^{-1}\underline{\hat C}_{11})\subseteq  \{0\} \cup \{\lambda \in \mathbb C\mid |\lambda| >1 \}$ 
    with $\underline{\hat D}_{11}^\top \underline{\hat D}_{11} = B^\top_1 Z_\mathrm{u} B_1 + R_1$ and $\underline{\hat{C}}_{11}\triangleq\underline{\hat D}_{11}^{-\top} (A_1^\top Z_\mathrm{u} B_1 + S_1)^\top $.
    This unmixed solution exists and is unique due to~\eqref{eq:fact_ass1} as shown in Lemma~\ref{lem:dare} in the Appendix~\ref{appendix}. 
    It can be computed using the approach presented in~\cite{Ionescu1992} but by selecting the eigenvalues in $\{0\} \cup \{\lambda \in \mathbb C\mid |\lambda| >1 \}$ instead of the stable ones.
    Note that due to $Z_\mathrm{u}\in \dare(A_1, B_1, Q_1, R_1, S_1)$ and the definition of $\underline{\hat C}_{11}$ and $\underline{\hat D}_{11}$ we have 
    \begin{align}\label{eq:certificate11}
      &\symb{\diagmat{\Z_\mathrm{u} \\&-\Z_\mathrm{u}\\&& I}\!\begin{pmatrix}I&0\\  A_1 & B_1 \\ \underline{\hat C}_{11} & \underline{\hat D}_{11} \end{pmatrix}}= \begin{pmatrix}
        Q_1 & S_1 \\ S_1^\top & R_1
      \end{pmatrix}
      =\symbscalar{\M \begin{pmatrix} C_1 & \D\filt\q{} \end{pmatrix}}.
    \end{align}
    Define $ \underline{\hat \filt}_{11}\triangleq \ssrep{A_1}{B_1}{\underline{\hat C}_{11}}{\underline{\hat D}_{11}}$. 
    Then due to~\eqref{eq:certificate11} we have for all $\zs\in\mathbb C$ that
    \begin{align}
      \filt_1^*(\zs)\M\filt_1(\zs)=\symh{\symbscalar{\M \begin{pmatrix} C_{1}{} & \D\filt\q{} \end{pmatrix}} \begin{pmatrix} (\zs I-A_{1})^{-1}B_{1}{} \\ I \end{pmatrix}} &\refeq{\small \eqref{eq:certificate11}}{=}\symh{
        \symb{\diagmat{\Z_\mathrm{u} \\&-\Z_\mathrm{u}\\&&\ I}\begin{pmatrix}I&0\\  A_{1} &  B_{1}{} \\ \underline{\hat C}_{11} & \underline{\hat D}_{11} \end{pmatrix}} \begin{pmatrix} (\zs I- A_{1})^{-1} B_{1}{} \\ I \end{pmatrix}}\nonumber \\ \label{eq:fact_11}
      &\refeq{\cite[Lemma~2.3]{Stoorvogel1998}\hspace{0.8cm}}{=}\symh{\symbscalar{\begin{pmatrix} \underline{\hat C}_{11} & \underline{\hat D}_{11} \end{pmatrix}} \begin{pmatrix} (\zs I- A_{1})^{-1} B_{1}{} \\ I \end{pmatrix}}=\underline{\hat\filt}_{11}^*(\zs) \underline{\hat \filt}_{11}(\zs)
    \end{align}
    holds, i.e., $\filt_1^*\M\filt_1 = \hat {\underline{\filt}}_{11}^*\underline{\hat \filt}_{11}$.
    Let $n_0\geq 0$ be the algebraic multiplicity of the eigenvalue at $0$ of $A_1-B_1\underline{\hat D}_{11}^{-1}\underline{\hat C}_{11}$. 
    Define $\hat \filt_{11}(\zs) \triangleq \zs[-n_0] \underline{\hat \filt}_{11}(\zs)$, then $\hat\filt_{11}\in\RHinf^{\nq\times\nq}$ and 
    \begin{align}\label{eq:cancel_0_trick}
      \hat \filt_{11}^* \hat \filt_{11} = \underline{\hat\filt}_{11}^* \underline{\hat \filt}_{11}\refeq{\eqref{eq:fact_11}}{=} \filt_1^*\M\filt_1
    \end{align}
    due to $(\zs[-n_0])^*\zs[-n_0]=1$ for $\zs\in\mathbb C\setminus\{0\}$.
    Further, let $\begin{pmatrix}\hat \filt_{11}\\ \filt_{1}\end{pmatrix} = \ssrep{\hat A_1}{ \hat B_1}{ \hat C_{11} }{ \hat D_{11} \\ \hat C_{1} & \D\filt\q{}}$ be a minimal state space representation. 
    Then, $\hat A_1$ is Schur stable and $(\hat A_1,\hat B_1)$ is controllable.
    \item \emph{Constructing $\hat \filt_{12}$.} We have $\hat \filt_{11}^{-*} \filt_1^* = (\filt_1 \hat \filt_{11}^{-1})^* = (\zs[n_0] \filt_1\underline{\hat \filt}^{-1}_{11})^*$ and further (by removing uncontrollable modes) 
    \begin{align*}
      \filt_1\underline{\hat \filt}^{-1}_{11} = \ssrepflex{cc}{c}{A_1 & -B_1 \underline{\hat D}_{11}^{-1} \underline{\hat C}_{11} & B_1 \underline{\hat D}_{11}^{-1} \\ 0& A_1-B_1 \underline{\hat D}_{11}^{-1} \underline{\hat C}_{11} & B_1 \underline{\hat D}_{11}^{-1}  }{ C_1 & -\D\filt\q{}\underline{\hat D}_{11}^{-1}\underline{\hat C}_{11} & \D\filt\q{}\underline{\hat D}_{11}^{-1} } = \ssrep{A_1-B_1 \underline{\hat D}_{11}^{-1} \underline{\hat C}_{11} }{B_1 \underline{\hat D}_{11}^{-1} }{ C_1-\D\filt\q{}\underline{\hat D}_{11}^{-1}\underline{\hat C}_{11} }{ \D\filt\q{}\underline{\hat D}_{11}^{-1} }.
    \end{align*}
    From 1. we know $\lambda \left( A_1-B_1\underline{\hat D}_{11}^{-1}\underline{\hat C}_{11}\right) \subseteq \{0\} \cup \{\lambda \in \mathbb C\mid |\lambda| >1 \}$. Thus, $\filt_1\underline{\hat \filt}^{-1}_{11}$ has $n_0$ poles at $0$ and all other poles outside $\overline {\mathbb D}$.
    Hence, all poles of $\filt_1 \hat \filt_{11}^{-1} =\zs[n_0] \filt_1\underline{\hat \filt}^{-1}_{11}$ lie outside $\overline {\mathbb D}$ as we canceled the $n_0$ poles at $0$.
    Therefore, $\hat \filt_{11}^{-*} \filt_1^*$ is causal and has all poles inside $\mathbb D$, i.e., $\hat \filt_{11}^{-*} \filt_1^*\in\RHinf^{\nq\times\nfilty}$.
    Therefore, also 
    \begin{align}\label{eq:filt12}
        \hat \filt_{12} \triangleq\hat \filt_{11}^{-*} \filt_1^* M\filt_2\in\RHinf^{\nq\times\np},
    \end{align}
    which implies that there is a minimal state space representation of $\begin{pmatrix}\hat \filt_{12} \\ \filt_2 \end{pmatrix}=\ssrepflex{c}{c}{\hat A_2 & \hat B_2}{ \hat C_{12} & \hat D_{12} \\ \hat C_2 & \D\filt\p{}}$ with $\hat A_2$ Schur stable and $(\hat A_2,\hat B_2)$ controllable.
     
    \item \emph{Constructing $\hat \filt_{22}$.} Note that
    \begin{align*}
      &\symh{\begin{pmatrix}I & 0 \\ 0 & -\M \end{pmatrix} \begin{pmatrix} \hat \filt_{12} \\ \filt_2 \end{pmatrix}}
      =\hat \filt_{12}^* \hat \filt_{12}- \filt_{2}^*\M \filt_{2} \overset{\small \eqref{eq:cancel_0_trick}, \eqref{eq:filt12}}{=} \filt_2^* \M \filt_1 \left(\filt_1^*\M \filt_1\right)^{-1}\filt_1^* \M \filt_2-\filt_2^* \M \filt_2  \succ 0
    \end{align*}
    where the last frequency domain inequality holds on $\partial \mathbb{D}$ due to \eqref{eq:fact_ass2}.
    Hence, we can apply Lemma~\ref{lem:dare} in the Appendix~\ref{appendix} which provides a unique stabilizing solution $Z_{\mathrm{s}} \in \dare(\hat A_2, \hat B_2, Q_2,R_2,S_2)$ with $\begin{pmatrix} Q_2 & S_2 \\ S^\top_2 & R_2\end{pmatrix} = \symb{\begin{pmatrix} I & 0 \\ 0 & -\M \end{pmatrix} \begin{pmatrix} \hat C_{12} & \hat D_{12} \\  \hat C_2 & \D\filt\p{} \end{pmatrix}}$. 
    Since $\hat B^\top_2 Z_{\mathrm{s}} \hat B_2 + R_2\succ 0$, there exists $\hat D_{22}$ with $\hat D_{22}^\top \hat D_{22} = \hat B^\top_2 Z_{\mathrm{s}} \hat B_2 + R_2$.
    Further, define $\hat{C}_{22}=\hat D_{22}^{-\top} (\hat A_2^\top Z_{\mathrm{s}} \hat B_2 + S_2)^\top $.
    Then, we know that $\lambda \left( \hat A_2-\hat B_2{\hat D}_{22}^{-1}{\hat C}_{22}\right) \subseteq \mathbb{D}$ as $Z_{\mathrm{s}}$ is the stabilizing solution.
    Thus, $\hat \filt_{22} = \ssrep{\hat A_2}{\hat B_2}{\hat C_{22}}{\hat D_{22}}$ has a stable inverse $\hat \filt_{22}^{-1}=\ssrep{\hat A_2-\hat B_2{\hat D}_{22}^{-1}{\hat C}_{22} }{\hat B_2 \hat D_{22}^{-1}}{-\hat D_{22}^{-1} \hat C_{22} }{\hat D_{22}^{-1} } \in \RHinf^{\np\times \np}$.
    Furthermore, the definition of $\hat D_{22}$ and $\hat C_{22}$ as well as the fact that $Z_{\mathrm{s}} \in \dare(\hat A_2, \hat B_2, Q_2,R_2,S_2)$ yield
    \begin{align*}
      &\symb{\diagmat{\Z_{\mathrm{s}} \\&-\Z_{\mathrm{s}}\\&& I}\begin{pmatrix}I&0\\  \hat A_2 & \hat B_2 \\ \hat C_{22} & \hat D_{22} \end{pmatrix}}= \symb{\begin{pmatrix} I & 0 \\ 0 & -\M \end{pmatrix} \begin{pmatrix} \hat C_{12} & \hat D_{12} \\ \hat C_2 & \D\filt\p{} \end{pmatrix}}.
    \end{align*}
    Analogous to the arguments in step 1., this guarantees $\hat \filt_{22}^*\hat \filt_{22} = \hat \filt_{12}^*\hat \filt_{12}-\filt_2^*\M\filt_2$.
    Thus, we have shown that $\hat \filt^*\hat \M \hat \filt$ is indeed a factorization, i.e.,
    \begin{align}
      \hat \filt^*\hat \M \hat \filt &= \begin{pmatrix} \hat \filt_{11}^* \hat \filt_{11} &  \hat \filt_{11}^*  \hat \filt_{12} \\ \hat \filt_{12}^* \hat \filt_{11} &  \hat \filt_{12}^* \hat \filt_{12}- \hat \filt_{22}^* \hat \filt_{22} \end{pmatrix} \overset{\small \eqref{eq:cancel_0_trick}, \eqref{eq:filt12}}{=} \begin{pmatrix} \filt_1^* \M \filt_1 &  \filt_1^* \M \filt_2 \\ \filt_2^* \M \filt_1 &  \filt_2^* \M \filt_2 \end{pmatrix}=\filt^* \M \filt. \label{eq:fact}
    \end{align}
    \item \emph{Computing the certificate $\Z$.} Define $\hat C_\filt=\begin{pmatrix} \hat C_1 & \hat C_2 \end{pmatrix}$. 
    Then $\Z$ is the solution of the Stein equation 
    \begin{align}\label{eq:stein}
        \hat A^\top \Z\hat A- \Z +\hat C^\top\hat \M\hat C - \hat C_\filt ^\top \M\hat C_\filt = 0.
    \end{align}
    To show that this is the case, we write~\eqref{eq:fact} compactly as
    \begin{align}\label{eq:fde_for_certificate}
      0=\begin{pmatrix} \hat \filt \\ \filt \end{pmatrix}^* \begin{pmatrix} -\hat M & 0 \\ 0 & M \end{pmatrix}\begin{pmatrix} \hat \filt \\  \filt \end{pmatrix}\ \text{with}\ 
      \begin{pmatrix} \hat \filt \\ \filt \end{pmatrix} = \ssrepflex{c}{c}{ \hat A &  \hat B }{ \hat C &\hat D \\ \hat C_\filt & D_\filt}.
    \end{align}
    Since we constructed $\hat A_1$, $\hat A_2$ to be Schur stable and $(\hat A_1,\hat B_1)$, $(\hat A_2,\hat B_2)$ to be controllable, we know due to~\eqref{eq:abcd_hat} that also $\hat A$ is Schur stable and $(\hat A,\hat B)$ is controllable.
    Hence, we can apply Lemma~\ref{lem:dare_zero} in the Appendix~\ref{appendix} to~\eqref{eq:fde_for_certificate}, where  the matrices $(A_i,B_i,C_i,D_i,M)$ in Lemma~\ref{lem:dare_zero} for both $i\in\{1,2\}$ are $\left(\hat A, \hat B, \begin{pmatrix} \hat C \\ \hat C_\filt\end{pmatrix}, \begin{pmatrix} \hat D \\ \hat D_\filt\end{pmatrix}, \begin{pmatrix} -\hat M & 0 \\ 0 & M    \end{pmatrix}\right)$.
    Then, the Lemma yields a solution $Z$ that satisfies 
    \begin{align*}
      &\symb{\diagmat{-\Z \\&\Z}\begin{pmatrix}I&0\\  \hat A & \hat B \end{pmatrix}}= \symb{\begin{pmatrix} -\hat M & 0 \\ 0 & \M \end{pmatrix} \begin{pmatrix} {\hat C} &{\hat D} \\ \hat C_\filt & D_\filt \end{pmatrix}}.
    \end{align*}
    The upper left block is the Stein equation~\eqref{eq:stein} and uniquely determines $\Z$.
    When bringing $\symbscalar{\hat M \begin{pmatrix} \hat C & \hat D\end{pmatrix}}$ to the other side, we obtain~\eqref{eq:certificate} and hence, $\Z$ indeed certifies the factorization.\\[-1.35cm]
  \end{enumerate}
\end{proof}
\begin{remark}
    The factorization of Theorem~\ref{thm:fact} is known from continuous time~\cite{Veenman2014}, with the technical difference that Theorem~\ref{thm:fact} requires $\hat\filt$ and $\hat \filt_{22}^{-1}$ to be stable, whereas~\cite{Veenman2014} requires $\hat \filt_{11}$ and $\hat \filt^{-1}$ to be stable.
    However, constructing the factorization in discrete time is not a trivial extension of the continuous-time case, as the complex conjugate $\filt^*$ of a causal discrete-time system $\filt$ with a pole at $0$ is acausal. 
    In contrast, in continuous time the complex conjugate $\filt^*$ is always casual. 
    We fixed this problem by introducing additional poles at $0$ in $\hat\filt_{11}$ such that $\hat\filt_{11}^{-*}$ is causal.
    This change, however, requires a new way to find the certificate $\Z$ for which we proposed Lemma~\ref{lem:dare_zero} in the Appendix~\ref{appendix}.
    In continuous time it is even possible to avoid the use of factorizations as shown in the synthesis procedures~\cite{Veenman2016a,Veenman2014a} by working directly with a state space representation of $\filt^*\M\filt$ instead of one for $\filt$, which can have numerical benefits.
    In discrete time, however, if $\filt$ has poles at the origin, $\filt^*\M\filt$ is improper and cannot be represented by a state space model but only by a descriptor state model.
    To avoid the difficulties of descriptor state systems, we work with the factorization approach from~\cite{Veenman2014,Wang2016}. 
\end{remark}

This new factorization can be equivalently used to describe the uncertainty $\Delta$ if the terminal cost is suitably modified as the following theorem shows.

\begin{theorem}\label{thm:term_cost}
    Consider $\hat \filt$, $\hat \M$, $\Z$ from Theorem~\ref{thm:fact}.
    Then, there exists $\hat V\in\R^{\nfilt\times\nfilthat}$ with full row rank that satisfies $\hat V \hat A = A_\filt \hat V$, $ \hat V\hat B = B_\filt $ as well as $\hat C_\filt =C_\filt \hat V$.
    Further, a causal operator $\Delta:\ell_{2\mathrm{e}}^\nq\to\ell_{2\mathrm{e}}^\np$ satisfies the finite horizon IQC with terminal cost defined by $(A_\filt,B_\filt,C_\filt,D_\filt,\X,\M)$ if and only if it satisfies the one defined by $(\hat A,\hat B,\hat C,\hat D,\hat X,\hat \M)$ with $\hat X=\hat V^\top \X\hat V + Z$.
\end{theorem}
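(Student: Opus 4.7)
The plan is to construct a state-space map $\hat V$ that relates the minimal realization of $\filt$ to the controllable realization produced in Theorem~\ref{thm:fact}, and then to use the certificate identity~\eqref{eq:certificate} to show that the redefined terminal cost $\hat X = \hat V^\top X \hat V + Z$ precisely absorbs the change of realization, so that the two IQC sums coincide on every $\Delta$-trajectory.

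First, I would establish existence of $\hat V$. By Theorem~\ref{thm:fact}, $(\hat A, \hat B)$ is block-diagonal with controllable blocks $(\hat A_1, \hat B_1)$, $(\hat A_2, \hat B_2)$ and hence itself controllable, while both $(\hat A, \hat B, \hat C_\filt, D_\filt)$ and the minimal $(A_\filt, B_\filt, C_\filt, D_\filt)$ realize the same transfer function $\filt$. A standard linear systems argument---quotienting $\R^{\nfilthat}$ by the unobservable subspace of $(\hat C_\filt, \hat A)$ and identifying the quotient with the minimal state space of $\filt$---then yields a unique full-row-rank $\hat V \in \R^{\nfilt \times \nfilthat}$ satisfying $\hat V \hat A = A_\filt \hat V$, $\hat V \hat B = B_\filt$, and $\hat C_\filt = C_\filt \hat V$. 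If $\hat\filtx$ denotes the state of $\hat\filt$ driven by $(\bar q, \bar p)$ from $\hat\filtx_0 = 0$ and $\filtx$ the state of $\filt$ driven by the same input from $\filtx_0 = 0$, induction using the first two identities gives $\filtx_k = \hat V \hat\filtx_k$, and the third identity delivers the output relation $\filty_k = \hat C_\filt \hat\filtx_k + D_\filt (\bar q_k^\top, \bar p_k^\top)^\top$ together with the terminal relation $\filtx_t^\top X \filtx_t = \hat\filtx_t^\top \hat V^\top X \hat V \hat\filtx_t$.

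Second, I would sandwich the certificate~\eqref{eq:certificate} with $(\hat\filtx_k^\top, \bar q_k^\top, \bar p_k^\top)$ and its transpose. Using $\hat\filtx_{k+1} = \hat A \hat\filtx_k + \hat B (\bar q_k^\top, \bar p_k^\top)^\top$ and $\hat\filty_k = \hat C \hat\filtx_k + \hat D (\bar q_k^\top, \bar p_k^\top)^\top$ on the left and the output relation on the right, this simplifies to the per-step identity $\hat\filtx_{k+1}^\top Z \hat\filtx_{k+1} - \hat\filtx_k^\top Z \hat\filtx_k + \hat\filty_k^\top \hat M \hat\filty_k = \filty_k^\top M \filty_k$. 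Summing from $k=0$ to $t-1$, telescoping with $\hat\filtx_0 = 0$, and adding $\filtx_t^\top X \filtx_t$ to both sides produces
\begin{equation*}
\sum_{k=0}^{t-1} \hat\filty_k^\top \hat M \hat\filty_k + \hat\filtx_t^\top \hat X \hat\filtx_t = \sum_{k=0}^{t-1} \filty_k^\top M \filty_k + \filtx_t^\top X \filtx_t
\end{equation*}
with $\hat X = \hat V^\top X \hat V + Z$. Since both sides evaluate on the same trajectory $(\bar q, \bar p = \Delta(\bar q))$, one IQC is nonnegative for every $t\in\N$ if and only if the other is, which is exactly the claimed equivalence.

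The principal obstacle is the first step: one must justify existence and full row rank of $\hat V$ purely from controllability of $(\hat A, \hat B)$ and minimality of $(A_\filt, B_\filt, C_\filt, D_\filt)$, keeping in mind that $\hat V$ is generically non-square as $\nfilthat \geq \nfilt$. Everything after that reduces to a direct telescoping calculation driven by the certificate from Theorem~\ref{thm:fact}.
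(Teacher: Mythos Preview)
Your proposal is correct and follows essentially the same approach as the paper: existence of $\hat V$ from realization theory (minimal versus non-minimal realization of the same $\filt$), then sandwiching the certificate~\eqref{eq:certificate} with the trajectory, telescoping, and adding the terminal cost to obtain the equality of the two IQC expressions. The paper's proof is slightly terser on the existence of $\hat V$ (simply invoking that the non-minimal realization contains unobservable modes), but the logic is identical.
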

\begin{proof}
    First, note that $\ssrep{\A\filt}{\BB\filt}{\CC\filt}{\DD\filt}=\filt$ is a minimal state space representation whereas the realization $\ssrep{\hat A}{\hat B}{\hat C_\filt }{D_\filt}=\filt$ contains unobservable modes.
    Hence, there exists a transformation matrix $\hat V\in\R^{\nfilt\times \nfilthat}$ with full row rank, that satisfies 
    $\hat V \hat A = A_\filt \hat V$, $ \hat V\hat B = B_\filt $ as well as $\hat C_\filt =C_\filt \hat V$. 
    Moreover, if $\hat \filtx$ denotes the state corresponding to the state space realization with $(\hat A,\hat B)$, then $\filtx = \hat V\hat \filtx$.
    Let $\hat s = \hat \filt \begin{pmatrix}\q\\\p\end{pmatrix}$ and use~\eqref{eq:certificate} to observe that
  \begin{align*}
    \sum_{\k=0}^{\t-1} \filty_\k^\top \M\filty_\k
    &= \sum_{\k=0}^{\t-1} \symb{\symbscalar{\M \blkmat{c:c}{ \hat C_\filt & D_\filt }} \begin{pmatrix}\hat \filtx_\k \newblkdash[1ex] \q_\k \\ \p_\k \end{pmatrix}} \refeq{\eqref{eq:certificate}}{=}\ \sum_{\k=0}^{\t-1} \symb{\symb{\diagmat{- \Z \\&\Z\\&&\ \hat M}\blkmat{c:c}{I&0\\  \hat A & \hat B \\ \hat C & \hat D }} \begin{pmatrix}\hat \filtx_\k \newblkdash[1ex] \q_\k \\ \p_\k \end{pmatrix}} \\
    &= \sum_{\k=0}^{\t-1} \symb{\diagmat{- \Z \\&\Z\\&&\ \hat M}\begin{pmatrix}\hat \filtx_\k \\ \hat \filtx_{k+1} \\ \hat \filty_\k \end{pmatrix}} = \sum_{\k=0}^{\t-1} \hat \filty_k^\top \hat \M \hat \filty_k + \hat \filtx_t^\top \Z \hat \filtx_t - \underbrace{\hat \filtx_0^\top \Z \hat \filtx_0}_{=0}.
  \end{align*}
  Since $\filtx = \hat V \hat \filtx$, we have $\filtx^\top_t \X\filtx_t = \hat \filtx_t^\top \hat V^\top \X\hat V\hat \filtx_t$ and thus 
  \begin{align*}
      0\refeq{\eqref{eq:iqc}}\leq \sum_{\k=0}^{\t-1} \filty_\k^\top \M\filty_\k + \filtx^\top_t \X\filtx_t = \sum_{\k=0}^{\t-1} \hat \filty_\k^\top \hat \M\hat \filty_\k + \hat \filtx^\top_t (\hat V^\top \X \hat V + Z)\hat \filtx_t
  \end{align*}
  which proves the statement.
\end{proof}

Theorem~\ref{thm:fact} allows us to use the analysis LMI in Theorem~\ref{thm:stability} and~\ref{thm:e2e} with the new factorization $\hat\M$, $\hat \filt$, and $\hat \X$ instead of the old one $\M$, $\filt$, and $\X$.
For Theorem~\ref{thm:p2p}, however, we need to restrict $(\M_1,\M_2,\X_1,\X_2)$ to 
\begin{align}\label{eq:M1M2_restriction}
    \M_1=(1-\sigma)\M,\quad \M_2 = \sigma \M,\quad\X_1 = (1-\sigma)\X, \quad\X_2=\sigma \X
\end{align}
with some fixed $\sigma\in[0,1]$ in order to have a common $\hat \filt$ and $\hat \M_1 = (1-\sigma)\hat \M$, $\hat X_1 = (1-\sigma)\hat X$, $\hat \M_2 = \sigma \hat \M$, and $\hat \X_2 = \sigma \hat \X$. 
To indicate the use of $\hat \filt$ instead of $\filt$, define $\allhatrho$ analogous to $\allrho$ in~\eqref{eq:aug_sys} and~\eqref{eq:all} but with $\hat A,\hat B,\hat C,\hat D$ instead of $\A\filt,\BB\filt,\CC\filt,\DD\filt$.
Note that $\D\all\z \p=\D\allhat\z\p$ and $\D\all\z \w=\D\allhat\z\w$. 
Hence, for ease of notation and to indicate this equivalence, we always use $\D\all\z \p$ and $\D\all\z \w$.
We need the restriction~\eqref{eq:M1M2_restriction} to ensure that we can work with \emph{one} factorization and obtain only \emph{one} system $\allhatrho$.
Next, we show that Theorems~\ref{thm:e2e} and~\ref{thm:p2p} provide the same upper bound $\gamma$ no matter which of the two different factorizations of the IQC is used.
\begin{theorem}\label{thm:warm_start}
    Consider $\hat \filt$, $\hat \M$, $\Z$ from Theorem~\ref{thm:fact}.
    Then, the following statements hold.
    \begin{enumerate}[(i)]
        \item There exists $\P$ such that the LMIs~\eqref{eq:stab_LMI1},~\eqref{eq:stab_LMI2},~\eqref{eq:ana_LMI2}, and~\eqref{eq:M1M2_restriction} hold, if and only if there exists $\hat \P$ such that these LMIs hold with $(\allrho, \M, \X, \P)$ substituted by $(\allhatrho, \hat \M, \hat \X, \hat \P)$. 
        \item There exists $\P$ such that the LMIs~\eqref{eq:stab_LMI2} and~\eqref{eq:ETE_LMI} hold, if and only if there exists $\hat \P$ such that these LMIs hold with $(\allrho, \M, \X, \P)$ substituted by $(\allhatrho, \hat \M, \hat \X, \hat \P)$.
    \end{enumerate}
\end{theorem}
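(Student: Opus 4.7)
I would relate the two LMI families through a state-space intertwining combined with the pointwise certificate identity implied by~\eqref{eq:certificate}. Theorem~\ref{thm:term_cost} supplies a full row rank $\hat V\in\R^{\nfilt\times\nfilthat}$ with $\hat V\hat A=A_\filt\hat V$, $\hat V\hat B=B_\filt$, $\hat C_\filt=C_\filt\hat V$. Setting $T\triangleq\diag(\hat V,I_{\nx+\nK})$, the block structure of $\all$ in~\eqref{eq:all} immediately gives $T\A\allhatrho=\A\allrho T$, $T\B\allhatrho\p=\B\allrho\p$, $T\B\allhatrho\w=\B\allrho\w$, while $\C\all\z,\D\all\z\p,\D\all\z\w$ are realization-independent; hence $\chi_k=T\hat\chi_k$ propagates to $\chi_{k+1}=T\hat\chi_{k+1}$, and $\filty_k$ depends on $(\hat\chi_k,\bar\p_k,\bar\w_k)$ only through $(T\hat\chi_k,\bar\p_k,\bar\w_k)$. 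Multiplying~\eqref{eq:certificate} from the right by $(\hat\filtx_k,\bar\q_k,\bar\p_k)^\top$ and from the left by its transpose yields the pointwise dissipation relation
\begin{align*}
\filty_k^\top\M\filty_k-\hat\filty_k^\top\hat\M\hat\filty_k=\hat\filtx_{k+1}^\top\Z\hat\filtx_{k+1}-\hat\filtx_k^\top\Z\hat\filtx_k,
\end{align*}
which is the differential form of the telescoping identity from Theorem~\ref{thm:term_cost}.

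\textbf{Forward direction.} Given a feasible $\P$, I would set $\hat\P\triangleq T^\top\P T+\diag(\Z+\epsilon Y,0)$, where $Y\succ0$ solves $\hat A^\top Y\hat A-Y\prec0$ (available since $\hat A$ is Schur stable by Theorem~\ref{thm:fact}) and $\epsilon>0$ is to be chosen small. Using $T^\top\ubar\X T=\diag(\hat V^\top\X\hat V,0)$ together with $\hat\X=\hat V^\top\X\hat V+\Z$, one obtains $\hat\P-\hat{\ubar\X}=T^\top(\P-\ubar\X)T+\epsilon\diag(Y,0)\succ0$, so~\eqref{eq:stab_LMI2} holds in hat form for any $\epsilon>0$. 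For the dissipation LMIs, the setup reduces the LHS of the hat LMI, viewed as a quadratic form in $(\hat\chi_k,\bar\p_k,\bar\w_k)$, to the original LHS evaluated at $(T\hat\chi_k,\bar\p_k,\bar\w_k)$ plus the Lyapunov difference $\epsilon(\hat\filtx_{k+1}^\top Y\hat\filtx_{k+1}-\hat\filtx_k^\top Y\hat\filtx_k)$. Strict feasibility of the original LMI bounds the first summand by $-\lambda(\|T\hat\chi_k\|^2+\|\bar\p_k\|^2+\|\bar\w_k\|^2)$, and a Young-type estimate based on $\hat A^\top Y\hat A-Y\prec0$ bounds the second by $-\tfrac{\epsilon}{2}\|\hat\filtx_k\|^2+\epsilon C(\|\bar\x_k\|^2+\|\bar\Kx_k\|^2+\|\bar\p_k\|^2+\|\bar\w_k\|^2)$ for some constant $C$. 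Choosing $\epsilon<\lambda/C$ yields strict negative definiteness also in the $\ker T$-directions that $T^\top\P T$ alone does not control. The splitting~\eqref{eq:M1M2_restriction} transfers by setting $\hat\M_i=(1-\sigma)\hat\M,\sigma\hat\M$ and $\hat\X_i=(1-\sigma)\hat\X,\sigma\hat\X$, so both~\eqref{eq:ana_LMI2} and~\eqref{eq:ETE_LMI} reduce to the same argument by linearity.

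\textbf{Reverse direction and main obstacle.} The reverse implication is where I expect the main difficulty, since $T$ is not invertible whenever $\nfilthat>\nfilt$. A natural candidate is $\P\triangleq T^{+\top}(\hat\P-\diag(\Z,0))T^+$ with $T^+=T^\top(TT^\top)^{-1}$ the Moore--Penrose right inverse; the certificate identity rewrites the hat LMI in terms of $\tilde\P=\hat\P-\diag(\Z,0)$ and $\filty_k^\top\M\filty_k$, and evaluating at the canonical lift $\hat\chi_k=T^+\chi_k\in\mathrm{Im}(T^\top)$ recovers the original LMI at $\chi_k$ up to a correction involving the $\ker T$-component of the propagated $\hat\chi_{k+1}$. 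Showing that this correction does not destroy strict feasibility is the hard part. One route exploits that $\ker T$ is $\A\allhatrho$-invariant and Schur-stable (inherited from $\hat V\hat A=A_\filt\hat V$ and the stability of $\hat A$), combined with the strict slack in the hat LMI. A cleaner alternative is a KYP-lemma argument: both LMIs are equivalent to the same frequency-domain inequality on the closed loop $\GKrho$ weighted by $\filt^*\M\filt=\hat\filt^*\hat\M\hat\filt$, so their feasibility sets coincide.
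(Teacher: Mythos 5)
Your forward direction (constructing $\hat\P$ from a feasible $\P$) is essentially sound and close in spirit to the paper's construction: the paper also builds $\hat \P = \ubar \Z + T_1^{-\top}\diag(\varepsilon W,\P)T_1^{-1}$ with a small Lyapunov certificate $\varepsilon W$ for the stable unobservable block, while you smear $\epsilon Y$ over the whole filter state and absorb the resulting cross terms by a Young estimate against the strict slack of the original LMI. Both work; the paper's version keeps the perturbation confined to $\ker\hat V$ so that no $\epsilon$-smallness argument is needed for the observable directions. Your pointwise reading of~\eqref{eq:certificate} as $\filty_k^\top\M\filty_k-\hat\filty_k^\top\hat\M\hat\filty_k=\hat\filtx_{k+1}^\top\Z\hat\filtx_{k+1}-\hat\filtx_k^\top\Z\hat\filtx_k$ is exactly the mechanism the paper uses in its ``Step 1''.

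The reverse direction, however, is genuinely incomplete, and your candidate $\P=T^{+\top}(\hat\P-\diag(\Z,0))T^{+}$ is not the right object. In coordinates $T_1=\bigl(\begin{smallmatrix}\hat V^{\perp}&\hat V^{\dagger}\\ &&\end{smallmatrix}\ \cdots\bigr)$ that display the unobservable modes, your $\P$ is the $(2,2)$ diagonal block $\bar\P_{22}$ of $\tilde\P=\hat\P-\ubar\Z$, whereas the correct choice is the Schur complement $\bar\P_{22}-\bar\P_{12}^\top\bar\P_{11}^{-1}\bar\P_{12}$. The difference matters precisely because of the correction you flag: starting from $\hat\chi_k\in\mathrm{Im}(T^\top)$, the propagated state $\hat\chi_{k+1}$ acquires a $\ker T$-component, so the restricted quadratic form contains an indefinite cross term $2(\bullet)^\top\bar\P_{12}(\cdot)$ that cannot be discarded from a $\prec 0$ inequality. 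The paper removes it by the congruence $T_2=\bigl(\begin{smallmatrix}I&-\bar\P_{11}^{-1}\bar\P_{12}\\0&I\end{smallmatrix}\bigr)$, which block-diagonalizes $\bar\P$ while leaving the block-triangular closed-loop matrix invariant; only then does deleting the unobservable row and column (using $\bar R_{11}\succeq 0$) yield the original LMI. This in turn requires proving $\bar\P_{11}\succ 0$, which the paper extracts from the $(1,1)$ block of the transformed LMI, $A_\mathrm{u}^\top\bar\P_{11}A_\mathrm{u}-\bar\P_{11}\prec 0$ with $A_\mathrm{u}$ Schur stable --- a step absent from your sketch. Finally, your proposed ``cleaner alternative'' via the KYP lemma does not close the gap: the LMIs here are not pure frequency-domain inequalities (they carry the indefinite terminal-cost matrices $\ubar\X$, $\hat{\ubar\X}$ evaluated at two different time instants in~\eqref{eq:ana_LMI2}), and even where a KYP equivalence holds it only asserts existence of \emph{some} symmetric solution, not one additionally satisfying the coupling condition~\eqref{eq:stab_LMI2} and compatible across the several LMIs that must share a single $\P$.
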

\begin{proof}
    We show this result by explicitly transforming the LMIs. 
    For the sake of conciseness, we exploit that the LMIs~\eqref{eq:stab_LMI1},~\eqref{eq:ETE_LMI}, and~\eqref{eq:ana_LMI2} restricted to~\eqref{eq:M1M2_restriction} are all of the form (stated in~$\hat\ $ variables)
    \begin{align}\label{eq:standard_LMI_form}
        \symb{{\arraycolsep=1pt\begin{pmatrix}(1-c_1)\hat{\ubar\X}-\hat \P\\
            &c_1 \hat{R} \\
            &&c_1 \hat \M\\
            &&&c_2 I\\
            &&&&c_3 I\end{pmatrix}}\hspace{-0.1cm}
        \begin{pmatrix}I&0&0\\[0.3mm]
            \A{\allhatrho} & \B{\allhatrho}\p & \B{\allhatrho}\w \\[0.8mm]
            \C{\allhat}{\hat\filty} & \D{\allhat}{{\hat\filty}}\p & \D{\allhat}{{\hat\filty}}\w \\[0.5mm]
            \C{\allhat}\z & \D\all\z\p & \D\all\z\w \\[0.5mm]
          0&0&I\end{pmatrix}} \prec 0
    \end{align}
    where for~\eqref{eq:stab_LMI1} we have $c_1=1$, $\hat R=\hat \P$, $c_2 = 0$, and $c_3 = - \mu$; for~\eqref{eq:ETE_LMI} we have $c_1=1$, $\hat R=\hat \P$, $c_2 = \frac{1}{\gamma}$, $c_3 = -\gamma$, and $\rho=1$; as well as for~\eqref{eq:ana_LMI2} we have $c_1=\sigma$, $\hat R=\hat{\ubar\X}$, $c_2 = \frac{\alpha}{\gamma}$, and $c_3 = -\alpha(\gamma-\beta)$.
    First, we transform the~$\hat \ $ version of the LMIs to the ones without~$\hat\ $ in three steps.
    \emph{Step 1: Transform $\hat \M$ and $\hat\filty$ to $\M$ and $\filty$.}
    Define $\tilde \P\triangleq \hat\P- \ubar Z$,  $\tilde \X \triangleq \hat \X-Z=\hat V^\top \X \hat V$, and $\tilde R \triangleq \hat R-\ubar \Z$, where the bar under a variable name denotes $\ubar Z=\begin{pmatrix} Z & 0 \\ 0 & 0 \end{pmatrix}$.
    Then, plugging~\eqref{eq:certificate} into~\eqref{eq:standard_LMI_form} yields
    \begin{align}\label{eq:ana_LMI_tilde}
        \symb{
            {\arraycolsep=1pt\begin{pmatrix}
                (1-c_1)\tilde{\ubar\X}-\tilde \P\\
                &c_1 \tilde{R} \\
                &&c_1 \M\\
                &&&c_2 I\\
                &&&&c_3 I
            \end{pmatrix}}\hspace{-0.1cm}
            \begin{pmatrix}I&0&0\\[0.3mm]
                \A{\allhatrho} & \B{\allhatrho}\p & \B{\allhatrho}\w \\[0.8mm]
                \C{\allhat}{\filty} & \D{\all}{{\filty}}\p & \D{\all}{{\filty}}\w \\[0.5mm]
                \C{\allhat}\z & \D\all\z\p & \D\all\z\w \\[0.5mm]
                0&0&I
            \end{pmatrix}
        }  \prec 0 
    \end{align}
    with $C_{\allhat}^\filty\triangleq \begin{pmatrix}\hat C_\filt & D_\filt^\q\C\GK\q\end{pmatrix}$.
    \emph{Step 2: Display unobservable modes.} Let $\hat V^\perp$ be an orthonormal basis of the kernel of $\hat V$,  i.e., $\hat V \hat V^\perp=0$ and $\hat V^{\perp\top}\hat V^\perp=I_{\nfilthat-\nfilt}$.
    Furthermore, let $\hat V ^\dagger \triangleq  \hat V^\top(\hat V \hat V^\top )^{-1}$ be the right inverse of $\hat V$.
    This implies $\hat V^{\perp\top}\hat V^\dagger = 0 $ and hence $\begin{pmatrix}\hat V^\perp & \hat V^{\dagger} \end{pmatrix}^{-1} = \begin{pmatrix}\hat V ^{\perp\top} \\ \hat V \end{pmatrix}$.
    Next, we do some preparatory computations to be able to multiply the matrix inequality~\eqref{eq:ana_LMI_tilde} from right by $T_1\triangleq\begin{pmatrix} \hat V^\perp & \hat V^\dagger & 0 \\ 0 & 0 & I\end{pmatrix}$ and from left by $T_1^\top$.
    Remember that $\hat V \hat A = \A\filt \hat V$ and $\hat C_\filt =C_\filt \hat V $, which yields $\hat V \hat A  \hat V^\dagger = A_\filt$, $\hat V \hat A  \hat V^\perp = 0$, $\hat C_\filt \hat V^\dagger = \CC\filt $, as well as $\hat C_\filt \hat V^\perp= 0$.
    Together with $ \hat V\hat B = B_\filt $ this yields
    \begin{align*}
    T_1^{-1}
        \blkmat{cc}{
            \hat A & \hat B \\
            \hat C_\filt  & \DD\filt 
        } T_1 = 
        \blkmat{c:c}{
            \hat V^{\perp\top} & 0 \\ 
            \hat V& 0   \newblkdash 
            0 & I 
        }
        \blkmat{c:c}{
            \hat A & \hat B \newblkdash
            \hat C_\filt  & \DD\filt 
        } 
        \blkmat{cc:c}{ 
            \hat V^\perp & \hat V^\dagger& 0 \newblkdash
            0 & 0 & I
        } &= \blkmat{cc:c}{
            A_\mathrm{u} &\bullet &\bullet \\
            0 & \A\filt & \BB\filt \newblkdash 
            0 & \CC\filt & \DD\filt
        } 
    \end{align*}
    where $A_\mathrm{u} \triangleq \hat V^{\perp\top} \hat A \hat V^\perp$ contains the unobservable modes.
    With the same transformation we can display these unobservable modes in $\allhatrho$ (remember that $\allhatrho$ is defined as $\allrho$ in \eqref{eq:all} but with $\hat A,\hat B,\hat C,\hat D$ instead of $\A\filt,\BB\filt,\CC\filt,\DD\filt$) 
    \begin{align*}
        \blkmat{cc}{
            T_1^{-1} & 0 \\ 0 & I 
        }
        \blkmat{c:c:c}{
            \A{\allhatrho} & \B{\allhatrho}\p & \B{\allhatrho}\w \newblkdash
            \C{\allhat}\filty &\D{ \all}\filty\p&\D{ \all}\filty\w\newblkdash
            \C{\allhat}\z & \D{\all}\z\p & \D{\all}\z\w 
        } 
        \blkmat{cc}{ 
            T_1 & 0 \\ 0 & I
        } &= \blkmat{cc:c:c}{
            A_\mathrm{u} &\bullet &\bullet&\bullet \\
            0 & \A\allrho & \B\allrho\p & \B\allrho \w \newblkdash 
            0 & \C\all\filty & \D\all\filty\p & \D\all\filty\w \newblkdash 
            0 & \C\all\z &\D{\all}\z\p & \D{\all}\z\w 
        }.
    \end{align*}
    Define $\bar P \triangleq T_1^{\top} \tilde P T_1$, $\bar R \triangleq T_1^\top \tilde R T_1$, and $\bar {\ubar \X}\triangleq T_1^{\top} \ubar{\tilde \X }T_1= \begin{pmatrix}
        0 & 0 \\ 0 & \ubar \X
    \end{pmatrix}$.
    Hence, multiplying the matrix inequality~\eqref{eq:ana_LMI_tilde} from right by $T_1$ and from left by $T_1^\top$ leads to
    \begin{align}\label{eq:ana_LMI_bar}
        \symb{
            {\arraycolsep=1pt\begin{pmatrix}
                (1-c_1)\bar{\ubar\X}-\bar \P\\
                &c_1 \bar{R} \\
                &&c_1 \M\\
                &&&c_2 I\\
                &&&&c_3 I
            \end{pmatrix}}\hspace{-0.1cm}
            \begin{pmatrix}
                I&0&0&0\\
                0&I &  0 & 0\newblkdash
                A_\mathrm{u} &\bullet &\bullet&\bullet \\
                0&\A{\allrho} & \B{\allrho}\p & \B{\allrho}\w \newblkdash
                0&C_\all^\filty & \D{\all}{\filty}\p & \D{\all}{\filty}\w \\
                0&\C{\all}\z & \D\all\z\p & \D\all\z\w \\
                0&0& 0&I
            \end{pmatrix}
        }  \prec 0.
    \end{align}
    \emph{Step 3: Remove unobservable modes.}
    Let  $\bar P = \begin{pmatrix}\bar P_{11} & \bar P_{12} \\ \bar P_{12}^\top & \bar P_{22} \end{pmatrix}$ be partitioned such that $\bar P_{11} \in \mathbb{S}^{\nfilthat-\nfilt}$.
    In \emph{(i)}, the LMI~\eqref{eq:stab_LMI1} holds, and in \emph{(ii)}, the LMI~\eqref{eq:ETE_LMI} holds.
    Recall that in both of these LMIs, we have $c_1=1$ and $\hat R=\hat \P$.
    Hence, the upper left block of~\eqref{eq:ana_LMI_bar} is $A_u^\top \bar P_{11} A_u -\bar P_{11}\prec 0$, which implies $\bar P_{11} \succ 0$ since $A_u$ is stable.
    Thus, for both \emph{(i)} and \emph{(ii)}, we can conclude $\bar P_{11} \succ 0$ and define 
    \begin{align}\label{eq:Pdef}
        \P\triangleq \bar \P_{22}-\bar \P_{12}^\top\bar P_{11}^{-1}\bar P_{12}\in\mathbb{S}^{n_\chi}
    \end{align} and $T_2 = 
            \begin{pmatrix} 
                I & -\bar P_{11}^{-1}\bar P_{12} \\ 
                0 & I 
            \end{pmatrix}$  yielding $T_2^\top
            \begin{pmatrix}
                \bar P_{11} & \bar P_{12} \\ 
                \bar P_{12}^\top & \bar P_{22} 
            \end{pmatrix} T_2
        = \diag(\bar P_{11} , P )  $, $T_2^\top  \bar{\ubar X} T_2 = \bar{\ubar X}=\diag(0,\ubar X)$, and thus also $T_2^\top \bar{R} T_2 = \diag(\bar R_{11}, R)$ as $\bar R\in\{\bar P, \bar{\ubar{X}}\}$. 
        Further, $
        T_2^{-1} \begin{pmatrix} A_\mathrm{u} & \bullet \\ 0 & \A\allrho \end{pmatrix}T_2 = \begin{pmatrix} A_\mathrm{u} & \bullet \\ 0 & \A\allrho \end{pmatrix}$, as well as $T_2^{-1} \begin{pmatrix} \bullet & \bullet \\ \B\allrho\p & \B\allrho\w \end{pmatrix} =  \begin{pmatrix} \bullet & \bullet \\ \B\allrho\p & \B\allrho\w \end{pmatrix}$.
    Therefore, when we multiply~\eqref{eq:ana_LMI_bar} from right with $\diag(T_2,I)$ and from left with its transpose, we obtain
    \begin{align}\label{eq:ana_LMI_trafo}
        \symb {
            \diagmat{-\bar \P_{11}\\&(1-c_1)\ubar X-\P \\ &&c_1 \bar R_{11} \\&&&c_1 R \\&&&&c_1 \M\\&&&&&c_2 I\\&&&&&&c_3 I}
            \begin{pmatrix}
                I&0&0&0\\
                0&I &  0 & 0\\
                A_\mathrm{u} &\bullet &\bullet&\bullet \\
                0&\A{\allrho} & \B{\allrho}\p & \B{\allrho}\w \\
                0&C_\all^\filty & \D{\all}{\filty}\p & \D{\all}{\filty}\w \\
                0&\C{\all}\z & \D\all\z\p & \D\all\z\w \\
                0&0& 0&I
            \end{pmatrix}
        }&\prec 0.
    \end{align}
    Since $\bar R_{11} \in\{0,\bar P_{11}\}$, we know $\bar R_{11}\succeq 0$ and thus, $\symbscalar{\bar R_{11} \begin{pmatrix}\bullet\end{pmatrix}}\succeq 0$.
    Hence, multiplying~\eqref{eq:ana_LMI_trafo} from left by $\begin{pmatrix}0 & I_{\nfilt+\nx+\nK+\np+\nw}\end{pmatrix}$ and from right by its transpose yields~\eqref{eq:standard_LMI_form} without $ \hat\ $.
    Finally, we note that $\P-\ubar \X = \begin{pmatrix}
        0&I
    \end{pmatrix}T_2^\top T_1^\top (\hat \P-\hat {\ubar X}) T_1 T_2\begin{pmatrix}
        0&I
    \end{pmatrix}^\top \succ 0 $.
    This concludes the proof of the if-part of \emph{(i)} and \emph{(ii)}.
    
    To show the only-if-part, we start with $\P$ solving $\P-\ubar X\succ 0$ as well as~\eqref{eq:stab_LMI1},~\eqref{eq:ana_LMI2},~\eqref{eq:M1M2_restriction} or~\eqref{eq:ETE_LMI}.
    Further, choose $W\succ 0$ such that $W-A_\mathrm{u}^\top WA_\mathrm{u} \prec 0$, which exists since $A_\mathrm{u}$ is stable. 
    Then, for some $\varepsilon>0$ let $\bar P_{11} = \varepsilon W$, $\bar P_{12}=0$, $\bar P_{22} = P$, and $\bar \X = \diag(0,\X)$.
    Since~\eqref{eq:stab_LMI1} and~\eqref{eq:ETE_LMI} hold strict, we can find $\varepsilon$ small enough such that~\eqref{eq:ana_LMI_bar} holds with $c_1=1$, $\bar R=\bar\P$ and $c_2=0$, $c_3=-\mu$, or respectively $c_2=\frac{1}{\gamma}$, $c_3=-\gamma$.
    Furthermore, LMI~\eqref{eq:ana_LMI2} restricted to~\eqref{eq:M1M2_restriction} and $\bar P_{11}=\varepsilon W\succ 0$ directly imply~\eqref{eq:ana_LMI_bar} with $\bar R_{11}=0$, $c_1=\sigma$, $R=\ubar X$, and $c_2 = \frac{\alpha}{\gamma}$, $c_3 = -\alpha(\gamma-\beta)$.
    Since we obtained~\eqref{eq:ana_LMI_bar} through equivalence transformations from~\eqref{eq:standard_LMI_form}, we can undo these transformations and obtain 
    \begin{align}\label{eq:Phat}
        \hat \P = \ubar Z +T_1^{-\top} \begin{pmatrix}
            \varepsilon W & 0 \\ 0 & P
        \end{pmatrix} T_1^{-1} .
    \end{align}
    Finally, note that $\hat \P - \hat {\ubar X} = T_1^{-\top}\diag(\varepsilon W, P-\ubar X )T_1^{-1} \succ 0 $, which concludes the only-if-part of \emph{(i)} and \emph{(ii)}.
\end{proof}
This constructive proof also enables computing suitable warm-starts for the alternating synthesis and analysis steps of our algorithm in Section~\ref{sec:algo}, which is crucial to show monotonicity of the iteration.

\subsection{Convexifying transformation of the decision variables}\label{sec:trafo}
We start this subsection by defining the transformed system and the convex synthesis LMIs before showing in Theorem~\ref{thm:syn} how to reconstruct the controller $\K$ in the original variables from a solution of these LMIs.
In particular, we take the open loop of $\allhatrho$ and invert $\hat \filt_{22}$ to obtain $\genplant$ given by 
\begin{align}\label{eq:Theta}
    &\blkmat{c:c:c:c}{
        \A{\genplantrho}& \B{\genplantrho}{\hat\filty_2} & \B{\genplantrho}\w & \B{\genplantrho}\u \newblkdash 
        \C\genplant{\hat\filty_1}& \D\genplant{\hat\filty_1}{\hat\filty_2} & \D\genplant{\hat\filty_1}\w & \D\genplant{\hat\filty_1}\u\newblkdash
        \C\genplant{\z}& \D\genplant\z{\hat\filty_2} & \D\genplant\z\w &\D\genplant\z\u\newblkdash
        \C\genplant{\y}& \D\genplant\y{\hat\filty_2} & \D\genplant\y\w &0
    }
    \triangleq\blkmat{ccc:c:c:c}{
        \hat A_1 & -\hat B_1 \D\G\q\p \hat D_{22}^{-1} \hat C_{22} & \hat B_1 \C\G\q & \hat B_1 \D\G\q\p\hat D_{22}^{-1} &\hat B_1 \D\G\q\w & \hat B_1 \D\G\q\u \\ 
        0 & \hat A_2-\hat B_2\hat D_{22}^{-1} \hat C_{22} & 0 & \hat B_2 \hat D_{22}^{-1} & 0 & 0 \\
        0 &-\B\Grho\p \hat D_{22}^{-1} \hat C_{22} & \A\Grho & \B\Grho\p \hat D_{22}^{-1} & \B\Grho\w & \B\Grho\u \newblkdash
        \hat C_{11} & \hat C_{12}-(\hat D_{12}+ \hat D_{11} \D\G\q\p)\hat D_{22}^{-1} \hat C_{22} & \hat D_{11}\C\G\q & (\hat D_{12}+ \hat D_{11} \D\G\q\p)\hat D_{22}^{-1} & \hat D_{11}\D\G\q\w & \hat D_{11} \D\G\q\u \newblkdash
        0 & -\D\G\z\p\hat D_{22}^{-1} \hat C_{22} & \C\G\z & \D\G\z\p\hat D_{22}^{-1} & \D\G\z\w & \D\G\z\u\newblkdash
        0 & -\D\G\y\p\hat D_{22}^{-1} \hat C_{22} & \C\G\y & \D\G\y\p\hat D_{22}^{-1} & \D\G\y\w & 0
    }.
\end{align}
The controller to be synthesized is of order $\nK = \nfilthat + \nx$.
To arrive at convex synthesis inequalities we transform the decision variables $(\P,\A\K,\BB\K,\CC\K,\DD\K)$ as shown in \cite[Section~4.2.2]{Scherer2005}.
The new variables $\Ysyn,\Xsyn\in\S^\nK$, $\Ksyn\in\R^{\nK\times \nK},\Lsyn\in\R^{\nK\times \ny},\Msyn\in\R^{\nu\times \nK},\Nsyn\in\R^{\nu\times \ny}$ are defined by
\begin{align}\label{eq:P11U}
    \begin{pmatrix}\Xsyn & \Usyn \end{pmatrix} &\triangleq \begin{pmatrix}I_{\nK}& 0_\nK \end{pmatrix} \hat\P, \\\label{eq:P22V}
    \begin{pmatrix}\Ysyn & \Vsyn \end{pmatrix} &\triangleq \begin{pmatrix}I_{\nK}& 0_\nK \end{pmatrix} \hat\P^{-1},\\\label{eq:KLMN}
    \begin{pmatrix} \Ksyn  & \Lsyn \\ \Msyn & \Nsyn\end{pmatrix} 
    &\triangleq \begin{pmatrix}\Usyn & \Xsyn\B{\genplantrho}\u \\ 0&I \end{pmatrix}
    \begin{pmatrix} \A{\K_\rho} & \B{\K_\rho}{} \\ \C{\K}{} & \D{\K}{}{} \end{pmatrix}
    \begin{pmatrix}\Vsyn^\top & 0 \\ \C\genplant\y \Ysyn & I\end{pmatrix} 
    + \begin{pmatrix} \Xsyn \A{\genplantrho} \Ysyn & 0 \\ 0 & 0 \end{pmatrix}
\end{align}
and appear linearly in the transformed closed loop
\begin{align}
    \blkmat{c:c}{\Asyn & \Bsyn i \newblkdash \Csyn {j\vphantom{j^j}} & \Dsyn ji} 
    &\triangleq \blkmat{cc:c}{
        \A{\genplantrho}\Ysyn+\B{\genplantrho}\u\Msyn & \A{\genplantrho}+\B{\genplantrho}\u\Nsyn\C\genplant\y & \B{\genplantrho} i+\B{\genplantrho}\u\Nsyn\D\genplant\y i\\ 
        \Ksyn & \Xsyn\A{\genplantrho}+\Lsyn\C\genplant\y& \Xsyn\B{\genplantrho} i +\Lsyn \D\genplant\y i \newblkdash
        \C\genplant{j}\Ysyn+\D\genplant{j}\u\Msyn & \C\genplant{j} +\D\genplant{j}\u\Nsyn\C\genplant\y & \D\genplant{j}i +\D\genplant{j}\u\Nsyn\D\genplant\y i
    }\label{eq:Asyn}
\end{align}
for $i\in\{\hat \filty_2,\w\}$ and $j\in\{\hat\filty_1,\z\}$.
The variables $\mu$ and $\gamma$ remain untransformed.
Further, define $\Psyn \triangleq \begin{pmatrix}\Ysyn & I \\ I& \Xsyn \end{pmatrix}$ and $\blkmat{c:c:c:c}{\mathcal E_1 & \mathcal E_2 & \mathcal E_3 & \mathcal E_4} \triangleq \begin{pmatrix} I_{\nfilthat} & 0
\end{pmatrix}\blkmat{c:c:c:c}{\Psyn & \Asyn & \Bsyn{\hat \filty_2} & \Bsyn\w}$ in which the decision variables also appear linearly.
Given a solution $\P^\mathrm{old},\mu^\mathrm{old},\gamma^\mathrm{old}$ of the analysis LMIs for some controller $\K^\mathrm{old}$, we can initialize the synthesis variables with a feasible warm start $(\Psyn^\mathrm{old},\Ksyn^\mathrm{old},\Lsyn^\mathrm{old},\Msyn^\mathrm{old},\Nsyn^\mathrm{old},\mu^\mathrm{old},\gamma^\mathrm{old})$ computed by~\eqref{eq:Phat},~\eqref{eq:P11U},~\eqref{eq:P22V},~\eqref{eq:KLMN}.
Based on this warm start, we define $\mathcal E_1^\mathrm{old}$, $\mathcal E_2^\mathrm{old}$, $\mathcal E_3^\mathrm{old}$, $\mathcal E_4^\mathrm{old}$ analogous to $\mathcal E_1$, $\mathcal E_2$, $\mathcal E_3$, $\mathcal E_4$, which are used for a convex relaxation that is needed if $\hat \X$ is not positive semi-definite.
In particular, decompose 
\begin{align}\label{eq:hatX_decomposition}
    \hat \X = L_1^\top L_1 - L_2^\top L_2
\end{align}
with $\Zdtwo = L_2^\top L_2$.
Now, we are ready to state the following synthesis LMIs for $\sigma\in(0,1)$ (the case $\sigma\in\{0,1\}$ is discussed in Remark~\ref{rem:rho_opt})
\begin{align}
    \begin{pmatrix}
        \Psyn+ \symb{\begin{pmatrix}
         -  \Zdtwo &  \Zdtwo\\ \Zdtwo & 0
    \end{pmatrix} \begin{pmatrix}
        \mathcal E_{1}^\mathrm{old}\\
        \mathcal E_{1} 
    \end{pmatrix}}   & \star \\ L_1\mathcal E_{1} & I
    \end{pmatrix}&\succ 0 \label{eq:syn_LMI_PX_pos_def}\\
        \begin{pmatrix}- \Psyn & 0 & 0 & \star & \star & \star  \\ 
        0 & -I & 0 & \star & \star & \star \\ 0&0&-\gamma I&\star & \star & \star \\ 
        \Asyn & \Bsyn{\hat \filty_2} &\Bsyn\w & -\Psyn & 0 & 0\\ \Csyn\filtyone & \Dsyn\filtyone{\hat \filty_2} & \Dsyn\filtyone\w & 0 & -I & 0\\ 
        \Csyn\z & \Dsyn\z{\hat \filty_2} & \Dsyn\z\w & 0 & 0 & -\gamma I
        \end{pmatrix}&\prec 0\label{eq:syn_LMI_e2e}\\\label{eq:syn_LMI_stability}
    \begin{pmatrix}- \Psyn & 0 & 0 & \star & \star \\ 0 & -I & 0 & \star & \star \\ 0&0&-\mu I&\star & \star \\ \Asyn & \Bsyn{\hat \filty_2} &\Bsyn\w & -\Psyn & 0 \\ \Csyn\filtyone & \Dsyn\filtyone{\hat \filty_2} & \Dsyn\filtyone\w & 0 & -I \end{pmatrix}&\prec 0 \\ 
            \begin{pmatrix}
                -\Psyn &0 & 0 & \star & \star & \star &\star \\
                0&-\sigma I &0 & 0 & \star & \star & \star\\
                0&0&-\alpha(\gamma-\beta) I &0 & \star & \star & \star \\             
                L_1\mathcal{E}_{1} & 0 & 0 & -\frac{1}{1-\sigma} I &0 & 0 & 0\\
                L_1\mathcal{E}_{2} & L_1\mathcal E_{3} & L_1\mathcal{E}_{4} &0& -\frac 1 \sigma I &0 & 0\\
                \Csyn{\hat\filty_1} & \Dsyn{\hat\filty_1}{\hat \filty_2} & \Dsyn{\hat\filty_1}\w &0&0& -\frac 1 \sigma I & 0 \\
                \Csyn\z & \Dsyn\z{\hat \filty_2} & \Dsyn\z\w &0&0&0& -\frac{\gamma}{\alpha} I
            \end{pmatrix}+
        \symb{
        \begin{pmatrix}
            (1-\sigma) \Zdtwo &0& \star & 0 \\ 
            0 & \sigma  \Zdtwo &0& \star \\  
            -(1-\sigma) \Zdtwo & 0 & 0 & 0\\
            0 & -\sigma  \Zdtwo &0 & 0 \\
        \end{pmatrix}
            \begin{pmatrix}
                \mathcal{E}_{1}^\mathrm{old} & 0 & 0 &0 \\
                \mathcal{E}_{2}^\mathrm{old} & \mathcal E_{3}^\mathrm{old} & \mathcal{E}_{4}^\mathrm{old} &0\\
                \mathcal{E}_{1} & 0 & 0 & 0\\
                \mathcal{E}_{2} & \mathcal E_{3} & \mathcal{E}_{4} & 0 \\
            \end{pmatrix}
        } &\prec 0.
     \label{eq:syn_LMI_e2p_p2p}
\end{align}
Note that these inequalities are linear in the decision variables $\Xsyn$, $\Ysyn$, $\Ksyn$, $\Lsyn$, $\Msyn$, $\Nsyn$, $\gamma$, and $\mu$.
The following theorem shows that the synthesis LMIs indeed provide a solution for the analysis LMIs and vice versa.
\begin{theorem}[Synthesis]\label{thm:syn}
    Let $\sigma\in(0,1)$.
    
    \begin{enumerate}[(i)]
        \item Suppose the $\mathcal H_\infty$-analysis LMIs~\eqref{eq:stab_LMI2},~\eqref{eq:ETE_LMI} [or energy- and peak-to-peak analysis LMIs~\eqref{eq:stab_LMI1},~\eqref{eq:stab_LMI2},~\eqref{eq:ana_LMI2} with~\eqref{eq:M1M2_restriction}] hold with $(\K,\P)=(\K^\mathrm{old},\P^\mathrm{old})$.
        Then, the $\mathcal H_\infty$-synthesis LMIs~\eqref{eq:syn_LMI_PX_pos_def},~\eqref{eq:syn_LMI_e2e} [or energy- and peak-to-peak synthesis LMIs~\eqref{eq:syn_LMI_PX_pos_def},~\eqref{eq:syn_LMI_stability},~\eqref{eq:syn_LMI_e2p_p2p}] hold with $(\Psyn,\Ksyn,\Lsyn,\Msyn,\Nsyn)=(\Psyn^\mathrm{old}, \Ksyn^\mathrm{old},\Lsyn^\mathrm{old},\Msyn^\mathrm{old},\Nsyn^\mathrm{old})$.
        \item Suppose the $\mathcal H_\infty$-synthesis LMIs~\eqref{eq:syn_LMI_PX_pos_def},~\eqref{eq:syn_LMI_e2e} [or energy- and peak-to-peak synthesis LMIs~\eqref{eq:syn_LMI_PX_pos_def},~\eqref{eq:syn_LMI_stability},~\eqref{eq:syn_LMI_e2p_p2p}] hold. 
        Then, the $\mathcal H_\infty$-analysis LMIs\eqref{eq:stab_LMI2},~\eqref{eq:ETE_LMI} [or energy- and peak-to-peak analysis LMIs~\eqref{eq:stab_LMI1},~\eqref{eq:stab_LMI2},~\eqref{eq:ana_LMI2} with~\eqref{eq:M1M2_restriction}] hold for $\P$ defined in~\eqref{eq:Pdef} based on $\hat \P$ and for $\K$ and $\hat \P$ constructed by finding square non-singular matrices $\Vsyn$ and $\Usyn$ satisfying $I-\Xsyn\Ysyn = \Usyn \Vsyn^\top$ and computing
    \begin{align}\label{eq:Phat_trafo}
        \hat\P&= \begin{pmatrix}\Ysyn & \Vsyn \\ I & 0\end{pmatrix}^{-1} \begin{pmatrix} I & 0 \\ \Xsyn & \Usyn\end{pmatrix} \\ 
        \begin{pmatrix} \A{\K_\rho} & \B{\K_\rho}{} \\ \C\K{} & \D\K{}{} \end{pmatrix} &=
        \begin{pmatrix}\Usyn & \Xsyn\B{\genplantrho}\u \\ 0&I \end{pmatrix}^{-1} \begin{pmatrix} \Ksyn - \Xsyn \A{\genplantrho}\Ysyn & \Lsyn \\ \Msyn & \Nsyn \end{pmatrix} \begin{pmatrix}\Vsyn^\top & 0 \\ \C\genplant\y \Ysyn & I\end{pmatrix}^{-1}.\label{eq:Kdef}
    \end{align}
    \end{enumerate}
\end{theorem}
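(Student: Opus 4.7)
The plan is to chain together three equivalences: first between the original analysis LMIs and the transformed ones involving $\hat\filt$ (Theorem~\ref{thm:warm_start}); second between these and analysis LMIs for the open-loop plant $\genplant$ obtained by inverting $\hat\filt_{22}$; and third between the latter and the synthesis LMIs via the standard Scherer congruence transformation of \cite[Section~4.2.2]{Scherer2005}. Throughout, the indefiniteness of $\hat\X$ is handled by the decomposition $\hat\X=L_1^\top L_1-L_2^\top L_2$ and a linearization of the concave part around the warm-start point.

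For the first reduction, given $(\K^\mathrm{old},\P^\mathrm{old})$ solving the analysis LMIs, Theorem~\ref{thm:warm_start} together with~\eqref{eq:Phat} yields $\hat\P^\mathrm{old}$ satisfying the $\hat{\ }$-analysis LMIs (and conversely~\eqref{eq:Pdef} recovers $\P$ from $\hat\P$). For the second reduction, note that the closed loop in the uncertainty channel uses $\hat\filt_{22}$, which is invertible with stable inverse by Theorem~\ref{thm:fact}. Using the substitution $\hat\filty_2 = \hat\filt_{22}\bar\p$, i.e.\ treating $\hat\filty_2$ as a free input and recovering $\bar\p = \hat\filt_{22}^{-1}\hat\filty_2$, the augmented closed loop $\allhatrho\star\K$ can be rewritten as the feedback interconnection of $\genplantrho$ in~\eqref{eq:Theta} with $\K$. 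A direct state-space computation shows that the resulting closed-loop matrices coincide with $(\Asyn,\Bsyn{i},\Csyn{j},\Dsyn{j}{i})$ from~\eqref{eq:Asyn} after applying the change of variables~\eqref{eq:P11U}--\eqref{eq:KLMN}, and that $\Psyn=(\mathrm{block~decomposition~of~})\hat\P$ under this change of variables and a congruence with $\mathrm{diag}(Y,I)$-type matrices, with $\Psyn\succ 0$ equivalent to $\hat\P\succ 0$.

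The third reduction is the standard congruence argument: multiplying the analysis LMIs for $\genplantrho\star\K$ from both sides by the appropriate invertible matrix built from $\mathrm{diag}(\Ysyn,I)$ and $\Vsyn$ converts quadratic expressions in $(\A\K,\BB\K,\CC\K,\DD\K,\hat\P)$ into expressions linear in $(\Xsyn,\Ysyn,\Ksyn,\Lsyn,\Msyn,\Nsyn)$. For the $\Hinf$ and stability LMIs~\eqref{eq:syn_LMI_e2e},~\eqref{eq:syn_LMI_stability} this directly yields the synthesis LMIs via Schur complements on the blocks with $-\Psyn$, $-I$, $-\gamma I$, $-\mu I$. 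For the energy- and peak-to-peak LMI~\eqref{eq:ana_LMI2}, the indefinite part of $\hat\X$ is handled by writing the $\hat{\ubar\X}$ blocks as $L_1\mathcal{E}_1$ and $L_1\mathcal{E}_2$ terms plus the remaining $-\Zdtwo$ contributions; the latter enter quadratically as $-\mathcal E^\top\Zdtwo\mathcal E$, and the identity $-\mathcal E^\top Z\mathcal E \preceq -(\mathcal E^\mathrm{old})^\top Z\mathcal E^\mathrm{old}+(\mathcal E^\mathrm{old})^\top Z\mathcal E+\mathcal E^\top Z\mathcal E^\mathrm{old}$ (with equality at $\mathcal E=\mathcal E^\mathrm{old}$) gives exactly the $\symb{\cdot}$ linearization terms appearing in~\eqref{eq:syn_LMI_PX_pos_def} and~\eqref{eq:syn_LMI_e2p_p2p}. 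This tightness at the warm start proves direction~(i), while for direction~(ii) the inequality is a convex relaxation: the linearization upper-bounds the concave function, so satisfying the synthesis LMI implies satisfying the analysis LMI after the inverse transformation.

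The main obstacle will be carefully matching the block structure throughout the chain of transformations, in particular ensuring that the open-loop-inversion step correctly propagates $\hat\P\succ 0$ and $\hat\P-\hat{\ubar\X}\succ 0$ into $\Psyn\succ 0$ and the positivity encoded by~\eqref{eq:syn_LMI_PX_pos_def}, and verifying that the construction~\eqref{eq:Phat_trafo} produces $\hat\P\succ 0$ whenever $\Psyn\succ 0$ (which requires $I-\Xsyn\Ysyn$ to admit the factorization $\Usyn\Vsyn^\top$ with $\Usyn,\Vsyn$ square nonsingular; this follows from $\Psyn\succ 0$ by a Schur complement argument on the $(2,2)$ block). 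Once these bookkeeping issues are handled, both directions follow by simply reading off the equivalent LMIs.
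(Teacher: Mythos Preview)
Your sketch is correct and follows essentially the same route as the paper: reduce to the $\hat{\ }$-analysis LMIs via Theorem~\ref{thm:warm_start}, invert $\hat\filt_{22}$ to pass to the generalized plant $\genplantrho$ (the paper does this via the explicit congruence with $T$ in~\eqref{eq:trafo_inv}), apply the standard Scherer change of variables from \cite[Section~4.2.2]{Scherer2005}, and handle the indefinite $\hat\X$ by the convex relaxation of Lemma~\ref{lem:convex_relaxation}. One small slip: your linearization identity has the signs reversed; the correct inequality is $-\mathcal E^\top \Zdtwo \mathcal E \preceq (\mathcal E^{\mathrm{old}})^\top \Zdtwo \mathcal E^{\mathrm{old}} - (\mathcal E^{\mathrm{old}})^\top \Zdtwo \mathcal E - \mathcal E^\top \Zdtwo \mathcal E^{\mathrm{old}}$ (cf.\ Lemma~\ref{lem:convex_relaxation}), and the paper also makes explicit the stability argument showing $\hat\P\succ 0$ in direction~(i) before the congruence is applied.
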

\begin{proof}
    Due to Theorem~\ref{thm:warm_start} we can work equivalently with the $\hat\ $ versions of the analysis LMIs which have the unified form~\eqref{eq:standard_LMI_form}.
    First, due to the special structure of $\hat \filt$ and $\hat \M$ from Theorem~\ref{thm:fact} we know that $\hat \Sigma_\rho$ has the following structure and can be transformed by $T$ 
    \begin{align}
        \blkmat{c:c:c}{
            \A{\allhatrho} & \B{\allhatrho}\p & \B{\allhatrho}\w \newblkdash[2.5ex] 
            \C{\allhat}{\hat\filty} & \D{\allhat}{{\hat\filty}}\p & \D{\allhat}{{\hat\filty}}\w \newblkdash
            \C{\hat\all}\z & \D{\all}\z\p & \D{\all}\z\w 
        } &= \left(\begin{array}{ccc:c:c}
            \hat A_1 & 0 & \hat B_1 \C\GK\q & \hat B_1 \D\GK\q\p &\hat B_1 \D\GK\q\w\\ 
            0 & \hat A_2 & 0 & \hat B_2 & 0 \\
            0 &0 & \A\GKrho & \B\GKrho\p & \B\GKrho\w \newblkdash
            \hat C_{11} & \hat C_{12} & \hat D_{11}\C\GK\q & \D{\hat \all}{\hat\filty_1} \p& \hat D_{11}\D\GK\q\w \\
            0 & \hat C_{22} & 0 & \hat D_{22} & 0 \newblkdash
            0 & 0 & \C\GK\z & \D\GK\z\p & \D\GK\z\w
        \end{array}\right), \qquad \label{eq:trafo_inv}
        T\triangleq \begin{pmatrix}I_{\nfiltone} &0 &0 & 0 & 0 \\ 0 & I_{\nfilttwohat} & 0 & 0& 0\\ 0& 0& I_{\nx+\nK} & 0 & 0 \\  0 & -\hat D_{22}^{-1} \hat C_{22} & 0 & \hat D_{22}^{-1} & 0 \\ 0 & 0 & 0& 0& I_{\nw}\end{pmatrix}
    \end{align}
    with $\D{\hat \all}{\hat{\filty}_1} \p\triangleq \hat D_{12} + \hat D_{11} \D\GK\q\p$ to
    \begin{align}\label{eq:trafo_inv_sys}
        \blkmat{c:c:c}{
            \A{\allinv_\rho} & \B{\allinv_\rho}{\hat\filty_2} & \B{\allinv_\rho}\w \newblkdash[2.5ex]
            \C{\allinv}{\hat\filty_1} & \D{\allinv}{\hat\filty_1}{\hat\filty_2} & \D{\allinv}{\hat\filty_1}\w\\
            0& I& 0 \newblkdash
            \C{\allinv}\z & \D{\allinv}\z{\hat\filty_2} & \D{\allinv}\z\w 
        }
        \triangleq
        \blkmat{c:c:c}{
            \A{\allhatrho} & \B{\allhatrho}\p & \B{\allhatrho}\w \newblkdash[2.5ex]
            \C{\allhat}{\hat\filty} & \D{\allhat}{{\hat\filty}}\p & \D{\allhat}{{\hat\filty}}\w \newblkdash
            \C{\hat\all}\z & \D{\all}\z\p & \D{\all}\z\w 
        } T 
        =\blkmat{ccc:c:c}{
            \hat A_1 & -\hat B_1 \D\GK\q\p \hat D_{22}^{-1} \hat C_{22} & \hat B_1 \C\GK\q & \hat B_1 \D\GK\q\p\hat D_{22}^{-1} &\hat B_1 \D\GK\q\w\\[0.02cm]
            0 & \hat A_2-\hat B_2\hat D_{22}^{-1} \hat C_{22} & 0 & \hat B_2 \hat D_{22}^{-1} & 0 \\[0.02cm]
            0 &-\B\GKrho\p \hat D_{22}^{-1} \hat C_{22} & \A\GKrho & \B\GKrho\p \hat D_{22}^{-1} & \B\GKrho\w \newblkdash[2.5ex]
            \hat C_{11} & \hat C_{12}-\D{\hat\all}{\hat\filty_1}\p \hat D_{22}^{-1} \hat C_{22} & \hat D_{11}\C\GK\q &\D{\hat\all}{\hat\filty_1}\p \hat D_{22}^{-1} & \hat D_{11}\D\GK\q\w \\
            0 & 0 & 0 & I_{\np} & 0 \newblkdash
            0 & -\D\GK\z\p\hat D_{22}^{-1} \hat C_{22} & \C\GK\z & \D\GK\z\p\hat D_{22}^{-1} & \D\GK\z\w 
        }.
    \end{align}
    Hence, inequality~\eqref{eq:standard_LMI_form} is equivalent to $T^\top \eqref{eq:standard_LMI_form} T$ which can with $\hat \M = \diag(I_\nq,-I_\np)$ be stated as
    \begin{align}\label{eq:lmi_gen_plant}
        \symb{{\arraycolsep=1pt
            \begin{pmatrix}
                (1-c_1)\hat{\ubar\X}-\hat \P\\
                &-c_1 I \\
                &&c_3 I \\
                &&&c_1 \hat R \\
                &&&& c_1 I\\
                &&&&&c_2 I\\
            \end{pmatrix}
            }\hspace{-0.1cm}
            \begin{pmatrix}
                I & 0 & 0\\
                0 & I & 0\\
                0 & 0 & I\\
                \A{\allinvrho} & \B{\allinvrho}{\hat\filty_2} & \B{\allinvrho}\w \\
                \C{\allinv}{\hat\filty_1} & \D{\allinv}{\hat\filty_1}{\hat\filty_2} & \D{\allinv}{\hat\filty_1}\w \\
                \C{\allinv}\z & \D{\allinv}\z{\hat\filty_2} & \D{\allinv}\z\w 
            \end{pmatrix}
        }\prec 0.
    \end{align}
    In the case of~\eqref{eq:stab_LMI1} or~\eqref{eq:ETE_LMI} where $c_1=1$, $c_2\geq 0$, and $\hat R=\hat \P$ we immediately obtain from the upper left block of~\eqref{eq:lmi_gen_plant} that $-\hat\P + \A{\allinvrho}^\top \hat\P\A{\allinvrho} \prec - \C{\allinv}{\hat\filty_1\top}  \C{\allinv}{\hat\filty_1} - c_2  \C{\allinv}{z\top} \C{\allinv}{\hat\filty_1} \preceq 0$.
    Hence, $\hat\P\succ 0$ if and only if $\A{\allinvrho}$ is Schur stable.
    Theorem~\ref{thm:e2e} and~\ref{thm:p2p} show that under the assumptions of statement (i) $\GK$ is $\ell_{2,\rho}$-stable and thus $\A\GKrho$ is Schur stable.
    Since also $\hat A_1$ as well as $\hat A_2-\hat B_2 \hat D_{22}^{-1}\hat C_{22}$ are Schur stable due to Theorem~\ref{thm:fact}, we conclude that $\hat\P\succ 0$.
    It is straightforward to verify $\Omega_\rho = \genplantrho\star\K_\rho$, and thus we know from~\cite[Section~4.2.2]{Scherer2005} that the transformation~\eqref{eq:P11U},~\eqref{eq:P22V},~\eqref{eq:KLMN} guarantees 
    \begin{align}\label{eq:Pscript}
      \Psyn &= \Ysyntrafo^\top \hat \P \Ysyntrafo\\ \label{eq:ABCDscript}
      \begin{pmatrix}\Asyn &\Bsyn i\\ \Csyn j & \Dsyn ji\end{pmatrix} &= \begin{pmatrix} \Ysyntrafo^\top \hat \P & 0 \\ 0 & I\end{pmatrix} \begin{pmatrix}\A\allinvrho & \B\allinvrho i \\ \C\allinv j & \D\allinv ji\end{pmatrix} \begin{pmatrix} \Ysyntrafo & 0 \\ 0 & I\end{pmatrix}
    \end{align}
    with $\Ysyntrafo=\begin{pmatrix}\Ysyn & I\\ \Vsyn^\top & 0 \end{pmatrix}$.
    Thus, due to~\eqref{eq:Pscript} we have $\Psyn\succ 0$ and due to~\eqref{eq:ABCDscript} we have
    \begin{align}\label{eq:help1}
      &\!\!\begin{pmatrix}I & 0\\
        0 & I \\
        \A{\allinv_\rho} & \B{\allinv_\rho}i \\
        \C{\allinv}{j} & \D{\allinv}ji\\
      \end{pmatrix} \!\diagmat{\Ysyntrafo \\&I} = \begin{pmatrix} \Ysyntrafo \\& I \\ &&\hat\P^{-1}\Ysyntrafo^{-\top}\!\!\!\! \\&&& I \end{pmatrix}
      \!\begin{pmatrix} I&0\\ 0&I \\ 
        \Asyn & \Bsyn i \\ \Csyn j & \Dsyn ji 
      \end{pmatrix}\hspace{-1cm}&
    \end{align}
    for $i\in\{\hat \filty_2,\w\}$ and $j\in\{\hat\filty_1,\z\}$.
    Moreover,~\eqref{eq:P11U},~\eqref{eq:P22V} imply $\hat \P \Ysyntrafo = \begin{pmatrix} I&0\\ \Xsyn & \Usyn \end{pmatrix}^\top$ and thus, we have
    \begin{align*}
      \Ysyntrafo^{-1} &\hat\P^{-1}\hat{\ubar X} \hat \P^{-1} \Ysyntrafo^{-\top} = \symb{
      \blkmat{cc:c}{\hat X&&\\
      &0_{\nx}&\newblkdash 
      \ &&0_{\nK}} \blkmat{c:c}{ I & 0 \newblkdash \Xsyn & \Usyn}^{-1}} = \diagmat{\hat X\\&0_{\nx+\nK}} = \ubar{\hat X}.
    \end{align*}
    Hence, $\mathcal R\triangleq\Ysyntrafo^{-1} \hat\P^{-1}\hat{R} \hat \P^{-1} \Ysyntrafo^{-\top} \in \{\mathcal P^{-1}, \hat{\ubar X}\}$ as $\hat R\in\{\hat\P, \hat{\ubar X}\}$.
    Now we are prepared to multiply~\eqref{eq:lmi_gen_plant} from right by $\diag(\Ysyntrafo, I)$ and from left by its transpose which results with~\eqref{eq:Pscript} and~\eqref{eq:help1} in
    \begin{align}\label{eq:lmi_h3}
            \symb{{\arraycolsep=1pt
                \begin{pmatrix}
                    -\Psyn\\
                    &-c_1 I \\
                    &&c_3 I \\
                    &&&(1-c_1)\hat{\ubar\X}\\
                    &&&&c_1 \mathcal R \\
                    &&&&& c_1 I\\
                    &&&&&& c_2 I\\
                \end{pmatrix}
                }\hspace{-0.1cm}
                \begin{pmatrix}
                    I & 0 & 0\\
                    0 & I & 0\\
                    0 & 0 & I\\
                    \Ysyntrafo 
                    & 0 & 0 \\
                    \Asyn & \Bsyn{\hat \filty_2} & \Bsyn\w \\
                    \Csyn{\hat\filty_1} & \Dsyn{\hat\filty_1}{\hat \filty_2} & \Dsyn{\hat\filty_1}\w \\
                    \Csyn\z & \Dsyn\z{\hat \filty_2} & \Dsyn\z\w
                \end{pmatrix}
            } &\prec 0.
    \end{align}
    If $c_1=1$ and $\mathcal R=\mathcal P^{-1}$, as is the case for~\eqref{eq:stab_LMI1} (with $c_2=0$, $c_3=-\mu$) and~\eqref{eq:ETE_LMI} (with $c_2=\frac 1 \gamma$, $c_3=-\gamma$), then it is straightforward to linearize~\eqref{eq:lmi_h3} with a Schur complement such that we obtain~\eqref{eq:syn_LMI_stability},~\eqref{eq:syn_LMI_e2e}.
    However, for \eqref{eq:ana_LMI2} we have $c_1=\sigma$, $\mathcal R=\hat{\ubar X}$, $c_2=\frac{\alpha}{\gamma}$ and $c_3=-\alpha(\gamma-\beta)$, such that for $\hat X$ that are not positive semi-definite we cannot execute the Schur complement. 
    Therefore, we need to work with the convex relaxation provided by Lemma~\ref{lem:convex_relaxation} in the Appendix~\ref{appendix}.
    Note that $\Ysyntrafo^\top \ubar{\hat{\X}} \Ysyntrafo = \mathcal E_{1} \hat \X \mathcal E_1 $ and $\symbscalar{\ubar{\hat X} \begin{pmatrix} \Asyn & \Bsyn{\hat \filty_2} & \Bsyn\w \end{pmatrix}} = \symbscalar{\hat X \begin{pmatrix} \mathcal E_{2} & \mathcal E_2 & \mathcal E_3 \end{pmatrix} }$.
    Hence, we use Lemma~\ref{lem:convex_relaxation} with $\hat \X=\Zdone-\Zdtwo$, $\Zdone = L^\top_1 L_1$, $\ZX = \begin{pmatrix} \mathcal E_{2}&\mathcal E_{3}&\mathcal E_{4}\end{pmatrix}$, and $\ZY = \begin{pmatrix} \mathcal E_{2}^\mathrm{old}&\mathcal E_{3}^\mathrm{old}&\mathcal E_{4}^\mathrm{old}\end{pmatrix}$ and obtain that~\eqref{eq:lmi_h3} holds if (and for $\mathcal{E}_i = \mathcal E^\mathrm{old}_i$ only if)
    \begin{align}
            \symb{{\arraycolsep=1pt
                \begin{pmatrix}
                    -\Psyn\\
                    &-\sigma I \\
                    &&-\alpha(\gamma-\beta) I \\
                    &&& (1-\sigma) \Zdtwo && -(1-\sigma) \Zdtwo \\ 
                    &&&& \sigma \Zdtwo && -\sigma \Zdtwo \\               
                    &&& \star && (1-\sigma) I\\
                    &&&& \star && \sigma I \\
                    &&&&&&& \sigma I\\
                    &&&&&&&& \frac {\alpha}{\gamma} I\\
                \end{pmatrix}
                }\hspace{-0.1cm}
                \begin{pmatrix}
                    I & 0 & 0\\
                    0 & I & 0\\
                    0 & 0 & I\\
                    \mathcal{E}_{1}^\mathrm{old} & 0 & 0 \\
                    \mathcal{E}_{2}^\mathrm{old} & \mathcal E_{3}^\mathrm{old} & \mathcal{E}_{4}^\mathrm{old} \\
                    L_1\mathcal{E}_{1} & 0 & 0 \\
                    L_1\mathcal{E}_{2} & L_1\mathcal E_{3} & L_1\mathcal{E}_{4} \\
                    \Csyn{\hat\filty_1} & \Dsyn{\hat\filty_1}{\hat \filty_2} & \Dsyn{\hat\filty_1}\w \\
                    \Csyn\z & \Dsyn\z{\hat \filty_2} & \Dsyn\z\w
                \end{pmatrix}
            } &\prec 0.
     \label{eq:lmi_peak_hat4}
    \end{align}
    Applying the Schur complement to~\eqref{eq:lmi_peak_hat4}  yields~\eqref{eq:syn_LMI_e2p_p2p}.
    Similarly, $\hat \P -\ubar{\hat \X} \succ 0$ is transformed to $\Ysyntrafo^\top (\hat \P -\ubar{\hat \X}) \Ysyntrafo = \Psyn -     \Ysyntrafo^\top \hat{\ubar{X}} \Ysyntrafo \succ 0$.
    Again, we need to use Lemma~\ref{lem:convex_relaxation}, now with $\ZX=\mathcal E_{1}$ and $\ZY=\mathcal E_{1}^\mathrm{old}$, to obtain that $\Psyn - \mathcal E_{1}^\top \hat X \mathcal E_{1} \succ 0$ holds if (and for $\mathcal{E}_{1} = \mathcal E_{1}^\mathrm{old}$ only if)
    \begin{align}
        \Psyn+ \symb{\begin{pmatrix}
             - \Zdtwo &  \Zdtwo\\ \Zdtwo & -  I
        \end{pmatrix} \begin{pmatrix}
            \mathcal E_{1}^\mathrm{old}\\
            L_1 \mathcal E_{1} 
        \end{pmatrix}} \succ 0\quad \Leftrightarrow\quad  \eqref{eq:syn_LMI_PX_pos_def}.
    \end{align} 
    Hence, we have shown that the $\mathcal H_\infty$-synthesis LMIs~\eqref{eq:syn_LMI_PX_pos_def},~\eqref{eq:syn_LMI_e2e} [or energy- and peak-to-peak synthesis LMIs~\eqref{eq:syn_LMI_PX_pos_def},~\eqref{eq:syn_LMI_stability},~\eqref{eq:syn_LMI_e2p_p2p}] hold if (and for $(\K,\P)=(\K^\mathrm{old},\P^\mathrm{old})$ only if) the $\mathcal H_\infty$-analysis LMIs~\eqref{eq:stab_LMI2},~\eqref{eq:ETE_LMI} [or energy- and peak-to-peak analysis LMIs~\eqref{eq:stab_LMI1},~\eqref{eq:stab_LMI2},~\eqref{eq:ana_LMI2} with~\eqref{eq:M1M2_restriction}] hold.
    Furthermore, the new controller $\K$ and $\hat \P$ can be reconstructed by~\eqref{eq:Kdef},~\eqref{eq:Phat_trafo} as shown in~\cite[Section~4.2.2]{Scherer2005}. 
The corresponding $\P$ can be computed by~\eqref{eq:Pdef}.
This concludes the proof.
\end{proof}

\section{Algorithm to design multi-objective robust controllers}\label{sec:algo}
The proposed procedure to synthesize robust controllers minimizing the energy- or peak-to-peak gain is sketched in Algorithm~\ref{algo}.
As a starting point for this algorithm, we assume that we have given a filter $\filt$, and a set $\MXset$ such that for all $\Delta\in\Deltaset$ the loop-transformed $\Delta_\rho$ satisfies the finite horizon IQC with terminal cost defined by $(\A\filt,\BB\filt,\CC\filt,\DD\filt,\X,\M)$ for all $(\M,\X) \in \MXset$.
If the set $\MXset$ can be described using LMIs, then all steps in Algorithm~\ref{algo} are convex and can be solved using semi-definite programming~\cite{Boyd2004}.
Due to Theorem~\ref{thm:syn}, we know that a solution of the analysis step provides a solution for the synthesis step and vice versa.
Hence, the upper bound $\gamma$ on the desired performance criterion is improved in every step in every step of Algorithm~\ref{algo}, i.e.,
\begin{align*}
    \gamma_\mathrm{a}^{(i)}\geq \gamma_\mathrm{s}^{(i)}\geq \gamma_\mathrm{a}^{(i+1)}.
\end{align*}

\begin{algorithm}[b]
    \caption{Robust energy- and peak-to-peak synthesis}\label{algo}
    \begin{algorithmic}
        \State Given $\sigma \in (0,1)$, $\rho \in (0,1]$, $\filt$, $\G$, $\MXset$ 
        \State Perform nominal synthesis (i.e., $\np=\nq=\nfilt=\nfilty=0$) to obtain $K^{(0)}$
        \For {i=1,...,N-1}
        \State Set $K=K^{(i-1)}$ 
        \State Solve $\big(\M^{(i)},\X^{(i)},\gamma_\mathrm{a}^{(i)},\mu_\mathrm{a}^{(i)},\P^{(i)}\big) = \argmin\limits_{\substack{(\M,\X)\in\mathbb{MX}, \gamma\geq\mu\geq 0, \P=\P^\top\\ \text{s.t.~\eqref{eq:stab_LMI1},~\eqref{eq:stab_LMI2},~\eqref{eq:ana_LMI2},~\eqref{eq:M1M2_restriction}}}}\gamma$
        \State Based on $\filt$, $\M^{(i)}$, $\X^{(i)}$, compute $\hat \filt$, $\hat \M$, $\hat \X$ according to Theorems~\ref{thm:fact} and~\ref{thm:term_cost}
        \State Set $K^\mathrm{old}=K^{(i-1)}$, $\P^\mathrm{old} = \P^{(i)}$
        \State Solve $\big(\Ysyn^{(i)},\Xsyn^{(i)},\Ksyn^{(i)},\Lsyn^{(i)},\Msyn^{(i)},\Nsyn^{(i)},\gamma_\mathrm{s}^{(i)},\mu_\mathrm{s}^{(i)}\big) = \argmin\limits_{\substack{\Ysyn,\Xsyn,\Ksyn,\Lsyn,\Msyn,\Nsyn,\gamma\geq \mu\geq 0\\ \text{s.t.~\eqref{eq:syn_LMI_PX_pos_def},~\eqref{eq:syn_LMI_stability},~\eqref{eq:syn_LMI_e2p_p2p}}}}\gamma$
        \State Based on $\Ysyn^{(i)},\Xsyn^{(i)},\Ksyn^{(i)},\Lsyn^{(i)},\Msyn^{(i)},\Nsyn^{(i)}$, obtain $\K^{(i)}$ from~\eqref{eq:Kdef}.
        \EndFor
        \State Set $K=K^{(i)}$ and solve $\gamma_\mathrm{a}^{(N)} = \min\limits_{\substack{(\M_1,\X_1)\in\mathbb{MX},(\M_2,\X_2)\in\mathbb{MX}, \gamma\geq\mu\geq 0, \P=\P^\top\\ \text{s.t.~\eqref{eq:stab_LMI1},~\eqref{eq:stab_LMI2},~\eqref{eq:ana_LMI2}}}}\gamma$
    \end{algorithmic}
    \end{algorithm}

    If we want to minimize the $\mathcal H_\infty$-norm instead of the energy- or peak-to-peak gain, then we slightly adapt Algorithm~\ref{algo}. 
    In particular, we only need to change the analysis LMIs from~\eqref{eq:stab_LMI1},~\eqref{eq:stab_LMI2},~\eqref{eq:ana_LMI2},~\eqref{eq:M1M2_restriction} to~\eqref{eq:stab_LMI2},~\eqref{eq:ETE_LMI} and the synthesis LMIs from~\eqref{eq:syn_LMI_PX_pos_def},~\eqref{eq:syn_LMI_stability},~\eqref{eq:syn_LMI_e2p_p2p} to~\eqref{eq:syn_LMI_PX_pos_def},~\eqref{eq:syn_LMI_e2e}.
    
    When we want to minimize the maximum of the $\Hinf$-norm and the energy-to-peak gain of $\DGK$, then we apply Algorithm~\ref{algo} but solve the analysis problem subject to~\eqref{eq:stab_LMI1},~\eqref{eq:stab_LMI2},~\eqref{eq:ETE_LMI},~\eqref{eq:ana_LMI2},~\eqref{eq:M1M2_restriction} and the synthesis problem subject to~\eqref{eq:syn_LMI_PX_pos_def},~\eqref{eq:syn_LMI_e2e},~\eqref{eq:syn_LMI_stability},~\eqref{eq:syn_LMI_e2p_p2p} with the energy-to-peak parameters $\rho=1$, $\alpha=1$, $\beta=0$.

    We can even mix different performance goals for several performance channels $\w_{(j)}\to \z_{(j)}$ (compare~\cite{Scherer1997} for nominal multi-objective synthesis). 
    For example, we may want to minimize the sum of the $\mathcal H_\infty$-norm $\gamma_{(1)}$ of $\w_{(1)}\to\z_{(1)}$, the energy-to-peak gain $\gamma_{(2)}$ of $\w_{(2)}\to\z_{(2)}$, and the peak-to-peak gain $\gamma_{(3)}$ of $\w_{(3)}\to\z_{(3)}$.
    Or we may want to minimize $\max\{\gamma_{(1)},\gamma_{(2)}\}$ subject to $\gamma_{(3)}\leq 1$.
    We can solve such problems by minimizing the desired criterion subject to the analysis LMIs for each channel $j$ restricted to a common Lyapunov matrix $\P_{(j)}=\P$ and a common IQC multiplier $\M_{(j)}=\M$.
    We call the analysis with these restrictions a \emph{common analysis} in contrast to an \emph{individual analysis} where each channel is analyzed individually without these restrictions.
    A common analysis using the same $\M$ and $\P$ in all channels certainly introduces conservatism, however, this restriction is necessary such that the synthesis LMIs lead to one controller $\K$ satisfying the desired performance in all channels rather than to several controllers $K_{(j)}$ satisfying only the performance criterion for channel $\w_{(j)}\to\z_{(j)}$.
    When combining peak-to-peak, which needs $\rho<1$, with one of the other two measures, which have $\rho=1$, then we need to ensure that $(\Ksyn,\Lsyn,\Msyn,\Nsyn)$ do not correspond to $\K$ and $\K_\rho$ at the same time.
    One way to circumvent this problem is to pull $\rho$ of $\Ksyn$ and $\Lsyn$ in~\eqref{eq:KLMN} for the peak-to-peak LMIs such that $(\Ksyn,\Lsyn,\Msyn,\Nsyn)$ correspond to $\K$ also in the peak-to-peak LMIs.

We conclude this section with some further remarks on the implementation.
\begin{remark}[(Initialization)]
    The initialization with the nominal controller $K^{(0)}$ may often not be robustly stabilizing leading to infeasibility in step $i=1$.
    In such situation, we rescale $\Delta^\mathrm{new}=\tau \Delta$ with some $\tau \in [0,1]$ to achieve feasibility.
    Then, we iterate as described in Algorithm~\ref{algo} but in every analysis and synthesis step we maximize $\tau$ until we eventually reach $\tau =1$. 
    If we never reach $\tau=1$, then we are not able to provide a robustly stabilizing controller for the uncertain system.
    This happens, for example, if the uncertain system is not robustly stabilizable.
\end{remark}
\begin{remark}[(Convex relaxation)]
    Since $\hat \X$ is generally indefinite, we decompose it via~\eqref{eq:hatX_decomposition} and use a convex relaxation based on a solution $\P^\mathrm{old},\mu^\mathrm{old},\gamma^\mathrm{old}$ of the analysis LMIs for some controller $\K^\mathrm{old}$. 
    This convex relaxation is needed to convexify the synthesis LMIs but causes that (i) of Theorem~\ref{thm:syn} requires $(\K,\P)=(\K^\mathrm{old},\P^{\mathrm{old}})$ and cannot guarantee equivalence between the synthesis and analysis LMIs for all $\K$. 
    In practice, this conservatism can be seen in a possibly slower convergence rate of the algorithm.
    However, if the algorithm has converged and $\Psyn^{(i)}=\Psyn^{(i-1)}$, $\Ksyn^{(i)}=\Ksyn^{(i-1)}$, etc., then, the convex relaxation is tight and does not introduce any conservatism.
    Further, note that the decomposition~\eqref{eq:hatX_decomposition} is not unique, as we can add the same positive semi-definite matrix to $L_1^\top L_1$ and $L_2^\top L_2$ without changing~\eqref{eq:hatX_decomposition}. 
    However, such an addition introduces unnecessary conservatism and hence, we choose $L_1$ and $L_2$ as small as possible in the following sense.
    We set the dimension $L_1\in\R^{n_+ \times \nfilthat}$ and $L_2 \in\R^{n_- \times \nfilthat}$ where $n_+$ and $n_-$ are the numbers of positive and negative eigenvalues of $\hat\X$.
    The decomposition $\hat \X = \symb{\begin{pmatrix} I_{n_+} &0\\ 0 & -I_{n_-}\end{pmatrix} \begin{pmatrix} L_1 \\ L_2 \end{pmatrix}}$ exists and is unique as $\hat \X$ is symmetric and hence diagonalizable by an orthonormal matrix.
\end{remark}
\begin{remark}[(Optimizing $\rho$ and $\sigma$)] \label{rem:rho_opt}
    Instead of fixing the parameters $\sigma\in(0,1)$ and $\rho\in (0,1)$ for peak-to-peak ($\rho=1$ for energy-to-peak) beforehand, one can also optimize over $\rho$ and/or $\sigma$ in each iteration step.
    However, these parameters enter nonlinearly and require a line search, which means that the SDP in each iteration of Algorithm~\ref{algo} needs to be solved several times to find the best (or a better) $\rho$ and $\sigma$.
    Furthermore, note that we can easily derive synthesis LMIs for $\sigma=0$ or $\sigma=1$ as well.
    In particular, if $\sigma=0$ or $\sigma=1$, then we can first cancel the $0$s in~\eqref{eq:lmi_peak_hat4} and then apply the Schur complement resulting in an LMI with reduced order, which we can use instead of the problematic LMI~\eqref{eq:syn_LMI_e2p_p2p} containing $\frac{1}{\sigma}$ and $\frac{1}{1-\sigma}$.
\end{remark}
\begin{remark}[(Controller order $\nK$)]
    The order of the controller is $n_{K^{(i)}}=\nx+n_{\hat\filt^{(i)}}$. 
    As the order $n_{\hat\filt^{(i)}}$ of the filter $\hat\filt^{(i)}$ from the factorization in Theorem~\ref{thm:fact} depends on $M^{(i)}$, it may change over the iterations $i$.
    Hence, also $K^{(i-1)}$ and $K^{(i)}$ may have different orders. 
    However, we need to use $K^{(i-1)}$ to define $K^{\mathrm{old}}$ for the synthesis step $i$ and we need to have $n_{K^{(i)}}=n_{K^{\mathrm{old}}}$.
    Thus, in the case $n_{K^{(i)}}>n_{K^{(i-1)}}$ we choose a non-minimal state space representation for $K^\mathrm{old}=K^{(i-1)}$ such that $n_{K^{(i)}}=n_{K^{\mathrm{old}}}$.
    In the case $n_{K^{(i)}}<n_{K^{(i-1)}}$, we can increase $n_{K^{(i)}}=\nx+n_{\hat\filt^{(i)}}$ by introducing unobservable and uncontrollable stable modes into the filter $\hat\filt^{(i)}$ to ensure $n_{K^{(i)}}=n_{K^\mathrm{old}}$.
    Instead of increasing the order of $\hat \filt^{(i)}$, one can also reduce the order of $K^{\mathrm{old}}$, for example, by using \textsc{Matlab}'s \texttt{reducespec} and \texttt{getrom}, which leads to smaller LMI sizes and faster computation times.
    However, this model order reduction does not guarantee feasibility of the warm start anymore.
    Nevertheless, we observed in all our tests that the controller order of $K^\mathrm{old}$ can be reduced from $\nx + n_{\hat\filt^{(i-1)}}$ to $\nx + n_{\hat\filt^{(i)}}$ without any loss in the performance.
\end{remark}

\section{Examples}\label{sec:exmp}

This section demonstrates the proposed algorithm for analysis and robust controller synthesis in three examples. 
Example~\ref{exmp1} compares the proposed analysis procedure of the energy-to-peak and peak-to-peak gain via IQCs to related work.
We compare the synthesis with respect to different performance criteria in Example~\ref{exmp2} and show how to perform a multi-objective synthesis in Example~\ref{exmp3}.
All examples are implemented in Matlab using YALMIP~\cite{Lofberg2004} and MOSEK~\cite{mosek}. The code is available online\footnote{\url{https://github.com/Schwenkel/multi-objective-iqc-synthesis}}.

\begin{example}[Analysis of the energy- and peak-to-peak gain]\label{exmp1}
    In this example, we compare Theorem~\ref{thm:p2p} to related results.
    The peak-to-peak analysis result presented in~\cite[Theorem~4]{Schwenkel2023b} is the special case of Theorem~\ref{thm:p2p} with the restriction~\eqref{eq:M1M2_restriction} and $\sigma=0$.
    With this method and the IQC for parametric time-varying uncertainties from~\cite[Theorem~4]{Schwenkel2023b} a value of $\gamma = 67.81$ was achieved for~\cite[Example~14]{Schwenkel2023b}.
    Choosing $\sigma = 0.6$ we can improve this bound to  $\gamma = 67.09$ and by dropping the restriction~\eqref{eq:M1M2_restriction} completely, we achieve $\gamma= 66.93$.
    This shows that~\eqref{eq:M1M2_restriction} indeed introduces some conservatism, but with a proper choice of $\sigma$ the conservatism is rather small.
    As second example, we consider the one from~\cite[Section~IV.~B]{Jaoude2020} where the worst-case energy-to-peak gain is bounded by $\gamma= 2.683$ via IQCs using results from~\cite{Fry2017}. 
    Using Theorem~\ref{thm:p2p}, we can significantly improve this bound to $\gamma = 2.008$.
\end{example}

The uncertainties that we consider in the following examples are structured uncertainties consisting of parametric and dynamic uncertain components. 
Therefore, we first recap the finite horizon IQCs with terminal cost for these components.

\emph{Uncertainty description via IQCs.}
A parametric uncertainty $\delta\in [\ubar \delta,\bar\delta]$, $\ubar \delta < 0 < \bar\delta$ is characterized by the discrete-time analog of the IQC from~\cite[Theorem~12]{Scherer2018}.  
Hence, for $\psi = \ssrep{\A\psi}{\BB\psi}{\CC\psi}{\DD\psi}\in\RHinf^{\np\times \np}$ the uncertainty $\p=\delta \q$ satisfies the finite horizon IQC defined by $\filt=\filt_\text{p} = \begin{pmatrix}
    \bar\delta & -1\\ -\ubar \delta & 1
\end{pmatrix} \otimes \psi $ with state space representation
    \begin{align*}
         \blkmat{c:c}{
            \A{\filt_\text{p}} & \BB{\filt_\text{p}} \newblkdash \CC{\filt_\text{p}} & \DD{\filt_\text{p}} 
        } = \blkmat{cc:cc}{\A\psi & 0 & \bar\delta \BB\psi & -\BB\psi \\ 0 & \A\psi & -\ubar \delta \BB\psi & \BB\psi \newblkdash \CC\psi & 0 & \bar\delta \DD\psi & -\DD\psi \\ 0 & \CC\psi & -\ubar \delta \DD\psi & \DD\psi},
    \end{align*}
    and $(\M,\X)\in \MXset_\text{p}$ where $\MXset_\text{p}$ is defined as the set of all $\M = \begin{pmatrix}
        0 & \M_{12} \\ \M_{12}^\top & 0
    \end{pmatrix}$ and $\X = \begin{pmatrix}
        0 & \X_{12} \\ \X_{12}^\top & 0
    \end{pmatrix}$ for which there exists $R$ such that $R-X \prec 0$ and
    \begin{align*}
        \symb{\diagmat{-R \\ &R \\ &&\M } \begin{pmatrix} I & 0 \\ \A{\filt_\text{p}} & \BB{\filt_\text{p}} E\\ \CC{\filt_\text{p}} & \DD{\filt_\text{p}} E
        \end{pmatrix}} \succ 0
    \end{align*}
    with $E=\begin{pmatrix} I & 0 \end{pmatrix}^\top$.
    We choose $\psi= \ssrep{\A\psi}{\BB\psi}{\CC\psi}{\DD\psi}=\begin{pmatrix}
        I_{\np} & \frac{\eins_{\np}}{\zs - a} & \dots & \frac{\eins_{\np}}{(\zs-a)^\greeknu}
    \end{pmatrix}^\top$ with parameters $a\in (-1,1)$ and $\greeknu \in \N$ and where $\eins_\np \in \R^{\np\times \np}$ is the all-ones-matrix.
    
    Furthermore, a scalar dynamic uncertainty $\Delta\in \Hinf^{1\times 1}$ with $\|\Delta\|_\infty < \gamma_\Delta$ is characterized by the discrete-time analog of the IQC from~\cite[Theorem~21]{Scherer2022a}.
    Hence, $\p=\Delta (\q)$ satisfies the finite horizon IQC defined by $\filt =\filt_\text{d} = I_2 \otimes \psi$ and $(\M,\X)\in \MXset_\text{d}$ where $\MXset_\text{d}$ is defined as the set of all $\M = \diag(\gamma_\Delta , -\frac{1}{\gamma_\Delta} ) \otimes \M_\text{d}$ and $\X  = \diag( \gamma_\Delta, -\frac{1}{\gamma_\Delta})\otimes \X_\text{d}$ with 
    \begin{align*}
        \symb{\diagmat{-\X_\text{d} \\ &\X_\text{d} \\ &&\M_\text{d} } \begin{pmatrix} I & 0 \\ \A\psi & \BB\psi \\ \CC\psi & \DD\psi
        \end{pmatrix}} \succ 0.
    \end{align*}

\begin{example}\label{exmp2}
    \emph{System.}
    Consider the system $G=\ssrep{\A\G}{\BB\G}{\CC\G}{\DD\G}$ with
    \begin{align*}
        \blkmat{c:c}{\A\G&\BB\G \newblkdash \CC\G & \DD\G} = \blkmat{c:c:c:c}{\A\G&\B\G\p&\B\G\w&\B\G\u \newblkdash
        \C\G\q & \D\G\q\p & \D\G\q\w & \D\G\q\u \newblkdash 
        \C\G\z & \D\G\z\p & \D\G\z\w & \D\G\z\u \newblkdash 
        \C\G\y & \D\G\y\p & \D\G\y\w & \D\G\y\u} =
        \blkmat{cc:cc:cc:cc}{.6 & .2& .2 & .2& 3 & 2 & 1 & 0\\ -.1 & -.3 & .3 & -.2 & 3 & 1 & 2 & .2 \newblkdash .2 & -.3 & .4 & .3 & 3 & 1 & 0 & 0 \\ .8 & .5 & -.6 & .1& 2 & 7 & 0 & .1 \newblkdash 2&1&1 & 2 & 1 & -2 & 4 & 0 \\ 2&3&-1&4&-4&3&0&0\newblkdash 
        1&0&0&0&.1&.2&0&0}.
    \end{align*}
    The uncertainty $\p=\Delta \q$ is parametric $\Delta = \begin{pmatrix}
        \delta_1 & 0 \\ 0 & \delta_2
    \end{pmatrix}$ with $\delta_1 \in [-0.1,0.5]$ and $\delta_2 \in [-0.3, 0.6]$.
    Therefore, we use the IQC for parametric uncertainties $\filt_\text{p}$, $\MXset_\text{p}$ for both uncertain parameters and stack it.
    We choose the parameters $a=-0.25$ and $\greeknu=4$ to define $\psi$.
    
    \emph{Controller design.}
    Using the proposed Algorithm~\ref{algo} with $\sigma=0.95$ and optimizing $\rho$ as described in Remark~\ref{rem:rho_opt}, we design the controllers $\K_{\Hinf}$, $\K_\ETP$, and $\K_\PTP$ for the $\mathcal H_\infty$, energy-, and peak-to-peak performance, respectively.
    Table~\ref{tab:exmp1} shows the resulting gains for these controllers where the upper bounds are provided by Theorems~\ref{thm:e2e} and~\ref{thm:p2p}. 
    The lower bounds are obtained by maximizing the performance over the uncertainty $\Delta$ and disturbance $d$. 
    Since finding the true worst-case $\w$ and $\Delta$ is a non-convex optimization problem, we can only find local maxima.
    Table~\ref{tab:exmp1} shows that our method yields tight bounds on the $\mathcal H_\infty$-norm and energy-to-peak gain, as lower and upper bounds (almost) coincide. 
    Although the gaps between the lower and upper bound indicate the possibility of some conservatism in the bound on peak-to-peak gain, it is still worthwhile minimizing this upper bound as $\K_\PTP$ achieves a significantly better peak-to-peak performance than the other controllers.
    
    \emph{Comparison to $\mu$-synthesis.} If we want to optimize the $\mathcal H_\infty$-performance, we can also use \textsc{Matlab}'s \texttt{musyn}~\cite{Balas1991} which uses the structured singular value $\mu$ instead of IQCs to characterize the uncertainty $\Delta$.
    Using $\mu$-synthesis we obtain a controller with $\mathcal H_\infty$-performance $\gamma=37.65$, which is marginally larger than the $\gamma=37.47$ we obtained via IQC synthesis.
    We take this example as an indication that the proposed IQC synthesis can provide equally good results as $\mu$-synthesis.
    Note that compared to the proposed IQC synthesis procedure, $\mu$-synthesis can only be applied to the $\Hinf$-problem with linear time-invariant uncertainties.
    \begin{table}[htb]
        \centering
        \begin{tabular}{cccccc}
            \toprule
            controller & $N$ & time$/N$ & $\mathcal H_\infty$-norm &  energy-to-peak gain &  peak-to-peak gain \\ \midrule
            $\K_{\Hinf}$ & 10 & $7.64$\,s & $37.46\leq \gamma\leq 37.47$ & $35.74 \leq \gamma \leq 35.74$ & $59.73\leq\gamma\leq 68.73$  \\
            $\K_\ETP$ & 20 & $11.12$\,s &  $42.81 \leq \gamma \leq 42.88$ & $34.07 \leq \gamma \leq 34.07$ & $57.27 \leq \gamma \leq 64.28$ \\
            $\K_\PTP$ & 13 & $18.13$\,s &  $46.27 \leq \gamma \leq 46.29$ & $34.30 \leq \gamma \leq 34.41$ & $49.69 \leq \gamma \leq 54.30$ \\ \bottomrule
        \end{tabular}
        \caption{The number of iterations $N$ used to compute the controllers $\K_{\Hinf}$, $\K_\ETP$, $\K_\PTP$ and the average computation time per iteration.
        Further, lower and upper bounds on the $\mathcal H_\infty$-norm, energy- and peak-to-peak gain of these controllers rounded to 4 significant digits.}
        \label{tab:exmp1}
    \end{table}
    
\end{example}
\begin{example}[Multi-objective robust synthesis]\label{exmp3}
    \emph{System.} Consider the example from~\cite{Bemporad1998} of a servomechanism consisting of a DC motor, a gearbox, an elastic shaft, and an uncertain load. The dynamics of this system are
    \begin{align*}
        \dot x = A_\text{c} x + B_{\text c} u_\text{real} = \begin{pmatrix}
            0 & 1 & 0 & 0 \\
            -\frac{1280.2}{J_L} & -\frac{25}{J_L} & \frac{64.01}{J_L} & 0 \\ 0 & 0 & 0 & 1 \\ 128.02 & 0 & -6.4010 & - 10.2
        \end{pmatrix} x + \begin{pmatrix}
            0 \\ 0 \\ 0 \\ 220
        \end{pmatrix} u_\text{real}
    \end{align*}
    where the state $x=\begin{pmatrix}
        \theta_L & \dot \theta_L & \theta_M & \dot \theta_M
    \end{pmatrix}$ is comprised of the load angle $\theta_L$ and the motor angle $\theta_M$.
    The load inertia $J_L\in [5,15]$ is uncertain.
    Hence, we know $\frac{1}{J_L} = \frac{4}{30} + \delta$ with $\delta \in \left[-\bar \delta,\bar \delta\right]$, $\bar \delta \triangleq \frac{2}{30}$ and thus we write $A_\text{c} = A_{\text{c}0} + \delta A_{\text{c}\delta}$ where both $A_{\text{c}0}$ and $A_{\text{c}\delta}$ are independent of the uncertainty.
    The model is discretized with sampling rate $T_s = 0.1$ as \begin{align*}
        \exp\left(\begin{pmatrix}
        T_s(A_{\text{c}0}+\delta A_{\text{c}\delta}) & T_s B_\text{c} \\ 0 & 0
    \end{pmatrix} \right)=\begin{pmatrix}
        \A\G & \B\G\u \\ 0 & I 
    \end{pmatrix}+\delta \begin{pmatrix}
        \C\G{\q_1} & \D\G{\q_1}\u \\ 0 & 0 
    \end{pmatrix} + \begin{pmatrix}
        \mathcal{O}(\delta^2\T_s^2) \\ 0
    \end{pmatrix}
    \end{align*}
    where we neglect the terms $\mathcal O(\delta^2\T_s^2)$. 
    Furthermore, we consider a multiplicative input uncertainty $u_\text{real}=(1+\Delta_\text{u})u$ with $\|\Delta_\text{u}\|_\infty \leq 0.1$ and define the corresponding uncertainty channel by $q_2 = u$ and $p_2 = \Delta_\text{u} q_2$ with $\B\G{\p_2}=\B\G\u$. 
    The uncertainty channel corresponding to the uncertain load inertia $J_L$ is defined by $q_1 = \C\G{\q_1} x +  \D\G{\q_1}\u\u_\text{real}$ and $\p_1 = \delta\q_1$ with $\B\G{\p_1} = I$. 
    Due to $u_\text{real} = \u + \p_2$ we have $\D\G{\q_1}{\p_2}= \D\G{\q_1}\u$. 
    The dynamic uncertainty $\Delta_\text{u}$ satisfies the finite horizon IQC defined by $\filt_\mathrm{d}$ and $(\M,\X)\in \MXset_\mathrm{d}$.
    The uncertainty $\Delta =\diag (\delta I_4 , \Delta_\text{u} )$ is characterized by the finite horizon IQC defined by $\filt = \diag(\filt_\mathrm{p},\filt_\mathrm{d})$ and $\M=\diag(\M_\mathrm{p},\M_\mathrm{d})$, $\X=\diag(\X_\mathrm{p},\X_\mathrm{p})$ with $(\M_\mathrm{p},\X_\mathrm{p})\in\MXset_\mathrm{p}$ and $(\M_\mathrm{d},\X_\mathrm{d})\in\MXset_\mathrm{d}$. 
    We choose the parameters $a=0.5$ and $\greeknu=2$ to define $\psi$.
    We measure the reference $\y_1 = \w_1$ and the load angle $\y_2=\x_1+\w_2$ where $\w_2$ is measurement noise.
    Since the measurement noise contains rather high frequencies and the reference contains only small frequencies we introduce the two dynamic weights $W_1(\zs) = 1.1178\frac{(\zs-0.9659)(\zs-0.6916)}{(\zs-0.9998)(\zs-0.8588)}$ and $W_2(\zs) = 66.661 \frac{\zs -0.9775}{\zs+0.3903}$ that multiply the performance inputs $w_1$ and $w_2$, respectively, that is $G_\text{w}=\G \diag(I_5, W_1, W_2, 1)$.

    \emph{Multi-objective goal.}
    The control goal is to minimize the tracking error $z_1 = x_1 - w_1$ while the input $z_2=u$ and the (normalized) torsional torque $z_3 = \begin{pmatrix}
        16.3 & 0 & -0.815 & 0
    \end{pmatrix} x $ must satisfy $\|z_2\|_\peak \leq 1$ and $\|z_3\|_\peak \leq 1$. Compared to~\cite{Bemporad1998} we normalized both the input and the torsional torque to a peak bound of $1$. 
    To achieve this goal, we want to minimize the $\mathcal H_\infty$-norm $\gamma_1$ from $\w$ to $\z_1$ as well as the energy-to-peak gain $\gamma_2$ from $\w$ to $\z_2$ and the energy-to-peak gain $\gamma_3$ from $\w$ to $\z_3$.
    We mix these goals by minimizing $\gamma_1+\gamma_2+\gamma_3$ subject to the LMI constraints we obtain for each performance channel as described in Section~\ref{sec:algo}. 

    \emph{Convergence of the synthesis algorithm.} Figure~\ref{fig:gam_iterations} shows the value of $\gamma_1+\gamma_2+\gamma_3$ after each analysis step over the first 11 iterations of the algorithm for both the common and the individual analysis as described in Section~\ref{sec:algo}.
    The average computation time per iteration on a regular notebook was $12.6$\,s.
    In the first iteration, the algorithm was only feasible for $\tau=0$ in the analysis step and for $\tau =0.375$ in the synthesis step, hence we did not include the first iteration in the plot.
    The algorithm could choose $\tau = 1$ for the remaining iterations and thus provide a valid robust performance bound.
    After $10$ iterations no more noticeable improvement is made.
    As there is no need for the conservatism of the common analysis in the final analysis step, we perform one final individual analysis step to obtain the upper bound $\gamma_1+\gamma_2+\gamma_3 \leq 3.53$ for the synthesized controller.
    \begin{figure}
        \centering
%
%
%
\begin{tikzpicture}

\begin{axis}[%
width=2in,
height=1.4in,
at={(0.758in,0.481in)},
scale only axis,
xmin=1,
xmax=13,
ymin=2.8,
ymax=11,
axis x line*=bottom,
axis y line*=left,
xlabel={Number of iterations},
axis background/.style={fill=white}
]
\addplot [color=myblue, line width=1pt, mark=*, mark options={scale =0.7, fill=myblue}, only marks]
  table[row sep=crcr]{%
  2	10.4393749569531\\
  3	7.03749652295197\\
  4	5.97558337689111\\
  5	5.32901488689088\\
  6	4.89631752071521\\
  7	4.66787893428087\\
  8	4.52946258772666\\
  9	4.431577863621\\
  10	4.36329649144063\\
  11	4.32221074379639\\
};
\addlegendentry{\ $\gamma_1+\gamma_2+\gamma_3$ (common)\ \ \ };

\addplot [color=myred, line width=1pt, mark=o, mark options={scale =0.7}, only marks]
  table[row sep=crcr]{%
  2	8.5594756290688\\
  3	5.1663336508524\\
  4	4.33233860798672\\
  5	3.96624374606433\\
  6	3.77175839548574\\
  7	3.65483818141959\\
  8	3.60436695093915\\
  9	3.57442596858038\\
  10	3.54897024139315\\
  11	3.53486676959564\\
  12	3.53674098547207\\
};
\addlegendentry{\ $\gamma_1+\gamma_2+\gamma_3$ (individual)};
\end{axis}

\end{tikzpicture}%
        \caption{The value of $\gamma_1+\gamma_2+\gamma_3$ after each (common) analysis step of the algorithm. After each iteration, we performed an additional individual analysis step for comparison.}
        \label{fig:gam_iterations}
    \end{figure}
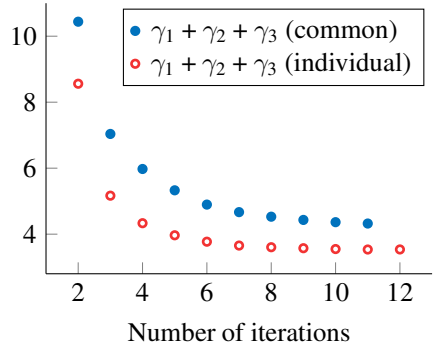

    \emph{Controller order.}
    The order of the synthesized controller is $\nK = \nx + \nfilt = 17$ where $\nx=7$ is the dimension of the system $\G$ and the weights $W_1$ and $W_2$ and $\nfilt=\np\greeknu = 10$ is the dimension of the IQC filter. 
    Such a high order of $\K$ is not necessary, using MATLAB's \texttt{R=reducespec(K,"balanced")} and \texttt{K=getrom(R, Order=7)} we can reduce the order of $\K$ to $\nK=7$ without any loss in performance as a robust performance analysis of the reduced order controller reveals $\gamma_1+\gamma_2+\gamma_3 \leq 3.52$.
    
   \emph{Step response.} Figure~\ref{fig:step_response} shows the step response of the resulting closed loop for a reference of $\w_1 = 90^\circ$ and white measurement noise $\w_2$ with $0^\circ$ mean and $2^\circ$ standard deviation.
    We observe a fast and robust reference tracking performance with no overshoot for all (depicted) $\Delta \in \Deltaset$ as well as constraint satisfaction for all times.
    For comparison, we show in Figure~\ref{fig:step_response} also the step responses when using the nominal controller that was designed with the same mixed synthesis goals but without considering the uncertainties, i.e., with $\Delta = 0$. 
    Although the nominal controller nominally performs even better and has a larger distance to the constraint limits, we see that the performance is not robust against uncertainties $\Delta\in\Deltaset$ and that the constraints are violated for some $\Delta\in\Deltaset$. 
    \begin{figure}[tb]
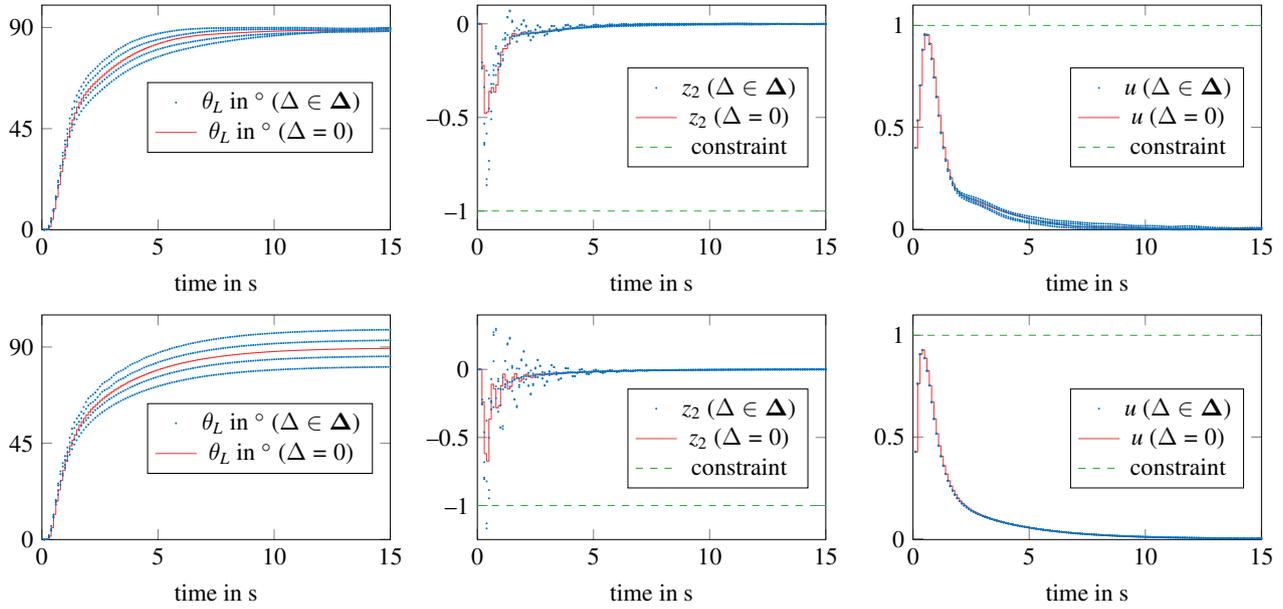

        \centering
        \resizebox{0.95\linewidth}{!}{\input{step_responses_robust}}\\[0.2cm]
        \resizebox{0.95\linewidth}{!}{\input{step_responses_nominal}}
        \caption{Step responses of the synthesized robust controller (upper plot) and the nominal controller (lower plot) for different uncertainties $\Delta\in\Deltaset$. \phantom{mmmmmmmm mmmmmmmmm mmmmmmmmm m}}
        \label{fig:step_response}
    \end{figure}
    
\end{example}

\section{Conclusion}
We have proposed a framework to optimize controllers for uncertain discrete-time systems with respect to various performance measures including $\mathcal H_\infty$, energy-to-peak, and peak-to-peak performance and a mixture thereof.
The framework can handle various types of structured and unstructured uncertainties due to the characterization of the uncertainty with IQCs. 
This work opens new possibilities for robust controller design in uncertain discrete-time systems.
For example, the peak-to-peak minimization procedure can be used to design pre-stabilizing controllers for tube-based MPC algorithms like~\cite{Schwenkel2022a} that minimize the tube size.
Another interesting continuation of this work is to develop energy-to-peak and peak-to-peak minimization via IQCs in continuous time.

\bmsection*{Data availability statement}
The code and data are available online \url{https://github.com/Schwenkel/multi-objective-iqc-synthesis}.

\bmsection*{Acknowledgments}
Funded by Deutsche Forschungsgemeinschaft (DFG, German Research Foundation) under Germany’s Excellence Strategy – EXC 2075 – 390740016. 
C. W. Scherer and F. Allgöwer acknowledge the support by the Stuttgart Center for Simulation Science (SimTech).
L. Schwenkel thanks the International Max Planck Research School for Intelligent Systems (IMPRS-IS) for supporting him. 

\bmsection*{Conflict of interest}

The authors declare no potential conflict of interest.

\bibliography{my_bib}

\appendix

\bmsection{Technical Lemmas}\label{appendix}
\vspace*{12pt}

\begin{lemma}\label{lem:dare}
  For $\filt=\ssrep{A}{B}{C}{D}$ and $\M=\M^\top$ with suitable dimensions assume that $(A,B)$ is controllable, and that $\filt^*\M \filt \succ 0$ holds on $\partial\mathbb{D}$. 
  Let $\begin{pmatrix}Q & S \\ S^\top & R \end{pmatrix}=\symb{\M \begin{pmatrix}C & D \end{pmatrix}}$. 
  Then there exists a unique stabilizing solution $Z_\mathrm{s}\in \textup{\texttt{dare}}(A,B,Q,R,S)$ which satisfies 
  \begin{align*}
    \lambda \left( A+BK_\mathrm{s}\right) \subseteq \mathbb D,\quad K_\mathrm{s}=-(B^\top Z_\mathrm{s} B + R)^{-1}(A^\top Z_\mathrm{s} B + S)^\top
  \end{align*}
  as well as a unique unmixed solution $Z_\mathrm{u}\in \textup{\texttt{dare}}(A,B,Q,R,S)$ which satisfies 
  \begin{align*}
    \lambda \left( A+BK_\mathrm{u}\right) \subseteq \Lambda_\mathrm{u},\quad K_\mathrm{u}=-(B^\top Z_\mathrm{u} B + R)^{-1}(A^\top Z_\mathrm{u} B + S)^\top
  \end{align*}
  with $\Lambda_{\mathrm{u}} = \{0\} \cup \{\lambda \in\mathbb{C} \mid |\lambda|>1\}$.
\end{lemma}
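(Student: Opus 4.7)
My plan is to reduce the lemma to the classical discrete-time DARE theory for Popov triples with sign-definite Popov function (Ionescu-Oara-Weiss; Clements-Anderson-Glover). The hypothesis $\filt^*\M\filt \succ 0$ on $\partial\mathbb{D}$ is exactly the Popov positivity condition for the data $(A,B,Q,R,S)$, and controllability of $(A,B)$ supplies the structural rank condition needed to pass from deflating subspaces of the associated matrix pencil to solutions of the DARE.

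The first step would be to introduce the extended symplectic pencil
\[
\lambda \begin{pmatrix} I & 0 & 0 \\ 0 & A^\top & 0 \\ 0 & -B^\top & 0 \end{pmatrix} - \begin{pmatrix} A & 0 & B \\ -Q & I & -S \\ S^\top & 0 & R \end{pmatrix}
\]
(or, after an elementary deflation, an equivalent $2n\times 2n$ pencil). Its generalized eigenvalues come in reciprocal pairs $\{\lambda,1/\bar\lambda\}$, with the pair $\{0,\infty\}$ being self-reciprocal. I would then invoke the standard fact that Popov positivity on $\partial\mathbb{D}$ forces the pencil to have no eigenvalue on $\partial\mathbb{D}$. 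A dimension count based on reciprocal pairing then gives exactly $n$ eigenvalues in $\mathbb{D}$ and exactly $n$ eigenvalues in $\Lambda_{\mathrm{u}}=\{0\}\cup\{|\lambda|>1\}$, where an $\infty$-eigenvalue of the pencil is interpreted as a $0$-eigenvalue of the closed-loop matrix after the proper deflation (this is precisely why $\{0\}$ must be included in $\Lambda_\mathrm{u}$).

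Next I would appeal to controllability of $(A,B)$ to show that the $n$-dimensional deflating subspaces associated with the two spectral pieces are Lagrangian of graph form $\mathrm{Im}\begin{pmatrix} I \\ Z \\ K \end{pmatrix}$ for symmetric $Z$. Reading off the Riccati equation from the defining relation of the deflating subspace produces $Z_{\mathrm{s}}\in\dare(A,B,Q,R,S)$ with $\lambda(A+BK_\mathrm{s})\subseteq \mathbb{D}$ and $Z_\mathrm{u}\in\dare(A,B,Q,R,S)$ with $\lambda(A+BK_\mathrm{u})\subseteq \Lambda_{\mathrm{u}}$. Invertibility of $B^\top Z B + R$ at both solutions follows directly from $\filt^*\M\filt\succ 0$ on $\partial\mathbb{D}$: a straightforward Schur complement of $\filt^*\M\filt$ evaluated along the closed-loop dynamics yields $B^\top Z B + R \succ 0$ in both cases. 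Uniqueness in each case is inherited from uniqueness of the Lagrangian deflating subspace with a prescribed, unit-circle-avoiding spectrum.

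The main obstacle will be the possible singularity of $A$, which makes the pencil genuinely generalized (rather than an ordinary eigenvalue problem) and introduces eigenvalues at $0$ and $\infty$. Handling those cleanly requires working in the Kronecker/Weierstrass framework for pencils and verifying that the $\infty$-eigenvalues of the pencil reappear as genuine $0$-eigenvalues of $A+BK_\mathrm{u}$, not as spurious infinite modes. Since the spectral separation away from $\partial\mathbb{D}$ is robust under this complication, and controllability prevents degenerate behaviour of the deflating subspace at these points, the construction and the uniqueness argument go through; this is also consistent with the precise target set $\{0\}\cup\{|\lambda|>1\}$ chosen in the lemma.
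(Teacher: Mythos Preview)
Your proposal is correct and follows essentially the same approach as the paper: both rely on the extended symplectic pencil attached to the Popov triple $(A,B;Q,R,S)$, use $\filt^*\M\filt\succ 0$ on $\partial\mathbb D$ to exclude unit-circle eigenvalues, and extract the two DARE solutions from the stable and ``anti-stable plus zero'' deflating subspaces. The paper's version is more compressed---it first invokes \cite[Theorem~1.1]{Clements2003} for unmixed sets to get solutions with $\lambda(A+BK)\subseteq\overline{\mathbb D}$ respectively $\subseteq\{0\}\cup\{|\lambda|\ge 1\}$, and then separately argues via the descriptor realization of $(\filt^*\M\filt)^{-1}$ that no closed-loop eigenvalue can lie on $\partial\mathbb D$---whereas you sketch the deflating-subspace construction underlying that theorem directly; the two are the same argument at different levels of abstraction.
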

\begin{proof}
    Note that both $\overline {\mathbb{D}}$ and $\overline {\Lambda_\mathrm{u}}\triangleq \{0\} \cup \{\lambda \in\mathbb{C} \mid |\lambda|\geq 1\}$ are unmixed sets (see~\cite{Clements2003} for a Definition of unmixed sets).
    Since $(A,B)$ is controllable and $\filt^*\M \filt \succ 0$ on $\partial \mathbb{D}$, we can thus apply \cite[Theorem 1.1]{Clements2003} to obtain unique solutions $Z_\mathrm{s}$ and $Z_\mathrm{u}$ which satisfy $\lambda(A+BK_\mathrm{s})\subseteq \overline {\mathbb{D}}$ and $\lambda(A+BK_\mathrm{u})\subseteq \overline{\Lambda_\mathrm{u}}$.
    To show that the eigenvalues are not on the unit circle $\partial \mathbb D$ we take a look at $\Pi\triangleq \filt^*\M \filt$.
    Due to $\Pi = \filt^*\M \filt \succ 0$ on $\partial \mathbb{D}$, $\Pi$ has no zeros on $\partial \mathbb{D}$ and hence $\Pi^{-1}$ has no poles on $\partial \mathbb{D}$.
    The descriptor form of $\Pi^{-1}$ is given by (see~\cite[Appendix A]{Hu2017} for a derivation)
    \begin{align*}
        \Pi^{-1}: \quad E_{\Pi^{-1}}\pi_{t+1}  &= \A{\Pi^{-1}} \pi_{t} + \begin{pmatrix}0\\0\\-I\end{pmatrix} u_{t} \\  y_{t} &= \begin{pmatrix}0&0&I\end{pmatrix}\pi_{t} 
    \end{align*}
    where 
    \begin{align*}
        E_{\Pi^{-1}} \triangleq \begin{pmatrix} I & 0 & 0 \\ 0 & - A^\top & 0 \\ 0 & -B^\top & 0\end{pmatrix}, \qquad 
        A_{\Pi^{-1}} \triangleq \begin{pmatrix} A & 0 & B \\  Q & -I &  S \\  S^\top & 0 & R\end{pmatrix}.
    \end{align*}
    As $\Pi^{-1}$ has no poles on $\partial \mathbb D$ we know that the matrix pencil $\lambda E_{\Pi^{-1}} - A_{\Pi^{-1}}$ has no generalized eigenvalues on $\partial \mathbb D$.
    It is well known (see, e.g., \cite{Ionescu1992}), that the eigenvalues of $A+BK$ for any $K=-(B^\top Z B + R)^{-1}(A^\top Z B + S)^\top$ with $Z\in \texttt{dare}(A,B,Q,R,S)$ are a subset of the generalized eigenvalues of this pencil $\lambda E_{\Pi^{-1}} - A_{\Pi^{-1}}$.
    Therefore, we conclude that neither $\lambda(A+BK_\mathrm{s})$ nor $\lambda(A+BK_\mathrm{u})$ have eigenvalues on $\partial \mathbb{D}$.
\end{proof}

\begin{lemma}\label{lem:dare_zero}
  For $\filt_i=\ssrep{\A i}{\BB i}{\CC i}{\DD i}\in\RHinf^{m_i\times k_i}$, $i\in\{1,2\}$, $\A 1\in\R^{n_1\times n_1}$, $\A2 \in \R^{n_2\times n_2}$ and $\M\in\R^{m_1 \times m_2}$, assume that for all $\zs \in \partial\mathbb{D}$ it holds that $\filt^*_1(\zs)\M \filt_2(\zs) = 0$ and that $(\A 1,\BB 1)$ and $(\A 2,\BB 2)$ are controllable. Then there exists a matrix $Z_{12}\in\R^{n_1\times n_2}$ such that
  \begin{align}\label{eq:lem1}
    \begin{pmatrix} \A1 ^\top Z_{12} \A2 - Z_{12} & \A1^\top Z_{12} \BB2 \\ \BB1^\top Z_{12} \A2 & \BB1^\top Z_{12} \BB2 \end{pmatrix} = \begin{pmatrix}\CC1 & \DD1\end{pmatrix}^\top \M \begin{pmatrix}\CC2 & \DD2\end{pmatrix}.
  \end{align}
\end{lemma}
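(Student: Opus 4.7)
The plan is to construct $Z_{12}$ explicitly as the sum of a convergent matrix series and then verify each of the four block identities in~\eqref{eq:lem1} by matching Laurent coefficients in the identity $\filt_1^*(\zs) M \filt_2(\zs) \equiv 0$, using controllability of $(A_1,B_1)$ and $(A_2,B_2)$ in the two off-diagonal blocks.

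Since $\filt_1,\filt_2\in\RHinf$, both $A_1$ and $A_2$ are Schur stable, so $\rho(A_1)\rho(A_2)<1$ and the series
\[
Z_{12} \triangleq -\sum_{k=0}^{\infty} (A_1^\top)^k\, C_1^\top M C_2\, A_2^k
\]
converges absolutely. A one-line telescoping argument gives $A_1^\top Z_{12} A_2 - Z_{12} = C_1^\top M C_2$, which is the $(1,1)$-block of~\eqref{eq:lem1}. This is the natural Stein-type ``guess'' for $Z_{12}$, and it is uniquely determined by the $(1,1)$-block since the map $X\mapsto X-A_1^\top X A_2$ is invertible under $\rho(A_1)\rho(A_2)<1$.

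Next, I would exploit that in the annulus $\rho(A_2)<|\zs|<1/\rho(A_1)$ (which contains $\partial\mathbb{D}$) both factors admit convergent expansions
\[
\filt_1^*(\zs) = D_1^\top + \sum_{k=0}^{\infty} \zs^{k+1}\, B_1^\top (A_1^\top)^k C_1^\top, \qquad \filt_2(\zs) = D_2 + \sum_{j=0}^{\infty} \zs^{-j-1}\, C_2 A_2^j B_2.
\]
The product $\filt_1^*(\zs) M \filt_2(\zs)$ vanishes on $\partial\mathbb{D}$ and, being rational, vanishes identically on this annulus. Collecting the coefficient of $\zs^0$ and using the definition of $Z_{12}$ yields $B_1^\top Z_{12} B_2 = D_1^\top M D_2$, the $(2,2)$-block. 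For the off-diagonal blocks, the coefficient of $\zs^m$ with $m\ge 1$ rearranges (after factoring the sum $\sum_{j\ge 0}(A_1^\top)^j C_1^\top M C_2 A_2^j B_2 = -Z_{12} B_2$) to
\[
B_1^\top (A_1^\top)^{m-1}\bigl[C_1^\top M D_2 - A_1^\top Z_{12} B_2\bigr] = 0,
\]
and symmetrically the coefficient of $\zs^{-m}$ gives
\[
\bigl[D_1^\top M C_2 - B_1^\top Z_{12} A_2\bigr]\, A_2^{m-1} B_2 = 0.
\]

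To finish, I would invoke Cayley--Hamilton: the first family of identities (for $m=1,\dots,n_1$) states that $C_1^\top M D_2 - A_1^\top Z_{12} B_2$ is annihilated from the left by every row of the controllability matrix $\bigl(B_1\ A_1 B_1\ \dots\ A_1^{n_1-1}B_1\bigr)^\top$, which has full row rank by controllability of $(A_1,B_1)$; this forces the $(1,2)$-block. An analogous argument with the controllability matrix of $(A_2,B_2)$ yields the $(2,1)$-block. The only delicate point in the argument is the index bookkeeping when rewriting each Laurent coefficient in terms of $Z_{12}$; once the (1,1)-block has fixed the correct definition of $Z_{12}$, all remaining identities follow by coefficient matching and the two controllability hypotheses.
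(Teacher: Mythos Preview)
Your proposal is correct and follows essentially the same approach as the paper: the same explicit series $Z_{12}=-\sum_{k\ge 0}(A_1^\top)^k C_1^\top M C_2 A_2^k$, the same Laurent-coefficient matching of $\filt_1^*(\zs)M\filt_2(\zs)\equiv 0$ on $\partial\mathbb{D}$, and the same use of the full-rank controllability matrices of $(A_1,B_1)$ and $(A_2,B_2)$ to extract the off-diagonal blocks. The only cosmetic difference is that the paper organizes the cross term $S(\zs)$ before comparing coefficients, whereas you factor directly; the resulting identities and the final argument are identical.
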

\begin{proof}
  The upper left block of~\eqref{eq:lem1} is the generalized Stein equation $\A1 ^\top Z_{12} \A2 - Z_{12} =\CC1^\top \M \CC2 $, which has a unique solution since both $A_1$ and $A_2$ are stable (compare \cite{Chu1987}).
  Due to $\lambda(\A i) \in\mathbb{D}$, it is straightforward to verify that this solution is given by 
  \begin{align}\label{eq:Z12}
    Z_{12} \triangleq -\sum_{k=0}^\infty \A1^{k\top} \CC1^\top \M \CC2 \A2^k.
  \end{align}
  Further, we observe for all $\zs\in\partial\mathbb D$ that
  \begin{align} 
    0&=\filt^*_1(\zs)\M \filt_2(\zs)= \underbrace{\BB1^\top (\zs I-\A1)^{-*} \CC1^\top \M \CC2 (\zs I-\A2)^{-1}\B2{}}_{\triangleq\,S(\zs)}  + \BB1^\top (\zs I-\A1)^{-*} \CC1^\top \M \DD2  + \DD1^\top\M \CC2 (\zs I-\A2)^{-1}\B2{} + \DD1^\top \M \DD2. \label{eq:tf_pi}
  \end{align}
  Due to $\lambda(\A i) \in\mathbb{D}$ for $i\in\{1,2\}$ and $|\zs| = 1$ the series
  \begin{align}\label{eq:(zI-A)_inv_series}
      (\zs I - \A 1)^{-*} = \sum_{n=1}^\infty \zs[n] \A 1 ^{(n-1) \top} \qquad \text{and} \qquad 
      (\zs I - \A 2)^{-1} = \sum_{n=1}^\infty \zs[-n] \A 2 ^{n-1} 
  \end{align}
  converge absolutely.
  Hence, we can rewrite the product $S(\zs)$ in~\eqref{eq:tf_pi} as
  \begin{align}\nonumber
    S(\zs)&= \sum_{k=0}^\infty \BB1^\top \A1^{k\top} \CC1^\top \M \CC2 \A2^k \BB2 + \sum_{n=1}^\infty \zs[n] \sum_{k=0}^\infty\BB1^\top \A1^{(k+n)\top} \CC1^\top \M \CC2 \A2^k \BB2 + \sum_{n=1}^\infty \zs[-n] \sum_{k=0}^\infty\BB1^\top \A1^{k\top} \CC1^\top \M \CC2 \A2^{k+n} \BB2\\
    &= -\BB1^\top Z_{12} \BB2 - \sum_{n=1}^\infty \left(\zs[n] \BB1^\top \A1^{n\top} Z_{12} \BB2 + \zs[-n] \BB1^\top Z_{12} \A2^n \BB2\right). \label{eq:Z12product}
  \end{align}
  As next step, we plug~\eqref{eq:(zI-A)_inv_series} and~\eqref{eq:Z12product} in~\eqref{eq:tf_pi}. Since~\eqref{eq:tf_pi} holds for all $\zs \in \partial \mathbb{D}$ and the polynomials $(\zs\mapsto \zs[n])_{n\in \mathbb Z}$ are linearly independent, we obtain by comparison of the coefficients the following equations
  \begin{subequations}
    \begin{align}
      0&=-\BB1^\top Z_{12} \BB2 + \DD1^\top M \DD2 \label{eq:lemeq1} \\
      0&=-\BB1^\top \A1^{n\top} Z_{12} \BB2 + \BB1^\top \A1^{(n-1)\top}\CC1 M \DD2 && \forall n\geq 1 \label{eq:lemeq2}\\
      0&=-\BB1^\top Z_{12} \A2^{n} \BB2 + \DD1 M \CC2 \A2^{n-1} \BB2 && \forall n\geq 1.\label{eq:lemeq3}
    \end{align}
  \end{subequations}
  Equation~\eqref{eq:lemeq1} verifies the lower right block of~\eqref{eq:lem1}.
  Given controllability of $(\A i,\BB i)$ for $i\in\{1,2\}$, the controllability matrix $S_i = \begin{pmatrix}\BB i & \A i \BB i & \dots & \A i^{n_i-1} \BB i \end{pmatrix}$
  has full rank, and thus a right inverse $ S_i S_i^\dagger = I$ exists. 
  Next, we stack~\eqref{eq:lemeq2} vertically for $n=1,\dots,n_1$ and obtain
  \begin{align*}
    0 = S_1^\top (-\A1^\top Z_{12} \BB2 + \CC1^\top M \DD2)
  \end{align*}
  which after left multiplication of $S_1^{\dagger \top}$ proves the upper right block of~\eqref{eq:lem1}.
  Finally, we stack~\eqref{eq:lemeq3} horizontally for $n=1,\dots,n_2$ and obtain
  \begin{align*}
    0 = (-\BB1^\top Z_{12} \A2 + \DD1^\top M \CC2)S_2
  \end{align*}
  which after right multiplication of $S_2^{\dagger}$ proves the lower left block of~\eqref{eq:lem1}.
\end{proof}

    \begin{lemma}\label{lem:convex_relaxation}
      Decompose $\hat X = \Zdone-\Zdtwo$ with $\Zdone\succeq 0$ and $\Zdtwo\succeq 0$. For any $\ZX\in\R^{\nfilthat \times m}$ and $\ZY\in \R^{\nfilthat\times m}$ we have
      \begin{align}\label{eq:convex_relaxation}
        \ZX^\top{\hat X} \ZX &\preceq
        \begin{pmatrix} \ZY \\ \ZX \end{pmatrix}^\top \begin{pmatrix} \Zdtwo & -\Zdtwo \\ -\Zdtwo & \Zdone\end{pmatrix} \begin{pmatrix} \ZY \\ \ZX \end{pmatrix} 
      \end{align}
      and the inequality holds with '$=$' if $\ZY = \ZX$ or $\Zdtwo =0$.
    \end{lemma}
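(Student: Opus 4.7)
My plan is to prove the inequality by direct algebraic manipulation, showing that the difference between the right-hand side and the left-hand side is manifestly positive semi-definite. The key observation is that the block matrix $\begin{pmatrix} \Zdtwo & -\Zdtwo \\ -\Zdtwo & \Zdone \end{pmatrix}$ decomposes nicely into a term producing $\ZX^\top \hat X \ZX$ plus a quadratic correction in $\ZY - \ZX$.

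First, I would expand the right-hand side of \eqref{eq:convex_relaxation}, yielding
\begin{align*}
    \ZY^\top \Zdtwo \ZY - \ZY^\top \Zdtwo \ZX - \ZX^\top \Zdtwo \ZY + \ZX^\top \Zdone \ZX.
\end{align*}
Next, using the decomposition $\hat X = \Zdone - \Zdtwo$, I would rewrite the left-hand side as $\ZX^\top \Zdone \ZX - \ZX^\top \Zdtwo \ZX$. Subtracting the left-hand side from the right-hand side then gives
\begin{align*}
    \ZY^\top \Zdtwo \ZY - \ZY^\top \Zdtwo \ZX - \ZX^\top \Zdtwo \ZY + \ZX^\top \Zdtwo \ZX = (\ZY - \ZX)^\top \Zdtwo (\ZY - \ZX),
\end{align*}
where the last equality is just completing the square. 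Since $\Zdtwo \succeq 0$ by construction (as $\Zdtwo = L_2^\top L_2$), this quadratic form is positive semi-definite, which establishes \eqref{eq:convex_relaxation}.

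For the equality cases, I would note that the correction term $(\ZY - \ZX)^\top \Zdtwo (\ZY - \ZX)$ vanishes whenever $\ZY = \ZX$ (the difference itself is zero) or whenever $\Zdtwo = 0$ (the multiplier is zero). There is no main obstacle here — the proof is entirely a one-line algebraic identity plus a sign observation, so the only subtlety is presenting the completion of the square cleanly.
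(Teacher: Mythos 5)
Your proof is correct and follows essentially the same argument as the paper: both reduce the claim to the observation that the right-hand side minus the left-hand side equals $(\ZY-\ZX)^\top \Zdtwo (\ZY-\ZX)\succeq 0$, with equality when $\ZY=\ZX$ or $\Zdtwo=0$. The only cosmetic difference is that the paper starts from $0\preceq(\ZX-\ZY)^\top \Zdtwo(\ZX-\ZY)$ and rearranges, while you complete the square on the difference; these are the same computation.
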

    \begin{proof}
      First, the decomposition $\hat X = \Zdone-\Zdtwo$ with $\Zdone\succeq 0$ and $\Zdtwo\succeq 0$ exists because ${\hat X}={\hat X}^\top$ and thus such a decomposition can be obtained using the eigenvalue decomposition. 
      Next, we use $\ZX^\top {\hat X} \ZX = \ZX^\top  \Zdone \ZX - \ZX^\top \Zdtwo \ZX$
      and
      \begin{align*}
        0&\preceq ( \ZX-\ZY)^\top \Zdtwo( \ZX-\ZY)=\ZX^\top \Zdtwo \ZX + \ZY^\top \Zdtwo \ZY - \ZX^\top \Zdtwo \ZY - \ZY^\top \Zdtwo \ZX
      \end{align*}
      to deduce~\eqref{eq:convex_relaxation}. This inequality is tight if $\ZY = \ZX$ or $\Zdtwo = 0$.
    \end{proof}
\end{document}